\newtheorem{theorem}{Theorem}
\newtheorem{lemma}{Lemma}
\newtheorem{definition}{Definition}
\newtheorem{remark}{Remark}
\newtheorem{assumption}{Assumption}
\newcommand{\pparagraph}{\noindent\textbf}
\newcommand{\sbt}{\mathrm{s.t.}}
\newcommand\eqdef{\mathrel{\overset{\makebox[0pt]{\mbox{\normalfont\tiny\sffamily def}}}{=}}}
\DeclareMathOperator*{\argmin}{arg\,min}
\DeclareMathOperator*{\argmax}{arg\,max}
\DeclareMathOperator{\sign}{sign}
\newenvironment{myarray}[2][1]
  {\array{#2}}
  {\endarray}
\begin{document}
\title{Fast Best Subset Selection: Coordinate Descent and Local Combinatorial Optimization Algorithms}

\date{July, 2019}

\author{Hussein Hazimeh\thanks{Operations Research Center. Email: {\texttt{hazimeh@mit.edu}}} }
\author{
Rahul Mazumder\thanks{MIT Sloan School of Management, Operations Research Center and MIT Center for Statistics.    Email: {\texttt{rahulmaz@mit.edu}} }}  
 
\affil{Massachusetts Institute of Technology}

\maketitle

\begin{abstract}
The $L_0$-regularized least squares problem (aka \emph{best subsets}) is central to sparse statistical learning and has attracted significant attention across the wider statistics, machine learning, and optimization communities. Recent work has shown that modern mixed integer optimization (MIO) solvers can be used to address small to moderate instances of this problem. In spite of the usefulness of $L_0$-based estimators and generic MIO solvers, there is a steep computational price to pay when compared to popular sparse learning algorithms (e.g., based on $L_1$ regularization). In this paper, we aim to push the frontiers of computation for a family of $L_0$-regularized problems with additional convex penalties. We propose a new hierarchy of necessary optimality conditions for these problems.
We develop fast algorithms, based on coordinate descent and local combinatorial optimization, that are guaranteed to converge to solutions satisfying these optimality conditions. From a statistical viewpoint, an interesting story emerges. When the signal strength is high, our combinatorial optimization algorithms have an edge in challenging statistical settings. When the signal is lower, pure $L_0$ benefits from additional convex regularization. We empirically demonstrate that our family of $L_0$-based estimators can outperform the state-of-the-art sparse learning algorithms in terms of a combination of prediction, estimation, and variable selection metrics under various regimes (e.g., different signal strengths, feature correlations, number of samples and features). Our new open-source sparse learning toolkit \texttt{L0Learn} (available on CRAN and Github) reaches up to a three-fold speedup (with $p$ up to $10^6$) when compared to competing toolkits such as~\texttt{glmnet} and \texttt{ncvreg}. 
\end{abstract}

\setstretch{1.35}

\section{Introduction}
The ongoing surge in high-dimensional data has drawn a lot of attention to sparse learning across several scientific communities. Indeed, sparsity can be very effective in high-dimensional settings as it leads to compact models that can be easier to interpret~\cite{buhlmann2011statistics,hastie2015statistical}. 
We consider the usual linear regression setup with $y = X \beta + \epsilon$, where $y \in \mathbb{R}^n$ is the response, $X \in \mathbb{R}^{n \times p}$ is the model matrix, $\beta\in \mathbb{R}^p$ is the vector of regression coefficients, and $\epsilon \in \mathbb{R}^n$ is a noise vector. 
We will assume that the columns of $X$ are standardized to have unit $L_2$-norm, and we ignore the intercept term to simplify the presentation. Our goal is to estimate $\beta$ under the assumption that it is sparse (i.e., has few nonzeros)---a common desiderata in the high-dimensional learning framework with $p \gg n$~\cite{buhlmann2011statistics,hastie2015statistical}.
A natural and direct way to 
obtain such a sparse estimator is by minimizing the 
least squares loss with an $L_0$-norm\footnote{The $L_0$-(pseudo) norm of $\beta$, i.e., $\|\beta\|_0$ counts the number of nonzeros in $\beta$.} penalty on $\beta$ \cite{miller2002subset}. Statistical (optimality) properties of this estimator have been extensively studied~\cite{greenshtein2006best,raskutti2011minimax,zhang2012general,zhang2014lower}.
{Many appealing alternative sparsity-inducing estimators have been proposed in the literature based on Lasso~\cite{tibshirani1996regression}, stepwise regression, continuous nonconvex regularization~\cite{hastie2015statistical}, etc---each with different operating characteristics. Our focus in this paper is on the algorithmic aspects of $L_0$-based estimators.} Recent work~\cite{onbestsubset, mazumder2017subset} has brought to light an intriguing phenomenon: in low signal-to-noise-ratio (SNR) regimes, the vanilla version of $L_0$ penalization suffers from overfitting. One way to mitigate this problem is by considering a larger family of estimators that includes (in addition to the $L_0$-penalty) an $L_1$ or $L_2$ norm regularization~\cite{mazumder2017subset}. In this paper, we consider the following extended family of 
$L_0$-based estimators, i.e., $L_0 L_q$ regularized regression problems of the form:
\begin{equation} \label{problem:intro}
\hat{\beta} \in \argmin_{\beta \in \mathbb{R}^p} ~~ \frac 12 \| y - X \beta \|_{2}^2 + \lambda_0 \| \beta \|_0 + \lambda_q \| \beta \|_{q}^{q},
\end{equation}
where $q \in  \{ 1,2 \}$ determines the type of the additional regularization (i.e., $L_1$ or $L_2$). The regularization parameter $\lambda_0$ controls the number of nonzeros (i.e., selected variables) in $\hat{\beta}$, and 
$\lambda_q$ controls the amount of shrinkage induced by $L_q$ regularization.
In many regimes (and under suitable choices of $\lambda_0,\lambda_q$), estimators from Problem \eqref{problem:intro} exhibit superior statistical properties (variable selection, prediction, and estimation) compared to computationally friendlier alternatives (e.g., based on Lasso or stepwise regression)---see for example,~\cite{bertsimas2015best,zhang2014lower,bertsimas2017sparse, mazumder2017subset,raskutti2011minimax,zhang2012general}. 
In spite of its potential usefulness, Problem~\eqref{problem:intro} is NP-hard~\cite{natarajan1995sparse} and poses computational challenges. Recent work by \cite{bertsimas2015best} has shown that high-quality solutions can be obtained for the cardinality-constrained least squares problem via mixed integer optimization (MIO), in the order of minutes when $p \sim 1000$. However, efficient solvers for the Lasso (e.g., \texttt{glmnet} \cite{glmnet}) can address much larger problems within a second. Our goal is to bridge this gap in computation time by developing fast solvers that can obtain high-quality (approximate) solutions to Problem \eqref{problem:intro} for large and challenging instances (e.g., $p \sim 10^6$ and small $n$). This will allow performing systematic large-scale experiments to gain a deeper understanding of the statistical properties of $L_0$-based estimators and their differences with the state of the art. Such an understanding is currently limited due to computational considerations.

Our approach is based on two complementary algorithms: \textbf{(i)} cyclic coordinate descent (CD) for quickly finding solutions to Problem \eqref{problem:intro}, and \textbf{(ii)} novel combinatorial search algorithms, which help improve solutions from (i). Particularly, the solutions obtained by (ii) cannot be improved by making small changes to their support. 
We establish novel convergence guarantees for our algorithms. 
We also address delicate implementation aspects of our algorithms and provide \texttt{L0Learn}: an open-source and efficient R/C++ toolkit available on CRAN at \url{https://CRAN.R-project.org/package=L0Learn} and on Github at \url{https://github.com/hazimehh/L0Learn}.

\pparagraph{Current Landscape and Related Work:} 
Our main focus is on the computational aspects of Problem~\eqref{problem:intro}.
We contextualize our contribution within the rather large and impressive literature on algorithms for sparse regression---see for e.g.,~\cite{BeckSparsityConstrained, bertsimas2015best} for an overview. We broadly categorize the main existing algorithms into two categories:
\begin{itemize}[leftmargin=*]
    \item \textbf{Proxy Algorithms and Heuristics}: Proxy algorithms use a proxy/surrogate to the $L_0$ norm (e.g., $L_1$ norm or non-convex penalties such as MCP and SCAD \cite{tibshirani1996regression,zhang2010nearly, fan2001variable}). Fast solvers have been devised for these proxies (e.g., \cite{glmnet, ncvreg,sparsenet})---they typically result in good solutions (though not optimal for non-convex problems). Another approach is to use heuristics to find approximate solutions to Problem~\eqref{problem:intro} with $\lambda_{q}=0$. Popular methods include: (greedy) stepwise regression~\cite{hastie2015statistical}, iterative hard thresholding (IHT)~\cite{blumensath2009-acha,bertsimas2015best},
   greedy CD~\cite{BeckSparsityConstrained} and randomized CD~\cite{randomCDL0}.
    
    \item \textbf{Exact Algorithms}: These approaches \emph{exactly} solve an optimization problem involving the $L_0$ norm. \cite{bertsimas2015best} use MIO to compute near-optimal solutions for least squares with a cardinality constraint for $p\approx 1000$. \cite{bertsimas2017sparse} propose a cutting plane method for a similar problem, which works well with mild sample correlations and a sufficiently large $n$. \cite{mazumder2015discrete} use mixed integer linear optimization for solving an $L_0$-variant of the Dantzig Selector.
\end{itemize}
In spite of their usefulness, exact algorithms are usually accompanied by a steep increase in computational cost,
placing them at a disadvantage compared to faster alternatives~\cite{onbestsubset}.
To this end, our approach borrows the computational strengths of the proxy algorithms while maintaining a notion of ``local combinatorial exactness''---i.e., making small perturbations to the support of the solution cannot improve its objective. Similar to the proxy algorithms, we employ cyclic CD as one of our main workhorses. We note that standard results on the convergence of cyclic CD \cite{Tseng01} do not apply for our problem, and one of our contributions is rigorously establishing its convergence. 
A novelty of our work is the use of local combinatorial search to obtain high quality solutions. Our attention to the delicate computational aspects make our proposed algorithms comparable (and at times faster) in speed to the fastest proxy algorithms (e.g., \texttt{glmnet} and \texttt{ncvreg}).

\pparagraph{Contributions:}
We summarize our key contributions below: 

\begin{enumerate}[leftmargin=*]
\item We introduce a new family of necessary optimality conditions for Problem~\eqref{problem:intro}, leading to a hierarchy of classes of local minima. Classes higher up in the hierarchy are of better quality.

\item We propose new algorithms based on cyclic CD and local combinatorial search to obtain these local minima. We present a novel convergence analysis of the algorithms. We formulate the local combinatorial search problems as structured MIO problems and develop efficient solvers for special cases. Our local search algorithms can run in seconds to minutes when $p$ is in the order of $10^3$ to $10^6$.

\item Our open-source R/C++ toolkit, \texttt{L0Learn}, often runs faster than state-of-the-art toolkits (e.g., \texttt{glmnet} and \texttt{ncvreg}). Typical speedups (of a version of our algorithm) range between $25\%$ to $300\%$ for $p$ up to $10^6$ and $n \approx 10^3$.

\item Experiments on real and synthetic datasets suggest that our algorithms do a good job in optimizing Problem~\eqref{problem:intro},
with solutions often found to be similar to that of exact MIO methods, but with significantly shorter run times. 
In terms of statistical performance, our algorithms are found to be superior in terms of a combination of metrics (estimation, prediction, and variable selection), compared to state-of-the-art methods for sparse learning. 
\end{enumerate}

\subsection{Notation}
We use the following notation throughout paper. We denote the set $\{1,2,\dots,p\}$ by $[p]$, the canonical basis for $\mathbb{R}^p$ by $e_1, \dots, e_p$, and the standard Euclidean norm by $\| \cdot \|$. Similarly, $\|\cdot\|_{q}$ denotes the standard $L_{q}$-norm with $q \in \{0,1,2,\infty\}.$
{For any $\theta \in \mathbb{R}^p$ and $i \in [p]$, we define $\widetilde{\theta}_i = \langle  y - \sum\nolimits_{j \neq i} X_j \theta_j, X_i \rangle $. For any vector $u \in \mathbb{R}^{k}$, we define $\sign(u)\in \mathbb{R}^{k}$ as a vector whose $i$th component is given by $\sign(u_i) = u_i/|u_i|$ if $u_i \neq 0$ and $\sign(u_i) \in [-1,1]$ if $u_i = 0$.}
We denote the support of $\beta \in \mathbb{R}^p$ by 
$\text{Supp}(\beta)=\{i: \beta_{i} \neq 0, i \in [p]\}$. For $S \subseteq [p]$, we let $\beta_S \in \mathbb{R}^{|S|}$ denote the subvector of $\beta$ with indices in $S$. Similarly, $X_S$ denotes the submatrix of $X$ with column indices $S$. 
We use $U^S$ to denote the $p \times p$ matrix whose $i$th column is $e_i$ if $i \in S$ and zero otherwise. Thus, $(U^S \beta)_i = \beta_i$ if $i \in S$ and $(U^S \beta)_i = 0$ if $i \notin S$.

Proofs of lemmas and theorems are included in the supplementary material.

\section{Necessary Optimality Conditions} \label{sec:optimalityconditions}

We present a family of necessary optimality conditions for Problem~\eqref{problem:intro}, leading to different classes of local minima\footnote{As we argue in Section~\ref{section:stationarysolutions} these also satisfy the usual notion of a local minimizer in nonlinear optimization.}.
Our methodology is centered around the following problem:
\begin{equation}
\label{problem:main}
\min_{\beta \in \mathbb{R}^p}~~F(\beta) \eqdef f(\beta) + \lambda_0 \|\beta\|_0,
\end{equation}
where $f(\beta)$ is the least squares term with additional convex regularizers:
\begin{align}
	f(\beta) \eqdef \frac{1}{2} \| y - X \beta \|^{2}  + \lambda_1 \|\beta\|_1 + \lambda_2 \|\beta\|^{2}_2.
\end{align}
We will use the shorthands: (i) $(L_0L_2)$ to denote Problem~\eqref{problem:main} with $\lambda_1=0$ and $\lambda_2>0$; (ii) $(L_0L_1)$ to denote Problem~\eqref{problem:main} with $\lambda_1>0$ and $\lambda_2 = 0$; and (iii) $(L_0)$ to denote Problem~\eqref{problem:main} with $\lambda_1=\lambda_2=0$. Unless specified otherwise, we will assume that $\lambda_0>0$.

Next, we present an overview of the different classes of local minima (minima for short) that we study---this is then followed by a more formal treatment.
\begin{itemize}[leftmargin=*]
    \item \textbf{Stationary Solutions}: Solutions where the directional derivative is non-negative in any direction.
    \item \textbf{Coordinate-wise (CW) Minima}: Solutions where optimizing w.r.t. one coordinate at a time (while keeping others fixed) cannot improve the objective.
    \item \textbf{PSI($k$) Minima}: These are stationary solutions where (i) removing any subset (of size at most $k$) from the support, (ii) adding any subset (of size at most $k$) to the support, and (iii) optimizing over the newly added subset, cannot improve the objective.
    \item \textbf{FSI($k$) Minima}: These are similar to PSI($k$) minima except that in step (iii), if we optimize over the whole new support, the objective does not improve.
    \item \textbf{IHT Minima}: These are fixed points arising from the popular IHT algorithm.
\end{itemize}
We also establish the following hierarchy among the different classes introduced above:
\begin{equation}\label{hier-equation-display}
\textsc{Hierarchy:}~~~~~~~~~\begin{aligned} \text{FSI$(k)$} \\
\text{Minima} \end{aligned} ~\subseteq~ \begin{aligned} \text{PSI$(k)$} \\
\text{Minima} \end{aligned} ~\subseteq~ \begin{aligned} \text{CW} \\ 
\text{Minima} \end{aligned} ~\subseteq~ \begin{aligned}
\text{IHT} \\ 
\text{Minima} \end{aligned} ~\subseteq~ \begin{aligned} \text{Stationary} \\
\text{Solutions} \end{aligned}
~~~~~~~~~~~~~~\end{equation}
In the above hierarchy, stationary solutions are the weakest. As we move from the right to left, the classes become smaller (i.e, satisfy more restrictive necessary optimality conditions) until reaching the most restrictive class: FSI$(k)$ minima. Moreover, for sufficiently large $k$, FSI($k)$ and PSI($k$) minima coincide with the class of global minimizers of Problem \eqref{problem:main}. We now present a formal treatment of the classes of minima introduced above.

\subsection{Stationary Solutions} \label{section:stationarysolutions}
For a function $g: \mathbb{R}^p \to \mathbb{R}$ and a vector $d \in \mathbb{R}^p$, we denote the (lower) directional derivative~\cite{bertsekas2016nonlinear} of $g$ at $\beta$ in the direction $d$ by:
$g'(\beta;d) \eqdef \liminf_{\alpha \downarrow 0} (g(\beta + \alpha d) - g(\beta))/\alpha$.
Directional derivatives play an important role in describing necessary optimality conditions for continuous optimization problems \cite{bertsekas2016nonlinear}. 
Although $F(\beta)$ is not continuous, it is insightful to use the notion of a directional derivative to arrive at a basic definition of stationarity for Problem~\eqref{problem:main}.
\begin{definition}\label{stat-soln-def}{(Stationary Solution)}
	A vector $\beta^{*} \in \mathbb{R}^p$ is a stationary solution for Problem (\ref{problem:main}) if for every direction vector $d \in \mathbb{R}^p$, the lower directional derivative satisfies: $F'(\beta^{*};d) \geq 0$.
\end{definition}
\textcolor{black}{Let $\nabla f(\beta) \in \mathbb{R}^{p}$ denote a subgradient of $f(\beta)$. If $\beta$ has a support $S$, the notation $\nabla_S f(\beta)$ refers to the components of $\nabla f(\beta)$ restricted to $S$.} 
Lemma~\ref{lemma:stationarity} gives an alternate characterization of Definition~\ref{stat-soln-def}.
\begin{lemma}\label{lemma:stationarity}
Let $\beta^{*} \in \mathbb{R}^{p}$ with support $S$. $\beta^{*}$ is a stationary solution for Problem~\eqref{problem:main} iff $\nabla_S f(\beta^{*}) = 0$.
\end{lemma}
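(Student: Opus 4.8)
My plan is to prove both implications by exploiting two structural facts: (i) along any direction the penalty $\|\cdot\|_0$ is either locally constant or jumps upward by a positive multiple of $\lambda_0$, and (ii) the convex part $f$ is actually differentiable in every coordinate belonging to $S$, since the only nonsmooth piece $\lambda_1 \|\beta\|_1$ is smooth at any nonzero coordinate. The crucial preliminary observation is that, for a direction $d$, if we write $d = d^{(1)} + d^{(2)}$ with $d^{(1)}$ supported on $S$ and $d^{(2)}$ supported on $S^c$, then for all sufficiently small $\alpha > 0$ we have $\|\beta^* + \alpha d\|_0 = |S| + \|d^{(2)}\|_0$: coordinates in $S$ stay nonzero, coordinates outside $S$ with $d_i \neq 0$ switch on, and the rest stay zero.

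For the forward direction, I would assume $\beta^*$ is stationary and test the condition against the coordinate directions $d = \pm e_i$ for $i \in S$. Because these moves keep the support unchanged for small $\alpha$, the $L_0$ term contributes nothing and $F'(\beta^*; \pm e_i) = f'(\beta^*; \pm e_i)$. Since $f$ is differentiable along $e_i$ at $\beta^*$ (the coordinate $\beta^*_i$ is nonzero), these two directional derivatives equal $\pm \nabla_i f(\beta^*)$. Stationarity forces both to be nonnegative, hence $\nabla_i f(\beta^*) = 0$; ranging over $i \in S$ yields $\nabla_S f(\beta^*) = 0$.

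For the converse, I would assume $\nabla_S f(\beta^*) = 0$ and take an arbitrary direction $d$. If $d^{(2)} \neq 0$, the jump in the $L_0$ term contributes $\lambda_0 \|d^{(2)}\|_0 > 0$ to $F(\beta^* + \alpha d) - F(\beta^*)$ (using $\lambda_0 > 0$); dividing by $\alpha$ and sending $\alpha \downarrow 0$ drives this to $+\infty$ while the convex part stays finite, so $F'(\beta^*; d) = +\infty \geq 0$. If $d^{(2)} = 0$, the support is locally frozen and $F'(\beta^*; d) = f'(\beta^*; d)$. By convexity, $f'(\beta^*; d) = \max_{g \in \partial f(\beta^*)} \langle g, d \rangle$, and since $d$ lives on $S$ and $f$ is differentiable in the coordinates of $S$, every subgradient agrees there, giving $f'(\beta^*; d) = \langle \nabla_S f(\beta^*), d_S \rangle = 0$. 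Thus $F'(\beta^*; d) \geq 0$ in all cases.

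The main obstacle, and the step requiring the most care, is the bookkeeping around the discontinuity of $\|\cdot\|_0$: I must justify that the support is exactly locally constant on $S$ and jumps by precisely $\|d^{(2)}\|_0$ off $S$, and then separate the two regimes (a finite convex directional derivative versus a $+\infty$ blow-up) cleanly. The convex-analysis identity $f'(\beta^*; d) = \max_{g \in \partial f(\beta^*)} \langle g, d \rangle$, together with the fact that $\partial f(\beta^*)$ is single-valued in the coordinates of $S$, is routine but must be invoked to reduce the on-support directional derivative to the single gradient condition $\nabla_S f(\beta^*) = 0$.
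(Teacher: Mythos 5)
Your proof is correct and follows essentially the same route as the paper's: both arguments rest on the observation that the $\|\cdot\|_0$ term is locally constant along directions supported on $S$ and forces $F'(\beta^*;d)=+\infty$ whenever $d_{S^c}\neq 0$, while $f$ restricted to the coordinates in $S$ is differentiable, so that the stationarity condition reduces exactly to $\nabla_S f(\beta^*)=0$. Your use of coordinate directions $\pm e_i$ for the forward implication and the subdifferential max-formula for the converse are only cosmetic variations of the paper's explicit computation of $F'(\beta^*;d)$.
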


Note that $\nabla_S f(\beta^{*}) = 0$ can be explicitly written as:
\begin{align}
\label{eq:stationary}
	\beta^{*}_{i} = \sign(\widetilde{\beta}^{*}_i) \frac{|\widetilde{\beta}^{*}_i| - \lambda_1}{1+2\lambda_2} \ ~~\text{ and }~~ \ 
	|\widetilde{\beta}^{*}_i| > \lambda_1~~~~\text{for all $i \in \text{Supp}(\beta^{*})$},
\end{align}
where we recall that $\widetilde{\beta}^{*}_i \eqdef \langle   y - \sum\nolimits_{j \neq i} X_j \beta^{*}_j, X_i \rangle$. Characterization (\ref{eq:stationary}) suggests that a stationary solution $\beta^{*}$ does not depend on $\lambda_0$ and does not impose any restriction on the coordinates outside the support. {Moreover, it can be readily verified that a stationary solution to Problem \eqref{problem:main} satisfies the traditional definition of a local minimum in nonlinear optimization, i.e., if $\beta^{*}$ is a stationary solution, then there exists a $\delta > 0$ such that $F(\beta^{*}) \leq F(\beta)$ for any $\beta$ satisfying $\| \beta - \beta^{*} \| < \delta$.}

\subsection{Coordinate-wise (CW) Minima} \label{section:CWminima}
We consider a 
class of stationary solutions inspired by 
coordinate-wise algorithms~\cite{BeckSparsityConstrained,bertsekas2016nonlinear,Tseng01}.

\begin{definition}{(CW Minimum)}
	\label{def:cwminima}
	A vector $\beta^{*} \in \mathbb{R}^p$ is a CW minimum for Problem~\eqref{problem:main} if for every $i \in [p]$, $\beta^{*}_i$ is a minimizer of $F(\beta^{*})$ w.r.t. the $i$th coordinate (with others held fixed), i.e., 
	\begin{align}
	\label{eq:cwmindef1}
	\beta^{*}_i \in \argmin_{\beta_i \in \mathbb{R}} F(\beta^{*}_1,\dots, \beta^{*}_{i-1}, \beta_{i}, \beta^{*}_{i+1}, \dots, \beta^{*}_{p} ).
	\end{align}
\end{definition}
As every column of $X$ has unit $L_{2}$-norm,
$\beta^{*}_i$ is given by the following thresholding operator $\widetilde{T}$:
\begin{align}
\label{eq:thresholdingmap}
\widetilde{T}(\widetilde{\beta}^{*}_i, \lambda_0, \lambda_1, \lambda_2) \eqdef \argmin_{\beta_i \in \mathbb{R}} \left \{ \frac{1 + 2\lambda_2}{2} \Big(\beta_i - \frac{\widetilde{\beta}^{*}_i}{1+2\lambda_2} \Big)^{2} + \lambda_1 |\beta_i| + \lambda_0 \mathds{1}[\beta_i \neq 0]  \right \},
\end{align}
where $\{\lambda_{i}\}_{0}^{2}$ and $\widetilde{\beta}^{*}_{i}$ are fixed, and the set $\widetilde{T}(\widetilde{\beta}^{*}_i, \lambda_0, \lambda_1, \lambda_2)$ is described below.

\begin{lemma}\label{lemma:univariate}
	Let $\widetilde{T}$ be the thresholding operator defined in (\ref{eq:thresholdingmap}). Then,
	\begin{align*}
			\widetilde{T}(\widetilde{\beta}^{*}_i, \lambda_0, \lambda_1, \lambda_2) = \begin{cases}
				\Big \{ \sign(\widetilde{\beta}^{*}_i) \frac{|\widetilde{\beta}^{*}_i| - \lambda_1}{1+2\lambda_2} \Big \} & \text{ if } \frac{|\widetilde{\beta}^{*}_i| - \lambda_1}{1+2\lambda_2} > \sqrt{2\lambda_0 \over 1+2 \lambda_{2}} \\
				\{ 0 \} & \text{ if } \frac{|\widetilde{\beta}^{*}_i| - \lambda_1}{1+2\lambda_2} < \sqrt{2\lambda_0 \over 1+2 \lambda_{2}} \\
				\Big \{ 0 , \sign(\widetilde{\beta}^{*}_i) \frac{|\widetilde{\beta}^{*}_i| - \lambda_1}{1+2\lambda_2} \Big \} & \text{ if } \frac{|\widetilde{\beta}^{*}_i| - \lambda_1}{1+2\lambda_2} = \sqrt{2\lambda_0 \over 1+2 \lambda_{2}}. 
				\end{cases}
	\end{align*}
\end{lemma}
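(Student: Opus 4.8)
The plan is to analyze the univariate objective $g(\beta_i) \eqdef \frac{1+2\lambda_2}{2}\big(\beta_i - \frac{\widetilde{\beta}^{*}_i}{1+2\lambda_2}\big)^2 + \lambda_1|\beta_i| + \lambda_0 \mathds{1}[\beta_i \neq 0]$ by splitting $\mathbb{R}$ into the point $\{0\}$ and the open set $\{\beta_i \neq 0\}$, on which the discontinuous term $\lambda_0\mathds{1}[\beta_i\neq0]$ is constant. First I would record the value at the origin, $g(0) = \frac{(\widetilde{\beta}^{*}_i)^2}{2(1+2\lambda_2)}$. On $\{\beta_i \neq 0\}$ the objective equals $\lambda_0 + h(\beta_i)$, where $h(\beta_i) \eqdef \frac{1+2\lambda_2}{2}(\beta_i - c)^2 + \lambda_1|\beta_i|$ with $c \eqdef \widetilde{\beta}^{*}_i/(1+2\lambda_2)$: a strictly convex function whose unconstrained global minimizer over $\mathbb{R}$ is the soft-thresholding value $\beta^{\mathrm{s}} \eqdef \sign(\widetilde{\beta}^{*}_i)\frac{(|\widetilde{\beta}^{*}_i|-\lambda_1)_+}{1+2\lambda_2}$, read off from the stationarity condition $(1+2\lambda_2)(\beta_i-c) + \lambda_1\partial|\beta_i| \ni 0$.

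Next I would observe that $g$ is lower semicontinuous at $0$ (since $g(0)$ lies strictly below the limit $g(0)+\lambda_0$ approached from nonzero arguments, using $\lambda_0 > 0$) and coercive, so the minimum is attained and it suffices to compare the origin against the best nonzero candidate. When $|\widetilde{\beta}^{*}_i| \le \lambda_1$, the minimizer of $h$ is $\beta^{\mathrm{s}} = 0$, so on $\{\beta_i \neq 0\}$ the infimum of $h$ is not attained and every nonzero point is strictly worse than $0$; hence $\{0\}$ is the unique minimizer, consistent with the stated formula because then $\frac{|\widetilde{\beta}^{*}_i|-\lambda_1}{1+2\lambda_2}\le 0 < \sqrt{2\lambda_0/(1+2\lambda_2)}$. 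When $|\widetilde{\beta}^{*}_i| > \lambda_1$, the nonzero minimizer is $\beta^{\mathrm{s}} \neq 0$ and a direct substitution gives $h(\beta^{\mathrm{s}}) = \frac{2\lambda_1|\widetilde{\beta}^{*}_i| - \lambda_1^2}{2(1+2\lambda_2)}$.

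The crux is then the comparison $g(\beta^{\mathrm{s}}) = h(\beta^{\mathrm{s}}) + \lambda_0$ versus $g(0)$. Subtracting yields the clean identity $g(0) - g(\beta^{\mathrm{s}}) = \frac{(|\widetilde{\beta}^{*}_i|-\lambda_1)^2}{2(1+2\lambda_2)} - \lambda_0$, so $\beta^{\mathrm{s}}$ beats the origin precisely when $(|\widetilde{\beta}^{*}_i|-\lambda_1)^2 > 2\lambda_0(1+2\lambda_2)$, which (using $|\widetilde{\beta}^{*}_i|-\lambda_1 > 0$ in this regime) rearranges to the threshold $\frac{|\widetilde{\beta}^{*}_i|-\lambda_1}{1+2\lambda_2} > \sqrt{2\lambda_0/(1+2\lambda_2)}$ of the first case; equality produces the two-element minimizing set of the third case, and the reverse inequality produces $\{0\}$. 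I expect the main obstacle to be the careful bookkeeping around the discontinuity: establishing that the minimum is attained despite $g$ being discontinuous, and correctly handling the degenerate regime $|\widetilde{\beta}^{*}_i| \le \lambda_1$ where the nonzero branch has no minimizer. The reduction to a scalar comparison and the sign symmetry (treating $\widetilde{\beta}^{*}_i$ and $-\widetilde{\beta}^{*}_i$ identically through $|\widetilde{\beta}^{*}_i|$) are otherwise routine.
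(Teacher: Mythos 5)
Your proposal is correct and follows essentially the same route as the paper's proof: split the problem into the zero point and the nonzero branch, identify the soft-thresholding value $\sign(\widetilde{\beta}^{*}_i)\frac{|\widetilde{\beta}^{*}_i|-\lambda_1}{1+2\lambda_2}$ as the best nonzero candidate when $|\widetilde{\beta}^{*}_i| > \lambda_1$, and show that comparing its objective value against $g(0)$ is equivalent to the stated threshold condition. Your version simply makes explicit the computations the paper leaves implicit (the identity $g(0) - g(\beta^{\mathrm{s}}) = \frac{(|\widetilde{\beta}^{*}_i|-\lambda_1)^2}{2(1+2\lambda_2)} - \lambda_0$ and the degenerate regime $|\widetilde{\beta}^{*}_i| \le \lambda_1$), which is a welcome but not substantively different elaboration.
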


Lemma~\ref{lemma:CW} presents an alternative characterization of CW minima. 

\begin{lemma}
\label{lemma:CW}
A vector $\beta^{*} \in \mathbb{R}^{p}$ is a CW minimum iff 
\begin{equation} \label{lemma:CW-eqn}
\begin{myarray}[1.5]{l c r}
		&\beta^{*}_i = \sign(\widetilde{\beta}^{*}_i) \frac{|\widetilde{\beta}^{*}_i| - \lambda_1}{1+2\lambda_2} \ \text{ and } \ 
			|\beta^{*}_i| \geq \sqrt{\frac{2 \lambda_0}{1+2\lambda_2}}, & \text{for every $i\in \text{Supp}(\beta^{*})$}  \\
	\text{and~~~~}    &\frac{|\widetilde{\beta}^{*}_i| - \lambda_1}{1+2\lambda_2} \leq \sqrt{\frac{2 \lambda_0}{1+2\lambda_2}}& \text{for every $i\notin \text{Supp}(\beta^{*})$}. 		
	\end{myarray}
    \end{equation}
	\end{lemma}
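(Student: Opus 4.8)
The plan is to reduce the statement to a pointwise application of the univariate characterization in Lemma~\ref{lemma:univariate}. First I would record the observation already made just before Lemma~\ref{lemma:univariate}: since each column of $X$ has unit norm, for fixed $\{\beta^{*}_j\}_{j\ne i}$ the map $\beta_i \mapsto F(\beta^{*}_1,\dots,\beta_i,\dots,\beta^{*}_p)$ differs from the objective in \eqref{eq:thresholdingmap} only by an additive constant; this is seen by expanding $\frac{1}{2}\|y-\sum_{j\ne i}X_j\beta^{*}_j - X_i\beta_i\|^2$ and using $\|X_i\|=1$ together with $\langle y-\sum_{j\ne i}X_j\beta^{*}_j,\, X_i\rangle = \widetilde{\beta}^{*}_i$. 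Hence Definition~\ref{def:cwminima} is equivalent to the membership condition $\beta^{*}_i \in \widetilde{T}(\widetilde{\beta}^{*}_i,\lambda_0,\lambda_1,\lambda_2)$ holding for every $i\in[p]$. The whole lemma then follows by unpacking this membership with the closed form of $\widetilde{T}$ from Lemma~\ref{lemma:univariate}, splitting on whether $i$ lies in the support.

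Writing $g_i \eqdef \frac{|\widetilde{\beta}^{*}_i|-\lambda_1}{1+2\lambda_2}$ and $\tau \eqdef \sqrt{2\lambda_0/(1+2\lambda_2)}$, I would argue the two cases as follows. For $i\in\text{Supp}(\beta^{*})$ we have $\beta^{*}_i\neq 0$, so $\beta^{*}_i\in\widetilde{T}$ forces $\widetilde{T}$ to contain a nonzero element; by Lemma~\ref{lemma:univariate} this happens exactly in the branches $g_i>\tau$ and $g_i=\tau$, in which the nonzero member equals $\sign(\widetilde{\beta}^{*}_i)g_i$. Thus $\beta^{*}_i=\sign(\widetilde{\beta}^{*}_i)g_i$ and $g_i\ge\tau$; since $\tau>0$ (as $\lambda_0>0$) this gives $g_i>0$, hence $|\beta^{*}_i|=g_i\ge\tau$, which is the first line of \eqref{lemma:CW-eqn}. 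The converse is immediate: those two conditions place $g_i\ge\tau$ and identify $\beta^{*}_i$ with the nonzero element of $\widetilde{T}$. For $i\notin\text{Supp}(\beta^{*})$ we have $\beta^{*}_i=0$, and $0\in\widetilde{T}$ holds precisely in the branches $g_i<\tau$ and $g_i=\tau$, i.e.\ iff $g_i\le\tau$; this is the second line of \eqref{lemma:CW-eqn}. Conjoining the two cases over all $i$ yields the claimed equivalence.

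The delicate points are confined to the boundary branch $g_i=\tau$, where Lemma~\ref{lemma:univariate} returns the two-element set $\{0,\sign(\widetilde{\beta}^{*}_i)g_i\}$: this is exactly why the support inequality is the non-strict $|\beta^{*}_i|\ge\tau$ and the off-support inequality the non-strict $g_i\le\tau$, so that a coordinate sitting on the threshold is compatible with being either in or out of the support. I would also make the sign/branch bookkeeping airtight, verifying that the nonzero value returned by $\widetilde{T}$ really is $\sign(\widetilde{\beta}^{*}_i)g_i$ with $g_i>0$ (equivalently $|\widetilde{\beta}^{*}_i|>\lambda_1$), so that $|\beta^{*}_i|=g_i$ and the $|\beta^{*}_i|\ge\tau$ form of the condition genuinely coincides with $g_i\ge\tau$; this is where one must avoid the spurious sign-reversed root. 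Everything else is routine substitution, so once the reduction to the univariate problem and the case split are in place, no genuinely hard step remains.
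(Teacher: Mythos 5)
Your proof is correct and follows exactly the route the paper intends: the paper gives no separate proof of Lemma~\ref{lemma:CW}, treating it as an immediate consequence of the reduction (stated just before Lemma~\ref{lemma:univariate}) of the coordinate-wise minimization in Definition~\ref{def:cwminima} to the thresholding operator $\widetilde{T}$ of \eqref{eq:thresholdingmap}, combined with the case analysis of Lemma~\ref{lemma:univariate}. Your write-up simply makes this explicit---including the boundary branch where $\widetilde{T}$ is two-valued and the exclusion of the sign-reversed root---so there is nothing missing relative to the paper's argument.
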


Comparing \eqref{lemma:CW-eqn} to (\ref{eq:stationary}), we see that the class of stationary solutions contains the class of CW minima, and the containment is strict (in general).

\subsection{Swap Inescapable Minima}\label{section:swapinescapable}
We now introduce stationary solutions that further refine the class of CW minima, using notions from local combinatorial optimization. Given a CW minimum $\beta^{*}$, one might obtain a better solution by the following ``swapping'' operation: we set some nonzeros in $\beta^*$ to zero and allow some entries from outside the support of $\beta^*$ to be nonzero. Then, we 
optimize over the new support using one of the following rules: 
(a) {\em Partial Optimization}: we optimize only w.r.t. the coordinates added from outside the support or (b) {\em Full Optimization}: we optimize w.r.t. all the coordinates in the new support.
This may lead to a solution with a smaller objective value.
If the current solution cannot be improved using the swapping operation, we call $\beta^*$ a \emph{Swap Inescapable} minimum. Our proposal is inspired by the work of~\cite{BeckSparsityConstrained} for the cardinality-constrained problem---where, the authors suggest a special case of partial swap optimization involving {\em one} coordinate.
However, the problem studied here is different: we consider $L_0$-penalization (versus an $L_0$ constraint) and a non-smooth $f(\beta)$. Furthermore, we allow multiple coordinates to be swapped at once via 
partial or full optimization. 

\pparagraph{Partial Swap Inescapable (PSI) Minima:} 
We formally define {\em Partial Swap Inescapable} (PSI) minima, arising 
from the partial optimization step outlined above.
Recall that for any $L\subseteq [p]$, the $i$th coordinate of the vector $(U^L \beta)$ is $\beta_i$ if $i \in L$ and zero otherwise.

\begin{definition}{(PSI Minima)}
	\label{def:psi}
	Let $k$ be a positive integer. A vector $\beta^{*}$ with support $S$ is a PSI minimum of order $k$, denoted by PSI$(k)$, if it is a stationary solution and for every $S_1 \subseteq S$, $S_2 \subseteq  S^c$, with $|S_1| \leq k$, $|S_2| \leq k$, the following holds
	\begin{equation*}
	F(\beta^{*}) \leq \min\limits_{\beta_{S_2}} F(\beta^{*} - U^{S_1} \beta^{*} + U^{S_2} \beta).
	\end{equation*} 
\end{definition}
The following lemma characterizes PSI minima of order one, aka PSI($1$).
\begin{lemma} \label{lemma:psi1}
	A vector $\beta^{*} \in \mathbb{R}^{p}$ is a PSI($1$) minimum iff
	\begin{equation*}
    \begin{myarray}[1.5]{l c r}
&		\beta^{*}_i = \sign(\widetilde{\beta}^{*}_i) \frac{|\widetilde{\beta}^{*}_i| - \lambda_1}{1+2\lambda_2} \ \text{ and } \ 
		|\beta^{*}_i| \geq \max \Bigg \{ \sqrt{\frac{2 \lambda_0}{1+2\lambda_2} }
		, \max\limits_{j \notin \text{Supp}(\beta^{*})} \frac{ |\widetilde{\beta}^{*}_{ij}| - \lambda_1}{1+2\lambda_2}	\Bigg \},
        & \text{for $i\in \text{Supp}(\beta^{*})$} \\
\text{and}~~~&			\frac{|\widetilde{\beta}^{*}_i| - \lambda_1}{1+2\lambda_2} \leq \sqrt{\frac{2 \lambda_0}{1+2\lambda_2}},  &\text{for $i\notin \text{Supp}(\beta^{*})$}
	    \end{myarray}
    \end{equation*}
    where $\widetilde{\beta}^{*}_{ij} = \langle y - \sum_{l\neq i,j} X_l \beta_l, X_j \rangle$.
\end{lemma}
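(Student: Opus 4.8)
The plan is to show that a PSI($1$) minimum is exactly a stationary solution for which no swap with $|S_1|\le 1$ and $|S_2|\le 1$ improves the objective, and to check that the finitely many such swaps each reduce to a one-dimensional thresholding comparison governed by Lemma~\ref{lemma:univariate}. Since stationarity is built into Definition~\ref{def:psi}, the equality $\beta^*_i=\sign(\widetilde\beta^*_i)(|\widetilde\beta^*_i|-\lambda_1)/(1+2\lambda_2)$ together with $|\widetilde\beta^*_i|>\lambda_1$ for $i\in\text{Supp}(\beta^*)$ is available for free via \eqref{eq:stationary}; the content of the lemma is therefore the set of inequalities, which I would extract from the three nontrivial swap types: (a) $S_1=\emptyset,\,S_2=\{j\}$, (b) $S_1=\{i\},\,S_2=\emptyset$, and (c) $S_1=\{i\},\,S_2=\{j\}$ with $i\in\text{Supp}(\beta^*),\,j\notin\text{Supp}(\beta^*)$.

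First I would dispatch the two ``pure'' moves. For the add move (a), the quantity $\min_{\beta_j}F(\beta^*+U^{\{j\}}\beta)$ is precisely the univariate problem defining $\widetilde T(\widetilde\beta^*_j,\lambda_0,\lambda_1,\lambda_2)$, so $F(\beta^*)\le\min_{\beta_j}F(\cdot)$ holds iff $0$ is a minimizer, which by Lemma~\ref{lemma:univariate} is equivalent to $(|\widetilde\beta^*_j|-\lambda_1)/(1+2\lambda_2)\le\sqrt{2\lambda_0/(1+2\lambda_2)}$; this is the stated condition for $j\notin\text{Supp}(\beta^*)$. For the delete move (b), zeroing $\beta^*_i$ changes the objective by removing the $i$th block; evaluating that block of $F(\beta^*)$ at the stationary value gives $\lambda_0-(|\widetilde\beta^*_i|-\lambda_1)^2/(2(1+2\lambda_2))$, so $F(\beta^*)\le F(\beta^*-U^{\{i\}}\beta^*)$ is equivalent to $|\beta^*_i|\ge\sqrt{2\lambda_0/(1+2\lambda_2)}$, recovering the first term in the max.

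The crux is the genuine swap (c). Setting $r=y-\sum_{l\ne i,j}X_l\beta^*_l$ and using $\beta^*_j=0$, I would observe that the cross term $\langle X_i,X_j\rangle$ is absent both in $\beta^*$ (where $\beta_j=0$) and in the swapped vector (where $\beta_i=0$), so all terms not indexed by $i,j$ cancel and the comparison decouples. The partial optimization over $\beta_j$ is again the thresholding problem of \eqref{eq:thresholdingmap} but driven by $\widetilde\beta^*_{ij}=\langle r,X_j\rangle$, so by Lemma~\ref{lemma:univariate} its optimal block value is $\min\{0,\ \lambda_0-(|\widetilde\beta^*_{ij}|-\lambda_1)^2/(2(1+2\lambda_2))\}$ (the unit $L_2$-norm of $X_j$ is what makes the operator applicable). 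Matching this against the $i$th block value $\lambda_0-(|\widetilde\beta^*_i|-\lambda_1)^2/(2(1+2\lambda_2))$ of $F(\beta^*)$ and simplifying, the inequality $F(\beta^*)\le\min_{\beta_j}F(\cdot)$ reduces, using $|\beta^*_i|=(|\widetilde\beta^*_i|-\lambda_1)/(1+2\lambda_2)$, to $|\beta^*_i|\ge(|\widetilde\beta^*_{ij}|-\lambda_1)/(1+2\lambda_2)$. Taking the maximum over $j\notin\text{Supp}(\beta^*)$ produces the second term in the max, completing the forward direction; the converse follows by reading every equivalence above backwards, since each swap objective was computed exactly rather than merely bounded.

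I expect the main obstacle to be the bookkeeping in case (c): one must evaluate $F(\beta^*)$ and the partially optimized swapped objective relative to the same reference $\tfrac12\|r\|^2$ so that the common blocks cancel cleanly, and one must be careful that the threshold boundary cases of Lemma~\ref{lemma:univariate} (where $0$ and the nonzero root are both minimizers) correspond to the non-strict inequalities in the statement. A minor additional check is that the equality condition plus $|\beta^*_i|\ge\sqrt{2\lambda_0/(1+2\lambda_2)}>0$ forces $|\widetilde\beta^*_i|>\lambda_1$, so the stated conditions indeed imply stationarity in the converse direction.
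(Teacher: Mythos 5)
Your approach is the natural one and is essentially what the paper intends (the supplementary contains no written proof of this lemma): combine Definition~\ref{def:psi}, the stationarity characterization~\eqref{eq:stationary}, and Lemma~\ref{lemma:univariate} over the three swap types $(S_1,S_2)\in\bigl\{(\emptyset,\{j\}),\,(\{i\},\emptyset),\,(\{i\},\{j\})\bigr\}$; your decoupling observation in case (c) --- no $\langle X_i, X_j\rangle$ cross term arises because at most one of the coordinates $i,j$ is nonzero in each of the two vectors being compared --- is exactly the key point, and your computations for cases (a) and (b) are correct.

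One imprecision in case (c) needs fixing when $\lambda_1>0$. The optimal block value after partially optimizing over $\beta_j$ is $\min\bigl\{0,\ \lambda_0-\bigl(\max\{|\widetilde{\beta}^{*}_{ij}|-\lambda_1,\,0\}\bigr)^2/(2(1+2\lambda_2))\bigr\}$, not $\min\bigl\{0,\ \lambda_0-(|\widetilde{\beta}^{*}_{ij}|-\lambda_1)^2/(2(1+2\lambda_2))\bigr\}$: the nonzero candidate of Lemma~\ref{lemma:univariate} exists only when soft-thresholding is active, i.e.\ $|\widetilde{\beta}^{*}_{ij}|>\lambda_1$; otherwise the block minimum is attained at $0$ and equals $0$. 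Without the positive part, in the regime $|\widetilde{\beta}^{*}_{ij}|<\lambda_1-\sqrt{2\lambda_0(1+2\lambda_2)}$ your squared comparison imposes the spurious requirement $(1+2\lambda_2)|\beta^{*}_i|\geq \lambda_1-|\widetilde{\beta}^{*}_{ij}|$, which a genuine PSI($1$) minimum need not satisfy, so the ``only if'' link of your chain of equivalences breaks there (equivalently, you cannot silently drop the outer absolute value when taking square roots). With the positive part inserted, the case $|\widetilde{\beta}^{*}_{ij}|\leq\lambda_1$ collapses to the delete condition of case (b) and the lemma's inequality is vacuously true there, so the equivalence comes out exactly as stated. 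A related caution applies to your closing converse check: the literal equation $\beta^{*}_i=\sign(\widetilde{\beta}^{*}_i)(|\widetilde{\beta}^{*}_i|-\lambda_1)/(1+2\lambda_2)$ admits a degenerate opposite-sign solution with $|\widetilde{\beta}^{*}_i|<\lambda_1$ and $\beta^{*}_i\neq 0$, so positivity of $|\beta^{*}_i|$ alone does not force $|\widetilde{\beta}^{*}_i|>\lambda_1$; you need the sign-consistent reading under which the equality entails $|\beta^{*}_i|=(|\widetilde{\beta}^{*}_i|-\lambda_1)/(1+2\lambda_2)$, which is the convention the paper uses throughout (the same convention is implicit in Lemma~\ref{lemma:CW}). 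Both are local, one-line repairs; the architecture of your proof is sound.
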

Lemmas~\ref{lemma:CW} and~\ref{lemma:psi1} suggest that PSI($1$) minima impose additional restrictions on the magnitude of nonzero coefficients,  when compared to CW minima.
The class of CW minima contains PSI($k$) minima for any $k$. Furthermore, as $k$ increases, the class of PSI($k$) minima becomes smaller---{till it coincides with the class of global minimizers of Problem~\eqref{problem:main}}. 

\pparagraph{Full Swap Inescapable (FSI) Minima:}
We formally define {\em Full Swap Inescapable} (FSI) minima, arising 
from the full optimization step outlined above.
\smallskip
\begin{definition}{(FSI Minima)}
	\label{def:fsi}
	Let $k$ be a positive integer. A vector $\beta^{*}$ with support $S$ is a FSI minimum of order $k$, denoted by FSI$(k)$, if for every $S_1 \subseteq S$ and $S_2 \subseteq  S^c$, such that $|S_1| \leq k$ and $|S_2| \leq k$, the following holds
	\begin{equation*}
	F(\beta^{*}) \leq \min_{\beta_{(S \setminus S_1) \cup S_2}} F(\beta^{*} - U^{S_1} \beta^{*} + U^{(S \setminus S_1) \cup S_2} \beta).
	\end{equation*}
\end{definition}
We note that for a fixed $k$, the class of PSI($k$) minima contains FSI($k$) minima, justifying a part of the hierarchy displayed in~\eqref{hier-equation-display}.
As $k$ increases, the class of FSI($k$) minima becomes smaller till it coincides with the set of global minimizers of Problem~\eqref{problem:main}. 
Sections~\ref{section:psialgorithm} and \ref{section:fsialgorithm} introduce algorithms to obtain PSI($k$) minima and FSI($k$) minima, respectively.

\subsection{Stationarity Motivated by Iterative Hard Thresholding (IHT)}
\label{section:IHT}
Proximal gradient algorithms such as IHT are popularly used for $L_0$-penalized least squares problems~\cite{blumensath2009-acha}. It is insightful to consider the class  of stationary solutions associated with IHT and study how they compare to CW minima. 
Let $f_d (\beta):= \frac{1}{2} \|y - X \beta\|^2 + \lambda_2 \|\beta\|^2$. The gradient of 
$f_{d}(\beta)$ is Lipschitz continuous with parameter $L$ (say), i.e., $\| \nabla  f_d (\beta) - \nabla f_d (\alpha)\| \leq L \| \beta - \alpha \|$ for all $\beta, \alpha \in \mathbb{R}^p$.
IHT applied to Problem \eqref{problem:main} performs the following updates:
\begin{align}\label{iht-updates-1}
	\beta^{k+1} \in \argmin_{\beta \in \mathbb{R}^p} \left\{ \frac{1}{2 \tau} \| \beta - (\beta^k - \tau \nabla f_d (\beta^k)) \|^2 + \lambda_1 \|\beta\|_1 + \lambda_0 \|\beta\|_0 \right\},
\end{align}
where $\tau > 0$ is a constant step size. We say that $\alpha \in \mathbb{R}^{p}$ is a fixed point of update~\eqref{iht-updates-1} if $\beta^k=\alpha$ leads to $\beta^{k+1}=\alpha$. This suggests another notion of 
stationarity (see Definition~\ref{defn-iht-min}) for Problem~\eqref{problem:main}.
To this end, consider Theorem~\ref{theorem:IHT} establishing the convergence of $\beta^{k}$ to a fixed point of update~\eqref{iht-updates-1}. 

\begin{theorem} 
\label{theorem:IHT}
	Let $L$ be defined as above. The sequence $\{\beta^k\}$ defined in~\eqref{iht-updates-1} converges to a fixed point $\beta^*$ of update~\eqref{iht-updates-1} for any $\tau < \frac{1}{L}$. Note that $\beta^{*}$ is a fixed point iff 
	\begin{equation}\label{iht-lemma-minima}
    \begin{myarray}[1.5]{ccc}
&			 \beta^{*}_i = \sign(\widetilde{\beta}^{*}_i) \frac{|\widetilde{\beta}^{*}_i| - \lambda_1}{1+2\lambda_2} \ \text{ and } \ 
			|\beta^{*}_i| \geq \sqrt{2 \lambda_0 \tau }&
\text{for $i\in \text{Supp}(\beta^{*})$} \\ 
\text{and}~~~&\frac{|\widetilde{\beta}^{*}_i| - \lambda_1}{1+2\lambda_2} \leq \sqrt{\frac{2 \lambda_0}{(1+2\lambda_2)^2 \tau}}& \text{	for $i\notin \text{Supp}(\beta^{*})$}
\end{myarray}
\end{equation}
\end{theorem}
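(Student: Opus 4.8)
The plan is to treat the two assertions separately: first the convergence of $\{\beta^k\}$ to a fixed point, and then the coordinate-wise characterization \eqref{iht-lemma-minima} of fixed points.

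For convergence, I would first recognize update~\eqref{iht-updates-1} as a proximal-gradient step for $F = f_d + \lambda_1\|\cdot\|_1 + \lambda_0\|\cdot\|_0$, where $f_d$ is smooth with $L$-Lipschitz gradient. Since $\beta^{k+1}$ minimizes the surrogate built by linearizing $f_d$ at $\beta^k$ and adding the proximal term $\frac{1}{2\tau}\|\beta-\beta^k\|^2$, comparing its value at $\beta^{k+1}$ against $\beta^k$ and combining with the descent lemma $f_d(\beta^{k+1})\le f_d(\beta^k)+\langle\nabla f_d(\beta^k),\beta^{k+1}-\beta^k\rangle+\frac{L}{2}\|\beta^{k+1}-\beta^k\|^2$ yields the sufficient-decrease bound
\begin{equation*}
F(\beta^{k+1}) \le F(\beta^k) - \Big(\tfrac{1}{2\tau} - \tfrac{L}{2}\Big)\|\beta^{k+1}-\beta^k\|^2 .
\end{equation*}
Because $\tau<1/L$ the bracketed constant is positive; since $F\ge 0$ is nonincreasing it converges, and telescoping gives $\sum_k\|\beta^{k+1}-\beta^k\|^2<\infty$, hence $\|\beta^{k+1}-\beta^k\|\to 0$.

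The crucial step, and the one I expect to be the main obstacle, is upgrading $\|\beta^{k+1}-\beta^k\|\to 0$ to genuine convergence of the iterates in the presence of the discontinuous $\|\cdot\|_0$ term. Here I would exploit a \emph{gap} in the coordinate-wise update: solving the separable univariate subproblem in \eqref{iht-updates-1} (a computation analogous to Lemma~\ref{lemma:univariate}) shows that any nonzero component of $\beta^{k+1}$ has magnitude at least $\sqrt{2\lambda_0\tau}$. Consequently, whenever a coordinate switches between zero and nonzero across consecutive iterates, $\|\beta^{k+1}-\beta^k\|\ge\sqrt{2\lambda_0\tau}$. As the successive differences tend to zero, $\text{Supp}(\beta^k)$ must stabilize to a fixed set $S$ after finitely many iterations. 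From that point the $L_0$ term is constant, the iterates remain in the coordinate subspace indexed by $S$, and the update reduces to ordinary proximal gradient on the \emph{convex} function $\frac12\|y-X_S\beta_S\|^2+\lambda_2\|\beta_S\|_2^2+\lambda_1\|\beta_S\|_1$; standard convex proximal-gradient theory with $\tau<1/L$ then delivers convergence of $\beta^k$ to a limit $\beta^*$, which is necessarily a fixed point.

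For the characterization \eqref{iht-lemma-minima}, I would use that $\beta^*$ is a fixed point iff $\beta^*\in\argmin$ of the update objective with $\beta^k=\beta^*$, and that this objective separates across coordinates. A direct computation using $\|X_i\|=1$ and the definition of $\widetilde{\beta}^*_i$ gives $(\nabla f_d(\beta^*))_i=(1+2\lambda_2)\beta^*_i-\widetilde{\beta}^*_i$, so the $i$th argument of the proximal map is $u_i=\beta^*_i\big(1-\tau(1+2\lambda_2)\big)+\tau\widetilde{\beta}^*_i$. Feeding $u_i$ into the univariate thresholding produces the two cases: for $i\in S$ the nonzero branch forces the stationarity relation $(1+2\lambda_2)\beta^*_i=\widetilde{\beta}^*_i-\lambda_1\sign(\beta^*_i)$, which (after checking $\sign(\beta^*_i)=\sign(\widetilde{\beta}^*_i)$) is exactly the first displayed equation, together with the threshold $|\beta^*_i|\ge\sqrt{2\lambda_0\tau}$; for $i\notin S$ we substitute $u_i=\tau\widetilde{\beta}^*_i$ and require the zero branch to be (weakly) preferred, i.e.\ $|\widetilde{\beta}^*_i|-\lambda_1\le\sqrt{2\lambda_0/\tau}$, which rearranges to $\frac{|\widetilde{\beta}^*_i|-\lambda_1}{1+2\lambda_2}\le\sqrt{\frac{2\lambda_0}{(1+2\lambda_2)^2\tau}}$ once the $(1+2\lambda_2)$ factors are reconciled. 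I expect these manipulations to be routine once the gap argument has secured convergence.
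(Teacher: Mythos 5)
Your proposal is correct, and it follows essentially the route the paper intends: the paper's own ``proof'' of Theorem~\ref{theorem:IHT} is only a pointer to the IHT convergence arguments of \cite{blumensath2009-acha, mazumder2017subset, bertsimas2015best} for the cardinality-constrained problem, and those arguments are exactly your scheme of sufficient decrease, support stabilization via the $\sqrt{2\lambda_0\tau}$ magnitude gap, and standard convex proximal-gradient convergence on the stabilized support, followed by the separable thresholding computation for the fixed-point characterization. Your write-up in fact supplies the adaptation to the penalized ($\lambda_0\|\cdot\|_0$ rather than a constraint) and non-smooth ($\lambda_1\|\cdot\|_1$) setting that the paper leaves implicit.
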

\smallskip
\begin{definition}\label{defn-iht-min}
A vector $\beta^*$ is an IHT minimum for Problem~\eqref{problem:main} if it satisfies~\eqref{iht-lemma-minima} for $\tau < \frac{1}{L}$.
\end{definition}

The following remark shows that the class of 
IHT minima contains the family of CW minima. 

\smallskip

\begin{remark} \label{remark:IHT}
Let $M$ be the largest eigenvalue of $X^T X$ and take $L = M + 2\lambda_2$.
By Theorem~\ref{theorem:IHT},  
any $\tau < \frac{1}{M + 2\lambda}$ ensures the convergence of updates~\eqref{iht-updates-1}. Since the columns of $X$ are normalized, we have $M \geq 1$. 
Comparing~\eqref{iht-lemma-minima} 
to~\eqref{lemma:CW-eqn} 
we see that the class of IHT minima includes CW minima. Usually, 
for high-dimensional problems, $M \gg 1$---making the class of IHT minima much larger than CW minima 
(see Section~\ref{section:cdcomparison} for numerical examples).
\end{remark}

We have now explained the full hierarchy among the classes of local minima presented in~\eqref{hier-equation-display}. Section~\ref{section:algos} discusses algorithms to obtain solutions that belong to these classes. 

\section{Algorithms}
\label{section:algos}
Section~\ref{section:CD} presents a cyclic CD algorithm which converges to CW minima, and Section~\ref{sec:local-algos} discusses local combinatorial optimization to obtain PSI$(k)$/FSI$(k)$ minima.

\subsection{Cyclic Coordinate Descent}
\label{section:CD}

Our main workhorse is cyclic CD~\cite{bertsekas2016nonlinear} with full minimization in every coordinate---we also include some additional tweaks for reasons we discuss shortly. With initialization $\beta^0$, we update the first coordinate (with others fixed) to get $\beta^1$, and continue the updates as per a cyclic rule.
Let $\beta^{k}$ denote the solution obtained after performing $k$ coordinate updates. 
Then, $\beta^{k+1}$ is obtained by updating the $i$th coordinate (with others held fixed) via:
 \begin{align} \label{eq:argminF}
\beta^{k+1}_{i} \in \argmin_{\beta_i \in \mathbb{R}} F(\beta_1^{k},\dots, \beta_{i-1}^{k}, \beta_{i}, \beta_{i+1}^{k}, \dots, \beta_{p}^{k} ),
\end{align}
where $i = (k+1)~\text{mod}~(p+1)$. Recall that the operator $\widetilde{T}(\widetilde{\beta}_i, \lambda_0, \lambda_1, \lambda_2)$ (defined in~\eqref{eq:thresholdingmap}) describes solutions of Problem~\eqref{eq:argminF}. Specifically, it returns two solutions when $\tfrac{|\widetilde{\beta_i}| - \lambda_1}{1+2\lambda_2} = \sqrt{2\lambda_0 \over 1+2 \lambda_{2}}$. In such a case, we consistently choose one of these solutions\footnote{This convention is used for a technical reason in the context of our proof of Theorem~\ref{theorem:convergence}.}, namely the nonzero solution. Thus, we use the new operator (note the use of $T$ instead of $\widetilde{T}$):
\begin{align}
	\label{eq:thresholding}
	T(\widetilde{\beta}_i, \lambda_0, \lambda_1, \lambda_2) \eqdef \begin{cases}
				 \sign(\widetilde{\beta_i}) \frac{|\widetilde{\beta_i}| - \lambda_1}{1+2\lambda_2} & \text{ if } \frac{|\widetilde{\beta_i}| - \lambda_1}{1+2\lambda_2} \geq \sqrt{2\lambda_0 \over 1+2 \lambda_{2}} \\
			    0  & \text{ if } \frac{|\widetilde{\beta_i}| - \lambda_1}{1+2\lambda_2} < \sqrt{2\lambda_0 \over 1+2 \lambda_{2}}
			    \end{cases}
\end{align}
for update (\ref{eq:argminF}). In addition to the above modification, we introduce ``spacer steps'' that are occasionally performed during the course of the algorithm to \emph{stabilize} its behavior\footnote{The spacer steps are introduced for a technical reason, and our proof of convergence of CD relies on this to ensure the stationarity of the algorithm's limit points.}---spacer steps are commonly used in the context of continuous nonlinear optimization problems (e.g., see~\cite{bertsekas2016nonlinear}).  We perform a spacer step as follows. 
Let $C$ be an a-priori fixed positive integer. We keep track of the supports encountered so far, and when a certain support $S$ (say) appears for $Cp$-many times, we perform one pass of cyclic CD to minimize the continuous function $\beta_{S} \mapsto f(\beta_S)$. This entails updating every coordinate in $S$ via the operator: $T(\widetilde{\beta}_i, 0, \lambda_1, \lambda_2)$ (see Subroutine~1). 

Algorithm~1 summarizes the above procedure. Count[S] is an associative array that stores the number of times a support $S$ appears---it takes $S$ as a key and returns 
the number of times $S$ has appeared so far. Count[S] is initialized to zero for any $S$ that appears for the first time during the course of the algorithm.
Note that in the worst case,
storing Count[] may require an exponential (in $p$) amount of memory. However, in practice, only one or few supports appear for $Cp$ many times and need to be maintained in Count[].

\begin{algorithm}[!h]
	\DontPrintSemicolon
	\SetKwInOut{Input}{Input}\SetKwInOut{Output}{Output}
	\Input{Initial Solution $\beta^{0}$, Positive Integer $C$}
	$k \gets 0$ \\
	\While{Not Converged}
	{
		\For{$i$ in $1$ to $p$}
		{
			$\beta^{k+1} \gets \beta^{k}$ \\
			$\beta^{k+1}_{i} \gets \argmin_{\beta_i \in \mathbb{R}} F(\beta_1^k,\dots, \beta_{i}, \dots, \beta_{p}^k )$ using (\ref{eq:thresholding}) \tcp*{Non-spacer Step}
			$k \gets k+1$ \\
			Count[Supp($\beta^{k}$)] $\gets$ Count[Supp($\beta^{k}$)] + 1 \\
			\If{Count[Supp($\beta^{k}$)] = $C p$}
				{
					$\beta^{k+1} \gets \text{SpacerStep}(\beta^{k})$ \tcp*{Spacer Step}
					Count[Supp($\beta^{k}$)] = 0 \\
					$k \gets k+1$
				}		
		}
	}
	\caption{Coordinate Descent with Spacer Steps (CDSS)}
	\label{alg:CD}
\end{algorithm}
\setcounter{algocf}{0}
\begin{algorithm}[!htbp]
	\SetAlgorithmName{Subroutine}{Subroutine}{Subroutine}
	\DontPrintSemicolon
	\SetKwInOut{Input}{Input}\SetKwInOut{Output}{Output}
	\Input{$\beta$}
					\For{$i$ in Supp$(\beta)$}
						{
							$\beta_{i} \gets \argmin_{\beta_i \in \mathbb{R}} f(\beta_1,\dots,  \beta_{i}, \dots, \beta_{p} )$ using (\ref{eq:thresholding}) with $\lambda_0 = 0$ \\	
						}
	\Return ($\beta$)
	\caption{SpacerStep($\beta$)}
	\label{alg:spacerstep}
\end{algorithm}

\pparagraph{Why Cyclic CD?} Cyclic CD has been practically shown to be among the fastest algorithms for Lasso~\cite{glmnet} and continuous non-convex regularizers
(e.g., MCP, SCAD, etc)~\cite{sparsenet, ncvreg}. Coordinate updates in cyclic CD have low cost and exploit sparsity via sparse residual updates and active set convergence \cite{glmnet}. This makes it well-suited for high-dimensional problems with $p \gg n$ and $p$ of the order of tens-of-thousands to millions. 
On the other hand, methods requiring evaluation of the full gradient (e.g., proximal gradient descent, greedy coordinate descent, etc) can have difficulty in scaling with $p$ \cite{nesterov2012efficiency}. 
For example, proximal gradient descent methods do not exploit sparsity-based structure as well as CD-based methods~\cite{glmnet,nesterov2012efficiency}.
We also note that based on our experiments, random CD (proposed by \cite{randomCDL0} for a problem similar to ours) exhibits slower convergence in practice (see Section \ref{section:cdcomparison})---see also related discussions in~\cite{BeckConvergence} for convex problems. We have also observed empirically that cyclic CD has an edge over competing algorithms, both in terms of optimization objective (see Section~\ref{section:cdcomparison}) and statistical performance (see Sections~\ref{section:PhaseTransitionsN}, \ref{section:PhaseTransitionsSNR}, and \ref{section:high-dim-exp}).

\subsubsection{Convergence Analysis}
We analyze the convergence behavior of Algorithm~1---in particular, we prove a new result establishing convergence to a CW minimum (the limit point depends upon the initialization). Moreover, we show that the linear rate of convergence of CD which holds for minimization of a smooth convex function~\cite{BeckConvergence} extends to our non-convex problem (in an asymptotic sense). We note that if we avoid full minimization and use a conservative step size, the proofs of convergence become straightforward by virtue of a sufficient decrease of the objective value after every coordinate update\footnote{This observation also appears in establishing convergence of IHT-type algorithms---see for example~\cite{BeckSparsityConstrained, PenalizedIHT,bertsimas2015best}.}. 
However, using CD with a conservative step size for Problem~\ref{problem:main} can have a detrimental effect on the solution quality. By examining the fixed points, it can be shown that a conservative step size leads to a class of stationary solutions that contains CW minima. 
Cyclic CD has been studied in earlier work with non-convex but continuous regularizers~\cite{sparsenet,ncvreg,Tseng01} with a least-squares data fidelity term. However, to our knowledge, a convergence analysis of cyclic CD for Problem~\eqref{problem:main} is novel. 

We present a few lemmas describing the behavior of Algorithm \ref{alg:CD} (some additional technical lemmas are in the supplementary).
Theorem \ref{theorem:convergence} establishes convergence and Theorem \ref{theorem:convergencerate} presents an asymptotic linear rate of convergence 
of Algorithm \ref{alg:CD}.

The following lemma states that Algorithm~\ref{alg:CD} is a descent algorithm. 
\begin{lemma} \label{lemma:descent}
Algorithm \ref{alg:CD} is a descent algorithm and $F(\beta^k) \downarrow F^{*}$ for some $F^* \geq 0$.
\end{lemma}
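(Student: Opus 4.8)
The plan is to prove the two claims in order: first that Algorithm~\ref{alg:CD} is monotone (a descent algorithm), and then that the monotone sequence $\{F(\beta^k)\}$ converges to a nonnegative limit. Since Algorithm~\ref{alg:CD} interleaves two distinct types of updates---the ordinary coordinate (non-spacer) step and the spacer step---I would establish $F(\beta^{k+1})\le F(\beta^k)$ separately for each, and then combine with a lower bound on $F$ to conclude.

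First I would handle the non-spacer step, which is the easy case. Here $\beta^{k+1}$ differs from $\beta^k$ only in a single coordinate $i$, and $\beta^{k+1}_i$ is returned by the operator $T(\widetilde{\beta}_i,\lambda_0,\lambda_1,\lambda_2)$ of~\eqref{eq:thresholding}. By Lemma~\ref{lemma:univariate}, the value returned by $T$ is a global minimizer of the univariate map $\beta_i \mapsto F(\beta_1^k,\dots,\beta_i,\dots,\beta_p^k)$; the only place $T$ departs from $\widetilde{T}$ is the tie case $\tfrac{|\widetilde{\beta}_i|-\lambda_1}{1+2\lambda_2}=\sqrt{2\lambda_0/(1+2\lambda_2)}$, where both candidate values attain the \emph{same} minimal univariate objective, so the nonzero choice enforced by $T$ is still a minimizer. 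Taking $\beta_i=\beta^k_i$ as a feasible point then gives $F(\beta^{k+1})=\min_{\beta_i}F(\dots,\beta_i,\dots)\le F(\beta^k)$, so every non-spacer step is a descent step.

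The main obstacle is the spacer step, because \text{SpacerStep} minimizes the \emph{convex} function $f$ (it calls $T$ with $\lambda_0=0$) rather than the discontinuous objective $F$, so descent for $F$ is not automatic. The key observation I would exploit is that a spacer step cannot enlarge the support: the loop ranges only over $i\in\text{Supp}(\beta)$, so every coordinate lying outside the initial support remains zero throughout, whence $\|\beta^{k+1}\|_0\le\|\beta^k\|_0$. Simultaneously, each coordinate update inside \text{SpacerStep} replaces $\beta_i$ by a global minimizer of the convex univariate map $\beta_i\mapsto f(\dots,\beta_i,\dots)$, so a single pass does not increase $f$, i.e. $f(\beta^{k+1})\le f(\beta^k)$. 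Adding the two inequalities and recalling $F=f+\lambda_0\|\cdot\|_0$ yields $F(\beta^{k+1})\le F(\beta^k)$, so spacer steps are descent steps as well. (The \text{Count}[$\cdot$] bookkeeping only governs \emph{when} a spacer step is triggered and does not affect any of these inequalities.)

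Combining the two cases shows that $\{F(\beta^k)\}$ is non-increasing in $k$. Since $F(\beta)=\tfrac12\|y-X\beta\|^2+\lambda_1\|\beta\|_1+\lambda_2\|\beta\|_2^2+\lambda_0\|\beta\|_0$ is a sum of nonnegative terms, it is bounded below by $0$. A non-increasing real sequence that is bounded below converges, so $F(\beta^k)\downarrow F^*$ for some $F^*\ge 0$, which is exactly the assertion of the lemma. I do not anticipate any difficulty beyond the two points already flagged, namely the careful treatment of the tie-breaking convention of $T$ in the non-spacer step and the support-monotonicity argument that rescues descent in the spacer step.
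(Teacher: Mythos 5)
Your proof is correct and follows essentially the same route as the paper's: non-spacer steps descend because they minimize the univariate objective, spacer steps descend because $f$ does not increase and the support cannot grow, and monotone convergence plus nonnegativity of $F$ gives the limit $F^*\geq 0$. Your extra care with the tie-breaking convention of $T$ is a fine elaboration of what the paper dismisses as holding "by definition," but it does not change the argument.
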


For the remainder of the section, we will make the following minor assumptions to establish the convergence of Algorithm~\ref{alg:CD} for the $(L_0)$ and $(L_0 L_1)$ problems. These assumptions are not needed for problem $(L_0 L_2)$.

\begin{assumption}
	\label{assumption:Xlinindp}
	Let $m = \min \{n,p \}$. Every set of $m$ columns of $X$ are linearly independent.
\end{assumption}
\begin{assumption}\label{assumption:initial}
(Initialization) If $p > n$, we assume 
that the initial estimate $\beta^{0}$ satisfies 
\begin{itemize}
\item In the $(L_0)$ problem: $F(\beta^{0}) \leq \lambda_0 n$.
\item  In the $(L_0 L_1)$ problem: $F(\beta^{0}) \leq f(\beta^{\ell_1}) + \lambda n$ where
	$
	f(\beta^{\ell_1}) = \min_{\beta} \frac{1}{2} \| y - X \beta \|^{2}  + \lambda_1 \|\beta\|_1.
	$
\end{itemize}
\end{assumption}
The following remark demonstrates that Assumption~2 is rather minor.
\smallskip
\begin{remark}
Suppose $p > n$ and Assumption 1 holds. For the $(L_0)$ problem, let $S \subseteq [p]$ such that $|S|=n$. If $\beta^0$ is defined such that $\beta^0_S$ is the least squares solution on the support $S$ with $\beta^0_{S^c} = 0$; 
then $F(\beta^0)= \lambda_0 n$ (since the least squares loss is zero).
This satisfies Assumption~2. For the $(L_0 L_1)$ problem, we note that there always exists an optimal lasso solution $\hat{\beta}$ such that $\|\hat{\beta} \|_0 \leq n$ (e.g., see \cite{LassoUniqueness}). Therefore, $\hat{\beta}$ satisfies Assumption~2.
\end{remark}

In what follows, we assume that Assumptions~\ref{assumption:Xlinindp} and~2 hold for the $(L_0)$ and $(L_0 L_1)$ problems. 
Lemma~\ref{lemma:bdsuppsize} shows that in the $(L_0)$ and $(L_0 L_1)$ problems, the support size of any $\beta^k$ obtained by Algorithm \ref{alg:CD} cannot exceed $\min\{n,p\}.$
\smallskip
\begin{lemma}
\label{lemma:bdsuppsize}

For the $(L_0)$ and $(L_0 L_1)$ problems, $\{\beta^k\}$ satisfies $\|\beta^k\|_0 \leq \min \{n,p \}$ for all $k$.
\end{lemma}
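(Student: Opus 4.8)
The plan is to combine the monotone descent property of the algorithm with the initialization bound of Assumption~\ref{assumption:initial}, exploiting that the continuous part of $F$ is bounded below by a fixed constant. First I would dispose of the trivial case: if $p \le n$ then $\min\{n,p\}=p$ and $\|\beta^k\|_0 \le p$ holds for \emph{every} vector in $\mathbb{R}^p$, so there is nothing to prove. Hence I restrict attention to $p > n$, where $\min\{n,p\}=n$ and the initialization clause of Assumption~\ref{assumption:initial} is in force. The single fact I draw from the algorithm is Lemma~\ref{lemma:descent}: $F(\beta^k)$ is non-increasing in $k$, so $F(\beta^k) \le F(\beta^0)$ for every $k$ (including iterates produced by spacer steps).

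For the $(L_0)$ problem ($\lambda_1=\lambda_2=0$) the key observation is that the least squares term is nonnegative, so $F(\beta^k) = \tfrac12\|y - X\beta^k\|^2 + \lambda_0\|\beta^k\|_0 \ge \lambda_0\|\beta^k\|_0$. Chaining this with descent and Assumption~\ref{assumption:initial}, namely $F(\beta^0) \le \lambda_0 n$, gives $\lambda_0\|\beta^k\|_0 \le F(\beta^k)\le F(\beta^0) \le \lambda_0 n$. Dividing by $\lambda_0 > 0$ yields $\|\beta^k\|_0 \le n = \min\{n,p\}$ for all $k$.

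For the $(L_0 L_1)$ problem ($\lambda_1 > 0,\ \lambda_2 = 0$) I would argue identically, except that the lasso optimal value plays the role of the lower bound on the continuous part. Writing $\ell(\beta) := \tfrac12\|y-X\beta\|^2 + \lambda_1\|\beta\|_1$, its global minimum is exactly $f(\beta^{\ell_1})$ appearing in Assumption~\ref{assumption:initial}, so $\ell(\beta^k) \ge f(\beta^{\ell_1})$ for every $k$. Since $F(\beta^k) = \ell(\beta^k) + \lambda_0\|\beta^k\|_0$, descent together with Assumption~\ref{assumption:initial} gives $f(\beta^{\ell_1}) + \lambda_0\|\beta^k\|_0 \le F(\beta^k) \le F(\beta^0) \le f(\beta^{\ell_1}) + \lambda_0 n$; cancelling $f(\beta^{\ell_1})$ and dividing by $\lambda_0$ again gives $\|\beta^k\|_0 \le n$.

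The algebra is short once the correct quantities are compared, so the main obstacle is conceptual rather than computational: identifying the right lower bound for the continuous part of $F$ (zero for $(L_0)$, the lasso optimum $f(\beta^{\ell_1})$ for $(L_0 L_1)$) so that it exactly cancels against the matching term in the initialization bound. A secondary point I would verify carefully is that the descent conclusion of Lemma~\ref{lemma:descent} is uniform over \emph{all} iterates, in particular that spacer steps (which minimize $f$ over a fixed support and can only shrink it) do not increase $F$ — this is what guarantees $F(\beta^k) \le F(\beta^0)$ for every $k$. Finally, I would note that Assumption~\ref{assumption:Xlinindp} is not invoked directly in this argument; its role is upstream, ensuring via the preceding Remark that an initialization meeting Assumption~\ref{assumption:initial} exists.
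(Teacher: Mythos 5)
Your proposal is correct and follows essentially the same route as the paper's proof: dispose of the case $p \le n$ trivially, then for $p > n$ combine the descent property (Lemma~\ref{lemma:descent}) with Assumption~2, lower-bounding the continuous part of $F$ by $0$ for $(L_0)$ and by the lasso optimum $f(\beta^{\ell_1})$ for $(L_0 L_1)$ so that it cancels against the initialization bound. Your additional remarks (that spacer steps preserve descent, and that Assumption~\ref{assumption:Xlinindp} is not needed here) are accurate and consistent with the paper.
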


The following theorem establishes the convergence of Algorithm~1. 
\smallskip
\begin{theorem}
\label{theorem:convergence}

The following holds for Algorithm~1:
\begin{enumerate}
    \item The support of $\{\beta^{k} \}$ stabilizes after a finite number of iterations, i.e., there exists an integer $m$ and a support $S$ such that $\text{Supp}(\beta^{k}) = S$ for all $k \geq m$.
    \item The sequence $\{\beta^{k} \}$ converges to a CW minimum $B$ with $\text{Supp}(B) = S$.
\end{enumerate}
\end{theorem}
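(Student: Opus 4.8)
The plan is to split the argument along the two claims: first prove support stabilization (claim 1), then exploit it to reduce the tail of the algorithm to an ordinary cyclic CD on a smooth convex problem and read off a CW minimum (claim 2). I would begin with two preliminary reductions. By Lemma~\ref{lemma:descent}, $F(\beta^k)\downarrow F^{*}$, so $f(\beta^k)\le F(\beta^0)$ for every $k$. From this I would first establish boundedness of $\{\beta^k\}$: for $(L_0 L_2)$ the inequality $\lambda_2\|\beta^k\|_2^2\le f(\beta^k)\le F(\beta^0)$ bounds the iterates directly; for $(L_0)$ and $(L_0 L_1)$, Lemma~\ref{lemma:bdsuppsize} bounds the support by $m=\min\{n,p\}$, and Assumption~\ref{assumption:Xlinindp} makes $X_S$ full column rank on every admissible support $S$, so $f$ is coercive on each of the finitely many supports, giving global boundedness. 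Since there are only finitely many possible supports, claim~1 reduces to showing that the support can change only finitely often.

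For the support-change accounting, I would classify each support-altering coordinate update as either a \emph{strict-decrease} move (dropping a coordinate, which by the tie-breaking convention of the operator $T$ in~\eqref{eq:thresholding} requires a strict violation of the threshold $\frac{|\widetilde{\beta}_i|-\lambda_1}{1+2\lambda_2}<\sqrt{2\lambda_0/(1+2\lambda_2)}$ and hence strictly decreases $F$, or adding a coordinate with a strict threshold excess) or a \emph{zero-decrease} tie move (adding a coordinate exactly at the threshold). Because the total decrease $F(\beta^0)-F^{*}$ is finite, strict-decrease moves with a decrease bounded away from zero cannot recur infinitely; the genuine obstacle, and the reason the standard cyclic-CD analysis of~\cite{Tseng01} does not transfer to the full non-convex problem, is ruling out infinitely many tie moves. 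This is exactly where I would invoke the spacer steps: if the support never stabilized, some support $S$ would recur infinitely often, so its spacer step (a full CD pass on the strongly convex map $\beta_S\mapsto f(\beta_S)$) would fire infinitely often; combining a Bertsekas-style spacer-step argument~\cite{bertsekas2016nonlinear} with the strong convexity of $f|_S$ (unique minimizer) forces the iterates restricted to $S$ to converge, which is incompatible with perpetually escaping $S$ through ties. This produces an integer $m$ and a support $S$ with $\text{Supp}(\beta^k)=S$ for all $k\ge m$, proving claim~1.

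For claim~2, once $k\ge m$ the algorithm decouples: every $i\notin S$ updates to $0$, and every $i\in S$ updates through the nonzero branch $\sign(\widetilde{\beta}_i)\frac{|\widetilde{\beta}_i|-\lambda_1}{1+2\lambda_2}$, which is precisely the coordinate minimizer of the convex map $\beta_S\mapsto f(\beta_S)$. Thus the tail is ordinary cyclic CD on $f|_S$, a smooth-plus-separable convex problem that is strongly convex (via $\lambda_2>0$, or via Assumption~\ref{assumption:Xlinindp} with $|S|\le m$), so it has a unique minimizer and the iterates converge to a limit $B$ with $B_{S^c}=0$ and $B_S=\argmin_{\beta_S} f$. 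I would then verify $B$ meets the characterization of Lemma~\ref{lemma:CW}: for $i\in S$, the first-order optimality of $f|_S$ gives $B_i=\sign(\widetilde{B}_i)\frac{|\widetilde{B}_i|-\lambda_1}{1+2\lambda_2}$, and passing the threshold inequality that held along the stable tail to the limit by continuity gives $|B_i|\ge\sqrt{2\lambda_0/(1+2\lambda_2)}$; for $i\notin S$, the fact that $i$ was never added along the tail yields $\frac{|\widetilde{B}_i|-\lambda_1}{1+2\lambda_2}\le\sqrt{2\lambda_0/(1+2\lambda_2)}$ in the limit. Since $|B_i|>0$ for every $i\in S$, we obtain $\text{Supp}(B)=S$, so $B$ is a CW minimum as claimed.

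The main obstacle is the exclusion of infinite cycling through threshold ties in claim~1; the spacer steps together with the strong convexity of $f|_S$ are precisely the devices that defeat it, whereas the reduction and convergence in claim~2 are comparatively routine once the support is known to freeze.
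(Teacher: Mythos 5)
Your preliminaries (boundedness via coercivity for $(L_0L_2)$ and via Lemma~\ref{lemma:bdsuppsize} plus Assumption~\ref{assumption:Xlinindp} for $(L_0)$/$(L_0L_1)$) match the paper's Lemma~\ref{lemma:boundedness}, and your claim~2 is essentially the paper's argument: after stabilization the tail is cyclic CD on the strongly convex $f|_S$, and the threshold inequalities pass to the limit. The gap is in claim~1, and it is exactly where the paper spends its effort. Your accounting splits support-altering moves into (a) strict decreases ``bounded away from zero'' and (b) exact ties, and concludes the only remaining obstacle is (b). This omits the critical intermediate case: \emph{strict} decreases whose magnitude vanishes asymptotically. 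If a coordinate $j$ is dropped and re-added infinitely often with $|\beta^k_j|$ converging to the threshold $\sqrt{2\lambda_0/(1+2\lambda_2)}$, each drop strictly decreases $F$ by roughly $\tfrac{1+2\lambda_2}{2}\bigl(|\beta^k_j|-\sqrt{2\lambda_0/(1+2\lambda_2)}\bigr)^2$ (Lemma~\ref{lemma:Fdiffdrop}), and these amounts can sum to a finite total; finiteness of $F(\beta^0)-F^{*}$ therefore does not exclude infinitely many of them. Your proposed remedy---spacer steps plus strong convexity forcing the iterates restricted to a recurring support $S$ to converge---does not close this hole: per-support subsequential convergence (this is precisely the paper's Lemma~\ref{lemma:spacer}, Part~3) is fully compatible with the support oscillating between $S$ and $S\setminus\{j\}$ forever, with each restricted subsequence converging to its own limit point. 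Nothing in that argument makes oscillation ``incompatible'' with convergence of the restricted subsequences.

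What is actually needed, and what you are missing, is a comparison of the \emph{two} limit points that such an oscillation would create. The paper's Lemma~\ref{lemma:xj} does this: if $B^{(1)}$ and $B^{(2)}$ are limit points with supports $S_1$ and $S_2=S_1\cup\{j\}$, then (using that all limit points share the same objective value $F^{*}$, an exact expansion of $f$ upon deleting coordinate $j$, and uniqueness of the minimizer of $f|_{S_1}$) either some $i\in S_1$ has $\langle X_i,X_j\rangle\neq 0$, in which case $|B^{(2)}_j|$ is \emph{strictly} above the threshold, or $X_j$ is orthogonal to $X_{S_1}$, in which case $|B^{(2)}_j|$ equals the threshold but the tie-breaking convention in~\eqref{eq:thresholding} then forbids $j$ from ever being dropped. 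Combined with Lemma~\ref{lemma:Fdiffdrop} (which forces $|B_j|$ to equal the threshold exactly if $j$ is dropped infinitely often) and the device of taking a limit point of \emph{largest} support, this yields the contradiction that kills infinite oscillation. The spacer steps' role in the paper is only to guarantee that each recurring support has a unique, stationary limit point (Lemma~\ref{lemma:spacer}); they are an ingredient of Lemma~\ref{lemma:xj}'s hypotheses, not a substitute for it. Without an argument of this two-limit-point type, your proof of claim~1 does not go through.
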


Theorem~\ref{theorem:convergencerate} presents an asymptotic linear rate of convergence for Algorithm~1: we make use of part 1 of Theorem~\ref{theorem:convergence} and the convergence rate of cyclic CD (for smooth and strongly convex functions)~\cite{BeckConvergence}.
Note that in Theorem~\ref{theorem:convergencerate}, \textsl{full cycle} refers to a single pass of vanilla CD over all the coordinates in $S$, 
and $\beta^K$ refers to the iterate generated after performing $K$ full cycles of CD. 

\begin{theorem}
\label{theorem:convergencerate} [Adaptation of~\cite{BeckConvergence}, Theorem 3.9]
Let $\{ \beta^K \}$ be the full-cycle iterates generated by Algorithm~\ref{alg:CD} and $B$ be the limit with support $S$. Let $m_S$ and $M_S$ denote the smallest and largest eigenvalues of $X_S^T X_S$, respectively. Then, there is an integer $N$ such that for all $K \geq N$ the following holds:
\begin{align}
F(\beta^{K+1}) - F(B) \leq \left(1 - \frac{m_S + 2\lambda_2}{2(1+2\lambda_2) \left( 1 + {|S| (\frac{M_S+2\lambda_2}{1+2\lambda_2}})^{2}  \right)} \right) \left(F(\beta^{K}) - F(B)\right).
\end{align}
\end{theorem}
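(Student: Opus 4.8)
The plan is to reduce the statement, after finitely many iterations, to a standard linear-convergence bound for cyclic coordinate minimization of a \emph{smooth strongly convex plus separable convex} function, and then invoke~\cite{BeckConvergence}, Theorem~3.9, with constants tailored to our setting. The whole content of the proof is the reduction; once it is in place the rate is read off from the cited result. By part~1 of Theorem~\ref{theorem:convergence}, there is an index after which $\text{Supp}(\beta^k)=S$ for all subsequent iterates, so the $L_0$ term is frozen: for any $\beta$ supported on $S$ we have $F(\beta)=f(U^S\beta)+\lambda_0|S|$, and hence $F(\beta^K)-F(B)=g(\beta^K_S)-g(B_S)$, where $g(\beta_S)\eqdef \tfrac12\|y-X_S\beta_S\|^2+\lambda_1\|\beta_S\|_1+\lambda_2\|\beta_S\|_2^2$ and the additive constant $\lambda_0|S|$ cancels. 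Thus it suffices to prove the rate for cyclic coordinate minimization of the convex function $g$ over the $|S|$ active coordinates.

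First I would pin down $N$ so that two things hold simultaneously for all $K\ge N$: the support has stabilized at $S$, and $\beta^K$ is close enough to the limit $B$ that every coordinate update selects the \emph{nonzero} (first) branch of the operator $T$ in~\eqref{eq:thresholding}. The latter is legitimate because at a CW minimum the active coordinates satisfy $|B_i|\ge\sqrt{2\lambda_0/(1+2\lambda_2)}$ (see~\eqref{lemma:CW-eqn}), so by continuity of $\widetilde{\beta}_i$ along the trajectory the condition $\tfrac{|\widetilde\beta_i|-\lambda_1}{1+2\lambda_2}\ge\sqrt{2\lambda_0/(1+2\lambda_2)}$ remains strict near $B$; consequently the hard-threshold never activates, no coordinate is dropped or added, and the update $T(\widetilde\beta_i,\lambda_0,\lambda_1,\lambda_2)=\sign(\widetilde\beta_i)(|\widetilde\beta_i|-\lambda_1)/(1+2\lambda_2)$ coincides exactly with the one-dimensional exact minimizer of $g$ in coordinate $i$. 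The same branch argument shows that a SpacerStep (which applies $T$ with $\lambda_0=0$ on $S$) performs an ordinary exact coordinate sweep of $g$ as well, so the occasional spacer passes are just extra full cycles of the same convex coordinate minimization; since they only decrease $g$ further, the per-full-cycle bound is unaffected.

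Next I would decompose $g=h+\psi$ with $h(\beta_S)=\tfrac12\|y-X_S\beta_S\|^2+\lambda_2\|\beta_S\|_2^2$ smooth and $\psi(\beta_S)=\lambda_1\|\beta_S\|_1$ separable and convex, and record the three constants that enter~\cite{BeckConvergence}, Theorem~3.9: the strong convexity modulus of $h$ equals the smallest eigenvalue of $X_S^TX_S+2\lambda_2 I$, i.e.\ $m_S+2\lambda_2$; its gradient-Lipschitz constant equals the largest eigenvalue $M_S+2\lambda_2$; and, because the columns of $X$ are unit-norm, the per-coordinate curvature $\partial^2 h/\partial\beta_i^2=1+2\lambda_2$ is uniform, giving coordinate-Lipschitz constant $L_{\min}=1+2\lambda_2$. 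Here $m_S+2\lambda_2>0$ in all cases: for $(L_0L_2)$ it is forced by $\lambda_2>0$, and for $(L_0)$ and $(L_0L_1)$ it follows from Assumption~\ref{assumption:Xlinindp} together with $|S|\le\min\{n,p\}$ (Lemma~\ref{lemma:bdsuppsize}), which makes $X_S$ of full column rank so that $m_S>0$. Substituting $\sigma=m_S+2\lambda_2$, $L_{\min}=1+2\lambda_2$, ratio $L/L_{\min}=(M_S+2\lambda_2)/(1+2\lambda_2)$, and number of blocks $|S|$ into the strongly convex case of their Theorem~3.9 yields precisely the stated contraction factor, and translating back through $F(\beta^K)-F(B)=g(\beta^K_S)-g(B_S)$ gives the claim for all $K\ge N$.

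I expect the main obstacle to be the reduction step rather than the rate itself, specifically the rigorous argument that after stabilization the CD and spacer updates are \emph{exactly} unconstrained coordinate minimizations of the convex $g$ (the nonzero branch of $T$ always being selected, with no spurious activation of the hard threshold), and the careful bookkeeping that reconciles the extra coordinate sweeps introduced by spacer steps with the ``one full cycle'' indexing of $\beta^K$ used in the theorem. A secondary point requiring care is matching the generic constants of~\cite{BeckConvergence} to the unit-norm / $\lambda_2$-regularized structure here so that the contraction factor comes out verbatim as stated, together with verifying strict positivity of $m_S+2\lambda_2$ across the $(L_0)$, $(L_0L_1)$, and $(L_0L_2)$ cases.
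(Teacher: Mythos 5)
Your overall reduction (freeze the support via part~1 of Theorem~\ref{theorem:convergence}, cancel the $\lambda_0|S|$ constant, identify the constants $\sigma=m_S+2\lambda_2$, $L=M_S+2\lambda_2$, $L_{\min}=L_{\max}=1+2\lambda_2$, $|S|$ blocks) is the same as the paper's, and your constants reproduce the stated contraction factor. But there is a genuine gap at the decisive step: you hand \cite{BeckConvergence}, Theorem~3.9 the \emph{composite} objective $g=h+\lambda_1\|\cdot\|_1$ and describe that theorem as covering ``smooth strongly convex plus separable convex'' functions. It does not: the cited result is a rate for cyclic (block) coordinate gradient descent applied to \emph{continuously differentiable} strongly convex functions, and its proof is built on gradient-based sufficient-decrease and error bounds that break for a nonsmooth term. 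Your branch-selection argument correctly shows that, after stabilization, each update is the exact one-dimensional minimizer of the composite $g$ --- but that observation does not bring the composite problem inside the scope of the cited theorem. So for the $(L_0L_1)$ problem (any $\lambda_1>0$) the invocation fails as stated; only for $\lambda_1=0$ (the $(L_0)$ and $(L_0L_2)$ problems, where $\psi\equiv 0$) is your argument complete.

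The missing idea --- and what the paper's proof actually does --- is \emph{sign stabilization followed by linearization of the $L_1$ term}. Since $\beta^K\to B$ (part~2 of Theorem~\ref{theorem:convergence}) and every $i\in S$ satisfies $|B_i|\geq\sqrt{2\lambda_0/(1+2\lambda_2)}>0$, there is an $N$ such that $\sign(\beta^K_i)=\sign(B_i)$ for all $K\geq N$ and $i\in S$. On this tail of the trajectory one may replace $\lambda_1\|\beta_S\|_1$ by the linear function $\lambda_1\sum_{i\in S,\,B_i>0}\beta_i-\lambda_1\sum_{i\in S,\,B_i<0}\beta_i$, which agrees with the $L_1$ term at every iterate and at $B$, and whose unique minimizer (jointly with the quadratic part) is still $B_S$ by stationarity of $B$. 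The resulting objective is quadratic-plus-linear, hence continuously differentiable, strongly convex with parameter $m_S+2\lambda_2$ and gradient Lipschitz constant $M_S+2\lambda_2$; moreover the algorithm's soft-threshold updates (and the spacer sweeps) coincide with exact coordinate minimization of this smooth function, i.e., coordinate gradient steps with step size $1/(1+2\lambda_2)$. Only after this linearization does Theorem~3.9 of \cite{BeckConvergence} apply verbatim with your constants. You were one observation away: you already established that the hard threshold never activates near $B$; you additionally need that the \emph{signs} freeze, and then must use that fact to smooth out the $L_1$ term before citing the rate theorem. The rest of your proposal (spacer-step bookkeeping, positivity of $m_S+2\lambda_2$ via Assumption~\ref{assumption:Xlinindp} and Lemma~\ref{lemma:bdsuppsize}) is consistent with the paper.
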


\subsection{Local Combinatorial Optimization Algorithms}\label{sec:local-algos}
We present algorithms to obtain solutions belonging to the classes of Swap Inescapable minima introduced in Section \ref{section:swapinescapable}.

\subsubsection{Algorithms for PSI minima} \label{section:psialgorithm}
We introduce an iterative algorithm that leads to a PSI$(k)$ minimum. In the $\ell$th iteration, the algorithm performs two steps: (1) runs Algorithm \ref{alg:CD} to get a CW minimum $\beta^{\ell}$, and (2) searches for a ``descent move'' by solving the following combinatorial optimization problem:
\begin{equation}
	\label{eq:psi}
\min_{\beta, S_1, S_2} ~~ F(\beta^{\ell} - U^{S_1} \beta^{\ell} + U^{S_2} \beta) ~~~ \sbt~~~~ S_1 \subseteq S,\ S_2 \subseteq S^{c}, |S_1| \leq k,\ |S_2| \leq k,
\end{equation}
where $S = \text{Supp}(\beta^{\ell})$. Note that if there is a feasible solution $\hat{\beta}$ to Problem~\eqref{eq:psi} satisfying $F(\hat{\beta}) < F(\beta^{\ell})$, then $\hat{\beta}$ may not be a CW minimum. In this case, Algorithm~1 can be initialized with $\hat{\beta}$ to obtain a better solution for Problem~\eqref{problem:main}.
Otherwise, if such a $\hat{\beta}$ does not exist, then $\beta^{\ell}$ is a PSI$(k)$ minimum (by Definition \ref{def:psi}). Algorithm~\ref{alg:mipcd} (aka CD-PSI$(k)$) summarizes the  algorithm.
\begin{algorithm}[htbp]
	\DontPrintSemicolon
	\SetKwInOut{Input}{Input}\SetKwInOut{Output}{Output}
	$\hat{\beta}^0 \gets \beta^{0}$ \\
	\For{$\ell = 0, 1, \dots$}
		{	$\beta^{\ell+1} \gets$ Output of Algorithm \ref{alg:CD} initialized with $\hat{\beta}^{\ell}$  \\
			\If {Problem~(\ref{eq:psi}) has a feasible solution $\hat{\beta}$ satisfying $F(\hat{\beta}) < F(\beta^{\ell+1})$ }{$\hat{\beta}^{\ell+1} \gets \hat{\beta}$}
			\Else{\textbf{Terminate}}
		}
	\caption{CD-PSI($k$)}
	\label{alg:mipcd}
\end{algorithm}

\smallskip

\begin{theorem} \label{theorem:cd-kswaps}
	Let $\{ \beta^{\ell} \}$ be the sequence of iterates generated by Algorithm \ref{alg:mipcd}. 
    For the $(L_0)$ and $(L_0 L_1)$ problems, suppose that Assumptions \ref{assumption:Xlinindp} and~2 hold. Then, Algorithm \ref{alg:mipcd} terminates in a finite number of iterations and the output is a PSI$(k)$ minimum.
\end{theorem}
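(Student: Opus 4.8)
The plan is to separate the two assertions: that the output is a PSI$(k)$ minimum follows almost directly from the stopping rule, whereas finite termination requires a strict-descent argument over a finite collection of attainable objective values.

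I would first dispose of the PSI$(k)$ claim. Every iterate $\beta^{\ell+1}$ is produced by running Algorithm~\ref{alg:CD} from $\hat\beta^\ell$, so by Theorem~\ref{theorem:convergence} it is a CW minimum, and hence (comparing \eqref{lemma:CW-eqn} with \eqref{eq:stationary}) a stationary solution. The algorithm terminates at iteration $\ell$ exactly when Problem~\eqref{eq:psi} has no feasible point $\hat\beta$ with $F(\hat\beta) < F(\beta^{\ell+1})$. Because $F(\beta^{\ell+1} - U^{S_1}\beta^{\ell+1} + U^{S_2}\beta)$ depends on $\beta$ only through $\beta_{S_2}$, this stopping condition reads $F(\beta^{\ell+1}) \le \min_{\beta_{S_2}} F(\beta^{\ell+1} - U^{S_1}\beta^{\ell+1} + U^{S_2}\beta)$ for all admissible $S_1 \subseteq S$, $S_2 \subseteq S^c$ with $|S_1|,|S_2|\le k$. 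Together with stationarity this is precisely Definition~\ref{def:psi}, so the terminal iterate is a PSI$(k)$ minimum.

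For finite termination I would exhibit a strictly decreasing sequence of values drawn from a finite set. Whenever the algorithm does not stop, the swap step yields $\hat\beta^{\ell+1}$ with $F(\hat\beta^{\ell+1}) < F(\beta^{\ell+1})$, and since Algorithm~\ref{alg:CD} is a descent method (Lemma~\ref{lemma:descent}) the next CD output obeys $F(\beta^{\ell+2}) \le F(\hat\beta^{\ell+1}) < F(\beta^{\ell+1})$; hence $\{F(\beta^\ell)\}_{\ell\ge 1}$ is strictly decreasing. The same chain, started from $F(\hat\beta^0)=F(\beta^0)$, gives $F(\hat\beta^\ell)\le F(\beta^0)$ for all $\ell$; for the $(L_0)$ and $(L_0L_1)$ problems this shows each $\hat\beta^\ell$ satisfies the bound of Assumption~\ref{assumption:initial}, so Theorem~\ref{theorem:convergence} (and Lemma~\ref{lemma:bdsuppsize}) legitimately applies at every restart and each CD output is a genuine CW minimum of support size at most $\min\{n,p\}$. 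It then remains to bound the number of attainable values: if $\beta^\star$ is a CW minimum with support $S$, Lemma~\ref{lemma:stationarity} gives $\nabla_S f(\beta^\star)=0$, so $\beta^\star_S$ globally minimizes the convex function $g_S(\beta_S):=f(U^S\beta)$ (which depends on $\beta$ only through $\beta_S$); hence $f(\beta^\star)=\min_{\beta_S} g_S(\beta_S)$, and therefore $F(\beta^\star)=f(\beta^\star)+\lambda_0|S|$, is determined by $S$ alone. Since there are only finitely many supports, the values $\{F(\beta^{\ell+1})\}$ lie in a finite set; a strictly decreasing sequence in a finite set must be finite, so the algorithm stops after finitely many iterations, and by the first part its output is a PSI$(k)$ minimum.

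The main obstacle I anticipate is the finiteness bookkeeping tying these pieces together: one must verify that the monotone-descent chain keeps every restart inside the regime of Assumption~\ref{assumption:initial}, so that Theorem~\ref{theorem:convergence} continues to certify each CD output as a well-defined CW minimum, and then argue that the objective value of a CW minimum is pinned down by its support rather than by its coefficient values. Without the latter, strict descent alone would be compatible with an infinite sequence of distinct objective values converging to a limit, and finiteness would fail.
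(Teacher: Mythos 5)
Your proof is correct and follows essentially the same route as the paper's: strict descent across iterations, the key observation that a CW minimum's objective value is pinned down by its support alone (via convexity of the restricted problem $\min_{\beta_S} f(\beta_S)$), finiteness of the set of supports, and the stopping rule matching Definition~\ref{def:psi}. The only difference is that you additionally verify two details the paper leaves implicit---that Assumption~\ref{assumption:initial} propagates along the descent chain to every restart of Algorithm~\ref{alg:CD}, and that the terminal iterate is stationary as required by Definition~\ref{def:psi}---which tightens rather than changes the argument.
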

As indicated in Theorem \ref{theorem:cd-kswaps}, Algorithm~\ref{alg:mipcd} terminates in a finite number of iterations (which depends upon the number of swaps that improve the objective value).
In our experiments, Algorithm~\ref{alg:mipcd} typically terminates in less than 20 iterations. Each iteration involves a call to Algorithm 1 (which is quite fast) followed by solving a feasibility problem (i.e., finding $\hat{\beta}$ as described in Algorithm~\ref{alg:mipcd}). As we discuss next, for moderate $p$ (e.g., $10^3$ to $10^4$), this feasibility problem can be handled efficiently using MIO solvers. When $k=1$, we propose a specialized algorithm for solving this feasibility problem that can easily scale to problems with $p=10^6$.

\pparagraph{MIO formulation for Problem~\eqref{eq:psi}:}
Problem~(\ref{eq:psi}) admits an MIO formulation given by:
\begin{subequations}
\label{mio:psi}
\begin{align}
   	 \min_{\theta, \beta, z}~~~~& f(\theta) + \lambda_0 \sum\limits_{i \in [p]} z_i \label{eq:psiF}\\
\sbt~~~    & \theta = \beta^{\ell} - \sum_{i \in S} e_i \beta^{\ell}_i (1 - z_i) + \sum_{i \in S^{C}} e_i \beta_i \label{eq:psitheta} \\
    & -{\mathcal M} z_i \leq \beta_i \leq {\mathcal M} z_i, \ \  \ \forall i \in S^{c} \label{eq:psibetabd}\\
    & \sum_{i \in S^{c}} z_i \leq k \label{eq:psilessk}\\
    & \sum_{i \in S} z_i \geq  |S| - k \label{eq:psigreaterk}\\
    & \beta_i \in \mathbb{R}, \ \  \ \forall i \in S^{c} \label{eq:psibeta}\\
    & z_i \in \{0,1\}, \ \  \ \forall i \in [p], \label{eq:psiz}
\end{align}
\end{subequations}
where the optimization variables are $\theta \in \mathbb{R}^{p}$,
$\beta_{i}, i \in S^c$, and $z \in \{0, 1\}^{p}$.
In formulation~\eqref{mio:psi}, $S = \text{Supp}(\beta^{\ell})$, where $\beta^\ell$ is fixed, and ${\mathcal M}$ is a Big-M parameter (a-priori specified) controlling the 
$L_{\infty}$-norm of $\beta_{S^c}$. Any sufficiently large value of $\mathcal{ M }$ will lead to a solution for Problem~\eqref{eq:psi}; however, a tight choice for $\mathcal{ M }$ affects the run time of the MIO solver---see~\cite{bertsimas2015best} for additional details.
We note that the $\|\theta\|_1$ term included in $f(\theta)$ 
can be expressed via linear inequalities using auxiliary variables. Thus, Problem~\eqref{mio:psi} is a mixed integer quadratic optimization (MIQO) problem.

We now explain the constraints in Problem~\eqref{mio:psi} and how they relate to Problem~\eqref{eq:psi}.
To this end, let $S_1$ and $S_2$ be subsets defined in \eqref{eq:psi}. 
Let $\theta = \beta^{\ell} - U^{S_1} \beta^{\ell} + U^{S_2} \beta$ and this relation is expressed in~\eqref{eq:psitheta}.
Let us consider any binary variable $z_{i}$ where $i \in S$. 
If $z_{i} = 0$ then $\beta^{\ell}_{i}$ is removed from $S$, and we have $\theta_{i} = 0$ (see~\eqref{eq:psitheta}).
If $z_{i}=1$, then $\beta^{\ell}_{i}$ is not removed from $\theta$, and we have
$\theta_{i} =\beta^{\ell}_{i} \neq 0$ (see~\eqref{eq:psitheta}).
Note that $|S_{1}| = \sum_{i \in S} (1-z_{i}) = |S| - \sum_{i \in S} z_{i}.$ The condition $|S_{1}| \leq k$, is thus encoded in the constraint $\sum_{i \in S} z_{i} \geq |S| - k$ in~\eqref{eq:psigreaterk}.
Thus we have that $ \| \theta_{S} \|_{0} = \sum_{i \in S} z_{i}$.

Now consider any binary variable $z_{i}$ where $i \in S^c$. If $z_{i}=1$, then by~\eqref{eq:psibetabd} we observe that $\beta_{i}$ is free to vary in $[-\mathcal M, \mathcal M]$.
This implies that $\theta_{i} = \beta_{i}$. If $z_{i} =0$ then $\theta_{i}=\beta_{i}=0$. Note $\sum_{i \in S^c} z_{i} = |S_{2}|$, and the constraint $|S_{2}| \leq k$ is expressed via $\sum_{i \in S^c} z_{i} \leq k$ in~\eqref{eq:psilessk}.
It also follows that $ \| \theta_{S^c} \|_{0} = \sum_{i \in S^c} z_{i}$.
Finally, we note that the function appearing in the objective~\eqref{eq:psiF} is $F(\theta)$, since $\lambda_0 \sum_{i \in [p]} z_{i} = \lambda_0 \| \theta \|_0$. 

\smallskip

\begin{remark} \label{remark:psi}
Problem~\eqref{mio:psi} has a smaller (combinatorial) search space compared to an MIO formulation for the full Problem~\eqref{problem:main}---solving~\eqref{mio:psi} for small values of $k$ is usually much faster than Problem~\eqref{problem:main}.
Furthermore, an MIO framework can quickly deliver a feasible solution to Problem~\eqref{mio:psi} with a smaller objective than the current solution---this is usually much faster than establishing optimality via dual bounds. Note that the MIO-framework can also certify (via dual bounds) if there is no feasible solution with a strictly smaller objective value.
\end{remark}

Section~\ref{section:experiments} presents examples where Problem~\eqref{mio:psi} leads to higher quality solutions---from both the optimization and statistical performance viewpoints. We now present an efficient algorithm for solving the special case of Problem~\eqref{eq:psi} with $k=1$.

\pparagraph{An efficient algorithm for computing PSI($1$) minima.}
Subroutine 2 presents an algorithm for Problem~\eqref{eq:psi} with $k=1$. That is, we search for a feasible solution $\hat{\beta}$ of Problem~(\ref{eq:psi}) satisfying $F(\hat{\beta}) < F(\beta^{\ell})$.
\setcounter{algocf}{2}
\begin{algorithm}[!htbp]
	\DontPrintSemicolon
	\SetAlgorithmName{Subroutine}{Subroutine}{Subroutine}
	\SetKwInOut{Input}{Input}\SetKwInOut{Output}{Output}
	$S \gets \text{Supp}(\beta^{\ell})$ \\
	\For{$i \in S$}
	{
		\For{$j \in S^c$}
		{	\vspace{-0.5cm}
		\begin{flalign}
			& v^{*}_j \gets \argmin\limits_{v_j \in \mathbb{R}} F(\beta^{\ell} - e_i  \beta^{\ell}_i + e_j v_j) && \label{code::argminF} \\
			& F^{*}_j \gets F(\beta^{\ell} - e_i  \beta^{\ell}_i + e_j v^{*}_j) && \label{code:fstarj}
		\end{flalign}
		}
	\vspace{-0.8cm}
	\begin{flalign} \vartheta \gets \argmin\limits_{j \in S^c} F^{*}_j && \label{code:kgets} \end{flalign}  \\
	\vspace{-0.4cm}
	\If{$F^{*}_\vartheta < F(\beta^{\ell})$}
		{\vspace{-0.5cm} 
		\begin{flalign} 
			& \hat{\beta} \gets \beta^{\ell} - e_i  \beta^{\ell}_i + e_\vartheta v^{*}_\vartheta && \label{code:betaupated}  \\
			& \textbf{Terminate} \nonumber && 
		\end{flalign} \vspace{-0.8cm} 
		}
	}
	\caption*{Vanilla Implementation of Problem~\eqref{eq:psi} with $k=1$.}
	\label{subroutine:findswap}
\end{algorithm}
The two for loops in Subroutine 2 can run for a total of $(p - \|\beta^{\ell}\|_0) \|\beta^{\ell} \|_0$ iterations, where every iteration requires $O(n)$ operations to perform the minimization in (\ref{code::argminF}) and evaluate the new objective in (\ref{code:fstarj}). Therefore, Subroutine 2 entails an overall cost of $O \Big( n (p - \|\beta^{\ell} \|_0) \|\beta^{\ell} \|_0 \Big)$. However, we show below that a careful implementation can reduce the cost by a factor of $n$; leading to a cost of $O \Big((p - \|\beta^{\ell} \|_0) \|\beta^{\ell} \|_0 \Big)$-many operations.

A solution $v^{*}_j$ of Problem (\ref{code::argminF}) is given by
\begin{align} \label{eq:vstarj}
v^{*}_j = 	\begin{cases}
\sign(\bar{\beta}_j) \frac{|\bar{\beta}_j| - \lambda_1}{1+2\lambda_2} & \text{ if~~~ } \frac{|\bar{\beta}_j| - \lambda_1}{1+2\lambda_2} \geq \sqrt{2\lambda_0 \over 1+2 \lambda_{2}} \\
0 & ~\text{otherwise}
\end{cases}
\end{align}
where
\begin{equation}\label{resid-1-1}
\bar{\beta}_j  = \langle r + X_i \beta^{\ell}_i , X_j\rangle = \langle r , X_j\rangle + \langle X_i , X_j\rangle \beta^{\ell}_i,
\end{equation}
and $r = y - X \beta^{\ell}$.  We note that in 
Algorithm~\ref{alg:mipcd}, solving (\ref{eq:psi}) is directly preceded by a call to Algorithm \ref{alg:CD}. The quantities $\langle r , X_j\rangle$ and $ \langle X_i , X_j \rangle$ appearing on the right-hand side of~\eqref{resid-1-1} can be stored during the call to Algorithm~1 (these two quantities are computed by CD as part of the `covariance updates'---see Section \ref{section:RegPath} for details).
By reusing these two stored quantities, we can compute every $\bar{\beta}_j$ (and consequently $v^{*}_j$) in $O(1)$ arithmetic operations.

Furthermore, the following equivalent representations hold:
\begin{align}
& \argmin_{j \in S^c} F^{*}_j \iff \argmax_{j \in S^c} |v^{*}_j| \\
& F^{*}_\vartheta < F(\beta^{l}) \iff |v^{*}_\vartheta| > |\beta^{l}_i|.
\end{align}
Thus, we can avoid the computation of the objective $F^{*}_j$ in (\ref{code:fstarj}) and replace (\ref{code:kgets}) with $\vartheta \gets \argmax_{j \in S^c} |v^{*}_j|$. Furthermore, we can replace $F^{*}_\vartheta < F(\beta^{\ell})$ (before equation (\ref{code:betaupated})) with $|v^{*}_\vartheta| > |\beta^{l}_i|$. We summarize these changes in Subroutine~3, which is the efficient counterpart of Subroutine 2.
\setcounter{algocf}{3}
\begin{algorithm}[!htbp]
	\DontPrintSemicolon
	\SetAlgorithmName{Subroutine}{Subroutine}{Subroutine}
	\SetKwInOut{Input}{Input}\SetKwInOut{Output}{Output}
	$S \gets \text{Supp}(\beta^{\ell})$ \\
	\For{$i \in S$}
	{
		\For{$j \in S^c$}
		{	
        Compute $v^{*}_j$ in $O(1)$ using \eqref{eq:vstarj} \\
        $ \ \ $
		}
	\vspace{-0.8cm}
	\begin{flalign*} \vartheta \gets \argmax_{j \in S^c} |v^{*}_j| && \end{flalign*}  \\
	\vspace{-0.4cm}
	\If{$|v^{*}_\vartheta| > |\beta^{\ell}_i|$}
		{\vspace{-0.5cm} 
		\begin{flalign*} 
			& \hat{\beta} \gets \beta^{\ell} - e_i  \beta^{\ell}_i + e_\vartheta v^{*}_\vartheta &&   \\
			& \textbf{Terminate} \nonumber && 
		\end{flalign*} \vspace{-0.8cm} 
		}
	}
	\caption*{Efficient Implementation of Problem~\eqref{eq:psi} with $k=1$.}
	\label{subroutine:findswapefficient}
\end{algorithm}
Note that Subroutine~3 has a cost of $O \Big((p - \|\beta^{l}\|_0) \|\beta^{l}\|_0 \Big)$ operations.

\smallskip

\begin{remark}
Since CD-PSI($1$) (Algorithm~\ref{alg:mipcd} with $k=1$) is computationally efficient, in Algorithm~\ref{alg:mipcd} (with $k>1$),
CD-PSI($1$) may be used to replace Algorithm~\ref{alg:CD}. 
In our numerical experiments, this is found to work well in terms of lower run times and also in obtaining higher-quality solutions (in terms of objective values). This modification also guarantees convergence to a PSI($k$) minimum (as the proof of Theorem \ref{theorem:cd-kswaps} still applies to this modified version).
\end{remark}

\subsubsection{Algorithm for FSI minima} \label{section:fsialgorithm}
To obtain a FSI($k$) minimum, 
Problem~(\ref{eq:psi}) 
needs to be modified---we replace optimization w.r.t. the variable $U^{S_2} \beta$ by that of $U^{(S \setminus S_1) \cup S_2} \beta$. This leads to the following problem:
\begin{equation}	\label{eq:fsi}
\begin{aligned}
	\min_{\beta, S_1, S_2}~~~ F(\beta^{\ell} - U^{S_1} \beta^{\ell} + U^{(S \setminus S_1) \cup S_2} \beta) ~~~\sbt~~~~ 
    \ S_1 \subseteq S,\ S_2 \subseteq S^{c},  |S_1| \leq k,\ |S_2| \leq k,
\end{aligned}
\end{equation}
where $S = \text{Supp}(\beta^{l})$. Similarly, Algorithm~\ref{alg:mipcd} gets modified by considering Problem~\eqref{eq:fsi} instead of Problem~\eqref{eq:psi}. By the same argument used in the proof of Theorem \ref{theorem:cd-kswaps}, this modification guarantees that Algorithm \ref{alg:mipcd} converges in a finite number of iterations to a FSI($k$) minimum. 

We present an MIO formulation for Problem~(\ref{eq:fsi}). We write $\theta = \beta^{\ell} - U^{S_1} \beta^{\ell} + U^{(S \setminus S_1) \cup S_2} \beta$ and use a binary variable $w_i, i \in S$ to indicate whether $i \in S_1$ or not: we set $w_{i} = 0$ iff $i \in S_{1}$. We use another binary variable 
$z_i, i \in [p]$ to indicate the number of nonzeros in $\theta$, i.e., 
$z_{i}=0 \Rightarrow \theta_i=0$. This leads to the following MIO problem:
\begin{subequations}
\label{mio:fsi}
\begin{align}
   	 \min_{\theta, w, z}~~~~& f(\theta) + \lambda_0 \sum\limits_{i \in [p]} z_i \label{eq:fsiF}\\
    & -{\mathcal M} z_i \leq \theta_i \leq {\mathcal M} z_i, \ \  \ \forall i \in [p] \label{eq:fsibetabd}\\
    & z_i \leq w_i,  \ \  \ \forall i \in S \label{eq:fsiwlz}\\
    & \sum_{i \in S^{c}} z_i \leq k \label{eq:fsilessk}\\
    & \sum_{i \in S} w_i \geq  |S| - k \label{eq:fsigreaterk}\\
    & \theta_i \in \mathbb{R}, \ \  \ \forall i \in [p] \label{eq:fsibeta}\\
    & z_i \in \{0,1\}, \ \  \ \forall i \in [p] \label{eq:fsiz} \\
    & w_i \in \{0,1\} \ \  \ \forall i \in S. \label{eq:fsiw}
\end{align}
\end{subequations}
In~\eqref{eq:fsibetabd}, for every $i \in [p]$, the binary variable $z_i=1$ if $i \in \text{Supp}(\beta)$, and $\mathcal{M}$ is a sufficiently large constant (similar to that in \eqref{mio:psi}).
The second term in the objective~\eqref{eq:fsiF} stands for $\lambda_0\sum_{i} z_{i}=\lambda_0 \| \theta \|_0$. In \eqref{eq:fsiwlz} we enforce the condition that if $w_{i}=0$ then $z_{i}=0$ implying that $\theta_{i}=0$; If $w_{i}=1$ then 
$i \notin S_{1}$. Coordinates in $S \setminus S_1$ (i.e., those with $w_i = 1$) are free to be inside or outside of $\text{Supp}(\theta)$. 
We enforce $|S_1| \leq k$ and $|S_2| \leq k$ via \eqref{eq:fsigreaterk} and \eqref{eq:fsilessk}, respectively. 
\smallskip
\begin{remark}
Formulation~\eqref{mio:fsi} has a larger search space compared to formulation \eqref{mio:psi} of PSI minima, due to the additional number of continuous variables. While this leads to 
increased run times compared to Problem~\eqref{mio:psi}, it can still be solved faster than an MIO formulation for the full Problem~\eqref{problem:main} (for the same reasons as in Remark \ref{remark:psi}).
\end{remark}
In Section~\ref{swap-inescp-minima}, we present experiments where we compare the quality of FSI($k$) minima, for different values of $k$, to the other classes of minima.
\section{Efficient Computation of the Regularization Path}
\label{section:RegPath}
We designed \texttt{L0Learn}\footnote{Available on CRAN at \url{https://CRAN.R-project.org/package=L0Learn} and on github at \url{https://github.com/hazimehh/L0Learn}}: an extensible C++ toolkit with an R interface that implements most of the algorithms discussed in this paper. 
{Our toolkit achieves lower running times\footnote{Problem~\eqref{problem:main} usually leads to solutions with fewer nonzeros compared to Lasso and MCP penalized regression. This also contributes to reduced run times.} compared to other popular sparse learning toolkits (e.g., \texttt{glmnet} and \texttt{ncvreg}) by utilizing a series of computational tricks and heuristics. These include an adaptive grid of tuning parameters, continuation, active-set updates, greedy cyclic ordering of coordinates, correlation screening, and a careful accounting of floating point operations---some of these heuristics (as specified below) appear in prior work for deriving highly efficient algorithms for the Lasso (e.g., \texttt{glmnet})}. 
Below, we provide a detailed account of the aforementioned strategies that are also found to result in high quality solutions.

\pparagraph{Adaptive Selection of Tuning Parameters:} We use continuation on a grid of $\lambda_0$ values: $\lambda_{0}^{1} > \lambda_{0}^{2} > \dots > \lambda_{0}^{m}$ and use the solution obtained from $\lambda_0^{k}$ as a warm start for $\lambda_0^{k+1}$. The choice of $\lambda_0^{i}$'s requires care so we present a new method to select this sequence. If two successive values of the $\lambda_0$ sequence are far apart, one might miss good solutions. However, if these successive values are too close, the corresponding solutions will be identical.
To avoid this problem, we derive conditions on the choice of $\lambda_0$ values which ensure that Algorithm~1 leads to different solutions. To this end, we present the following lemma, wherein we assume that $\lambda_1,\lambda_2$ are a-priori fixed.
\smallskip
\begin{lemma}
	\label{lemma:nextlambda}
Suppose $\beta^{(i)}$ is the output of Algorithm~1 with $\lambda_0 = \lambda_0^i$. Let $S = \text{Supp}(\beta^{(i)})$, $ r = y -  X \beta^{(i)}$ denote the residual, and
  \begin{align} \label{eq:Mi}
   M^{i} =  \frac{1}{2(1+2\lambda_2)} \max_{j \in S^c} \Big ( |\langle r , X_j \rangle|  - \lambda_1 \Big)^2.
  \end{align}
  Then, running Algorithm \ref{alg:CD} for $\lambda_0^{i+1}<\lambda_0^{i}$ initialized at $\beta^{(i)}$ leads to a solution $\beta^{(i+1)}$ 
  satisfying: $\beta^{(i+1)} \neq \beta^{(i)}$ if 
  $\lambda^{i+1}_0 < M^{i}$, and $\beta^{(i+1)} = \beta^{(i)}$ if $\lambda^{i+1}_0 \in (M^{i} , \lambda^i_0 ]$.
\end{lemma}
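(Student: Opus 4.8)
The plan is to reduce the claim to a monotonicity analysis of the coordinate-wise optimality conditions in Lemma~\ref{lemma:CW}, exploiting the fact that $\beta^{(i)}$ is a CW minimum at $\lambda_0 = \lambda_0^i$ and asking precisely when it remains one at the smaller value $\lambda_0^{i+1}$. First I would record the key observation that for any $j \in S^c$ we have $\widetilde{\beta}^{(i)}_j = \langle y - X\beta^{(i)}, X_j\rangle = \langle r, X_j\rangle$, since the $j$th coordinate of $\beta^{(i)}$ is zero. With this, the characterization \eqref{lemma:CW-eqn} splits into two families: the on-support conditions for $i \in S$ (an equation fixing $\beta^{(i)}_i$ that is independent of $\lambda_0$, together with the magnitude bound $|\beta^{(i)}_i| \ge \sqrt{2\lambda_0/(1+2\lambda_2)}$), and the off-support conditions $\tfrac{|\langle r, X_j\rangle| - \lambda_1}{1+2\lambda_2} \le \sqrt{2\lambda_0/(1+2\lambda_2)}$ for $j \in S^c$.

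Next I would track how each family behaves as $\lambda_0$ decreases from $\lambda_0^i$ to $\lambda_0^{i+1}$. The on-support magnitude bounds only become easier to satisfy (their right-hand side shrinks), so they continue to hold, and the defining equations are unaffected; hence the only conditions that can break are the off-support ones, which tighten. For an index $j$ with $|\langle r, X_j\rangle| > \lambda_1$ the off-support condition is equivalent, after squaring, to $\tfrac{(|\langle r, X_j\rangle| - \lambda_1)^2}{2(1+2\lambda_2)} \le \lambda_0^{i+1}$ (indices with $|\langle r, X_j\rangle| \le \lambda_1$ satisfy it vacuously). Maximizing the left-hand side over $j \in S^c$ returns exactly $M^i$, so $\beta^{(i)}$ remains a CW minimum at $\lambda_0^{i+1}$ precisely when $\lambda_0^{i+1} \ge M^i$, and fails to be one when $\lambda_0^{i+1} < M^i$.

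With this dichotomy in hand I would finish by invoking the behavior of Algorithm~\ref{alg:CD}. When $\lambda_0^{i+1} \in (M^i, \lambda_0^i]$, all off-support inequalities hold \emph{strictly}, so by the operator $T$ in \eqref{eq:thresholding} each coordinate update returns the current value ($0$ off the support, and the nonzero thresholded value on it); moreover, since $\beta^{(i)}$ is a CW minimum it is stationary, so Lemma~\ref{lemma:stationarity} gives $\nabla_S f(\beta^{(i)}) = 0$ and the spacer step (one CD pass on $f$ over $S$) also leaves $\beta^{(i)}$ fixed. Thus $\beta^{(i)}$ is a fixed point of every step of Algorithm~\ref{alg:CD}, the algorithm terminates immediately, and $\beta^{(i+1)} = \beta^{(i)}$. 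When $\lambda_0^{i+1} < M^i$, $\beta^{(i)}$ is not a CW minimum at $\lambda_0^{i+1}$; since Algorithm~\ref{alg:CD} converges to a CW minimum (Theorem~\ref{theorem:convergence}) and $\beta^{(i)}$ is excluded, the returned $\beta^{(i+1)}$ must differ from $\beta^{(i)}$.

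The main point requiring care is the boundary behavior and the tie-breaking built into $T$: the interval is taken open at $M^i$ precisely because at $\lambda_0^{i+1} = M^i$ the maximizing off-support coordinate meets the threshold with equality, and $T$ — unlike $\widetilde{T}$ — resolves this tie to the \emph{nonzero} branch, which would enlarge the support and move the iterate. A secondary subtlety is the role of $\lambda_1$: the squaring step above is valid only at indices with $|\langle r, X_j\rangle| > \lambda_1$, so for the direction $\lambda_0^{i+1} < M^i$ one should check that the index attaining $M^i$ is genuinely binding (automatic when $\lambda_1 = 0$, and otherwise accommodated by reading $(|\langle r, X_j\rangle| - \lambda_1)$ as its positive part in the definition of $M^i$).
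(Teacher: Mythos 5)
Your proof is correct and takes essentially the same route as the paper's: both check the CW-minimum characterization of Lemma~\ref{lemma:CW} at the new parameter (using $\widetilde{\beta}^{(i)}_j = \langle r, X_j\rangle$ for $j \in S^c$), conclude $\beta^{(i+1)} \neq \beta^{(i)}$ from Theorem~\ref{theorem:convergence} when $\lambda_0^{i+1} < M^i$, and show that $\beta^{(i)}$ remains a CW minimum, hence a fixed point of Algorithm~\ref{alg:CD}, when $\lambda_0^{i+1} \in (M^i, \lambda_0^i]$. Your closing caveat about $\lambda_1$ is in fact sharper than the paper's own argument, which works with $\bigl|\,|\langle r, X_j\rangle| - \lambda_1\bigr|$ and would wrongly flag an index with $|\langle r, X_j\rangle| < \lambda_1$ as violating the CW conditions; reading $|\langle r, X_j\rangle| - \lambda_1$ as its positive part in \eqref{eq:Mi}, as you propose, is the correct repair (and the issue is vacuous when $\lambda_1 = 0$).
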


Lemma~\ref{lemma:nextlambda} suggests a simple scheme to 
compute a sequence $\{\lambda_{0}^{i}\}$ that avoids duplicate solutions. Suppose we have computed the regularization path up to $\lambda_0 = \lambda_0^{i}$, then
$\lambda_0^{i+1}$ can be computed as 
$\lambda_0^{i+1} = \alpha M^i$, where $\alpha$ is a fixed scalar in $(0,1)$. Moreover, we note that $M^i$ (defined in (\ref{eq:Mi})) can be computed without explicitly calculating $\langle r , X_i \rangle$ for every $i \in S^c$, as these dot products can be maintained in memory while running Algorithm \ref{alg:CD} with $\lambda_0=\lambda_0^{i}$. Therefore, computing $M^i$, and consequently $\lambda_0^{i+1}$, requires only $O(|S^c|)$ operations.

\pparagraph{(Partially) Greedy Cyclic Order:} Suppose Algorithm \ref{alg:CD} is initialized with a solution $\beta^{0}$ and let $r^0 = y - X\beta^{0}$. Before running Algorithm \ref{alg:CD}, we reorder the coordinates based on sorting the quantities $|\langle r^0 , X_i \rangle|$ for $i \in [p]$ in descending order\footnote{Since the columns of $X$ have unit $L_2$-norm, updating index $\argmax_{i} |\langle r^0 , X_i \rangle|$ will lead to the maximal decrease in the objective function.}. {In practice, we perform \textsl{partial sorting}, in which only the top $t$ (e.g., $t=5000$ and $p$ is much larger) coordinates are sorted, while the rest maintain their initial order. This is typically faster and equally effective compared to sorting all coordinates.} 
Note that this ordering is performed \emph{once} before the start of Algorithm~1---this is different from greedy CD that finds the maximal correlation at every coordinate update. 
Our experiments in Section \ref{section:cdcomparison} indicate that (partially) greedy cyclic order performs significantly better than a vanilla cyclic order or random order. 

\pparagraph{Correlation Screening:}
When using continuation, we perform a screening method inspired by ~\cite{tibshirani2012strong}\footnote{Our approach differs from~\cite{tibshirani2012strong} who derive screening rules for convex problems.}. We restrict the updates of Algorithm \ref{alg:CD} to the support of the warm start in addition to a small portion (e.g., top $10^3$) of other coordinates that are highly correlated with the current residuals---these coordinates are readily available as a byproduct of the (partially) greedy cyclic ordering rule, described above. After convergence on the screened support, we check if any coordinate from outside the support violates the conditions of a CW minimum and rerun the algorithm if needed. Typically, the solution obtained from the screened set turns out to be a CW minimum, and only one pass is done over all the coordinates. 

\pparagraph{Active Set Updates:} 
As in prior work~\cite{glmnet}, active set methods are found to be very useful in our context as well.
Empirically, the iterates generated by Algorithm \ref{alg:CD} can typically achieve support stabilization in less than $10$ full cycles\footnote{Recall, one full cycle refers to updating all the $p$ coordinates in a cyclic order.}. This is further supported by Theorem \ref{theorem:convergence}, which establishes finite-time stabilization of the support. 
If the support does not change across multiple consecutive full cycles, we restrict the updates of Algorithm \ref{alg:CD} to the current support. After convergence on this support, we check whether any coordinate from outside the support violates the conditions of a CW minimum and rerun the algorithm if needed.

\pparagraph{Fast Coordinate Updates:} Following~\cite{glmnet}, we present techniques for efficiently computing the coordinate updates---these are also found to be useful for our implementation of PSI($1$).

Let $\beta^{k}$ be the current iterate in Algorithm~1 and $r^k$ be the residuals. To compute $\widetilde{\beta_i} = \langle   r^{k} , X_i \rangle + \beta^{k}_i$, one can use one of the following rules that exploit sparsity \cite{glmnet}: 
(i)~\emph{Residual Updates:} 
We maintain the residuals $r^{k}$ throughout the algorithm and compute $\widetilde{\beta}_{i}$ 
using $O(n)$ operations.
Once $\beta^{k+1}_i$ is computed,  we update 
$r^{k+1}$ 
with cost $O(n)$ operations. Since $\beta^k$ is sparse, many of the coordinates remain at zero during the algorithm implying that $r^{k+1} = r^{k}$.
(ii)~\emph{Covariance Updates:} This appears in~\cite{glmnet} for updating $\widetilde{\beta_{i}}$ without using the precomputed $r^{k}$.
Note that Algorithm~1 precomputes $\langle y , X_i \rangle$ for all $i \in [p]$. 
If a coordinate $\ell$ enters the support for the first time, 
we compute and store the covariance terms: $\langle X_\ell, X_j \rangle$ for all $j \in [p]$ with cost $O(np)$. In iteration $k+1$, we compute $\widetilde{\beta}_i$ 
using these covariances and exploiting the sparsity of $\beta^k$---this costs $O(\|\beta^k\|_0)$ operations. This costs less than computing $\widetilde{\beta_i}$ using rule (i) if $\|\beta^k\|_0 < n$.

Scheme (ii) is useful when the supports encountered by Algorithm~1 are small w.r.t. $n$. It is also useful for an efficient implementation of the PSI$(1)$ algorithm as it stores the dot products required in (\ref{resid-1-1}). However, when the supports encountered by CD are relatively large compared to $n$ (e.g., 10\% of $n$), then Scheme (i) can become significantly faster since the dot product computations can be accelerated using calls to the Basic Linear Algebra Subprograms (BLAS).

\section{Computational Experiments} \label{section:experiments}
In this section, we investigate both the optimization and statistical performances of our proposed algorithms and compare them to other popular sparse learning algorithms.
For convenience, we provide a road map of this section.  Section~\ref{section:cdcomparison} compares the optimization performance of our proposed algorithms and other variants of CD and IHT. Section~\ref{section:PhaseTransitionsN} empirically studies the statistical properties of estimators available from our proposed algorithms versus others for varying sample sizes. Section~\ref{section:PhaseTransitionsSNR} provides a similar study for varying SNR. Section~\ref{swap-inescp-minima} performs an in-depth investigation among the PSI$(k)$/FSI$(k)$ algorithms, for different values of $k$. Section~\ref{section:high-dim-exp} presents timing and statistical performance comparisons on some large-scale instances, including real datasets. Additional experiments are placed in the supplementary.

\subsection{Experimental Setup} \label{section:expsetup}

\pparagraph{Data Generation:} We consider a series of experiments on synthetic datasets for a wide range of problem sizes and designs. We generate a multivariate Gaussian 
data matrix $X_{n \times p} \sim \text{MVN}(0, \Sigma).$ 
We use a sparse coefficient vector $\beta^{\dagger}$ with $k^{\dagger}$ equi-spaced nonzero entries, each set to 1. We then generate the response vector $y = X \beta^{\dagger} + \epsilon$, where $\epsilon_i \stackrel{\text{iid}}{\sim} \text{N}(0,\sigma^2)$ is independent of $X$. We define the signal-to-noise ratio (SNR) by
$
\text{SNR} = \tfrac{\text{Var}(X \beta^{\dagger})}{\text{Var}(\epsilon)} =  \tfrac{\beta^{\dagger T} \Sigma \beta^{\dagger}}{\sigma^2}.$
An alternative to the SNR is the ``proportion of variance explained'' or $R^2$ for the {\em true model}: $R^2 = 1 - \frac{\text{Var}(y - X \beta^{\dagger})}{\text{Var}(y)} = \frac{\text{SNR}}{\text{SNR} + 1}$.

We consider the following instances of $\Sigma:=((\sigma_{ij}))$:
\begin{itemize}
\itemsep -0.2em 
\item \textbf{Constant Correlation}: We set $\sigma_{ij} = \rho$ for every $i \neq j$ and $\sigma_{ii} = 1$ for all $i \in [p]$. 
\item \textbf{Exponential Correlation}: We set $\sigma_{ij} = \rho^{|i - j|}$ for all $i,j$,
with the convention $0^0 =1$.
\end{itemize}
We select the tuning parameters by minimizing the prediction error on a separate validation set, which is generated under the fixed design setting as $y' = X \beta^{\dagger} + \epsilon'$, where $\epsilon'_i \stackrel{\text{iid}}{\sim} \text{N}(0,\sigma^2)$. In the supplementary, we also include alternative validation strategies. 
Particularly, we include the results of the experiments in Sections \ref{section:PhaseTransitionsN} and \ref{section:PhaseTransitionsSNR} based on both oracle and random design tuning (following~\cite{onbestsubset}). The results are found to be quite similar.

\pparagraph{Competing Algorithms and Parameter Tuning:}
In addition to our proposed algorithms, we compare the following state-of-the-art methods in the experiments:
\begin{itemize}[leftmargin=*]
\itemsep 0em 
\item \textbf{Lasso}: We use our own implementation and in the figures we denote it by ``$L_1$''. 
\item \textbf{Relaxed Lasso}: We use the Relaxed Lasso version suggested in~\cite{onbestsubset}, defined as $\beta^{\text{relaxed}} = \gamma \beta^{\text{lasso}} + (1-\gamma) \beta^{\text{LS}}$, where $\beta^{\text{lasso}}$ is the Lasso estimate, 
the nonzero components of $\beta^{\text{LS}}$ are given by the least squares solution on the support of $\beta^{\text{lasso}}$, and $\gamma \in [0,1]$ is a second tuning parameter (in addition to the tuning parameter for the Lasso). We use our own implementation for relaxed lasso and denote it by ``L1Relaxed''.
\item \textbf{Elastic Net}: This uses a combination of the $L_1$ and $L_2$ regularization \cite{Zou05regularizationand}. We use the implementation of \texttt{glmnet} and refer to it as ``$L_1 L_2$''.

\item \textbf{MCP}: This is the MCP penalty of~ \cite{zhang2010nearly}. We use the implementation provided in \texttt{ncvreg} \cite{ncvreg}.
\item \textbf{Forward Stepwise}: We use the implementation of~\cite{onbestsubset}, and denote it by ``FStepwise''.

\item \textbf{IHT}: We use our implementation for IHT for the $(L_0)$ problem with a step size of $M^{-1}$, where $M$ is defined in Remark \ref{remark:IHT}. 
\end{itemize}
For all the methods involving one tuning parameter, we tune over a grid of 100 parameter values, except for forward stepwise selection which we allow to run for up to 250 steps.
For the methods with two parameters, we tune over a 2-dimensional grid of $100 \times 100$ values. For our algorithms, the tuning parameter $\lambda_0$ is generated as per Section \ref{section:RegPath}. For the $(L_0 L_2)$ problem, we sweep $\lambda_2$ between $0.1 \lambda_2^{*}$ and $10 \lambda_2^{*}$, where $\lambda_2^{*}$ is the optimal regularization parameter for ridge regression (based on validation over $100$ grid points between $0.0001$ and $1000$) . For $(L_0 L_1)$, Lasso, Relaxed Lasso, and Elastic Net, we sweep $\lambda_1$ from $\|X^T y\|_{\infty}$ down to $0.0001 \times \|X^T y\|_{\infty}$. For Relaxed Lasso and Elastic Net, we sweep their second parameter between $0$ and $1$. For MCP, the range of the first parameter $\lambda$ is chosen by \texttt{ncvreg}, and we sweep the second parameter $\gamma$ between $1.5$ and $25$.

\pparagraph{Performance Measures:}
We use several metrics to evaluate the quality of an estimator $\hat{\beta}$ (say). In addition to the objective value, number of true positives (TP), false positives (FP), and support size, we consider the following measures:
\begin{itemize}[leftmargin=*]
\itemsep 0em 
\item \textbf{Prediction Error}: This is the same measure used in \cite{bertsimas2015best} and is defined by $ \| X \hat{\beta} - X \beta^{\dagger} \|^2 / \| X \beta^{\dagger}\|^2. $
The prediction error of a perfect model is $0$ and that of the null model ($\hat{\beta} = 0$) is $1$.

\item \textbf{$L_{\infty}$ Norm}: This is the 
estimation error measured in terms of the 
$L_{\infty}$-norm: $\|\hat{\beta} - \beta^{\dagger} \|_{\infty}$.

\item \textbf{Full support recovery}: We study if the support of $\beta^\dagger$ is completely recovered by $\hat{\beta}$, i.e., 
$\mathsf{1}[\text{Supp}(\beta^\dagger) = \text{Supp}(\hat{\beta})]$ --- we look at the average of this quantity across multiple replications, leading to an estimate for the probability of full support recovery. 
\end{itemize}

We would like to point out that SNR alone does not dictate how difficult the underlying statistical problem is (e.g., in terms of variable selection, estimation, or prediction error). The situation is rather subtle: in addition to SNR, the matrix $X$ and choices of $n,p,k^{\dagger}$ influence the statistical performance. For example, consider two instances with $p=100$ and $p=10^5$ where we set $k^{\dagger}=10,\Sigma=I,n=100,\text{SNR}=1$. For $p=100$, it is possible to obtain a model with estimation error smaller than the null model (with high probability), but this may not be possible for $p=10^5$. Similarly, for the case of constant correlation,
a problem with $\rho=0, \text{SNR}=1$ may be statistically easier 
than one with $\rho=0.5,\text{SNR}=100$ (with suitable choices of $n,p,k^\dagger$). 
The experiments that follow are intended to provide 
a solid understanding of how support recovery, sparsity, estimation error, and prediction error vary under different problem settings. We also seek to understand (i) when our algorithms can achieve full support recovery---a topic of considerable importance in high-dimensional statistics~\cite{david2017high_mod,wainwright2009information}; and (ii) when pure $L_0$ starts to overfit (a deeper statistical understanding of this seems to be in a nascent stage).

\subsection{Comparison among CD variants and IHT: Optimization performance}
\label{section:cdcomparison}

We investigate the optimization performance of the different algorithms for the $(L_0)$ problem. Particularly, we study the objective values and the number of iterations till convergence for IHT and the following variants of CD:
\begin{itemize}[leftmargin=*]
\itemsep -0.4em 
\item \textbf{Cyclic CD}: This is Algorithm \ref{alg:CD} with default cyclic order.
\item \textbf{Random CD}: This is a randomized version of CD, where the coordinates to be updated are chosen uniformly at random from $[p]$. This has been considered in \cite{randomCDL0}.
\item \textbf{Greedy Cyclic CD}: This is our proposed Algorithm \ref{alg:CD} with a partially greedy cyclic ordering of coordinates, described in Section \ref{section:RegPath}.
\end{itemize}
We generated a dataset with Exponential Correlation, $\rho = 0.5$, $n = 500$, $p = 2000$, SNR $= 10$, and a support size $k^{\dagger} = 100$. We generated 50 random initializations each with a support size of $100$, where the nonzero indices are selected uniformly at random in $1$ to $p$ and assigned values that are drawn from Uniform$(0,1)$. For every initialization, we ran Cyclic CD, Greedy Cyclic CD, and IHT and recorded the value of the objective function of the solution along with the number of iterations (here, one full pass over all $p$ coordinates is defined as one iteration) till convergence. For Random CD, we ran the algorithm 10 times for every initialization and averaged the objective values and number of iterations. For all the algorithms above, we declare convergence when the relative change in the objective is $<10^{-7}$. Figure \ref{fig:CDVariants} presents the results: the objective values resulting from Greedy Cyclic CD are significantly lower than the other methods; on average we have roughly a 12\% improvement in the objective from Random CD and 55\% improvement over IHT.
\begin{figure}[h!]
\centering 
\includegraphics[scale=0.5]{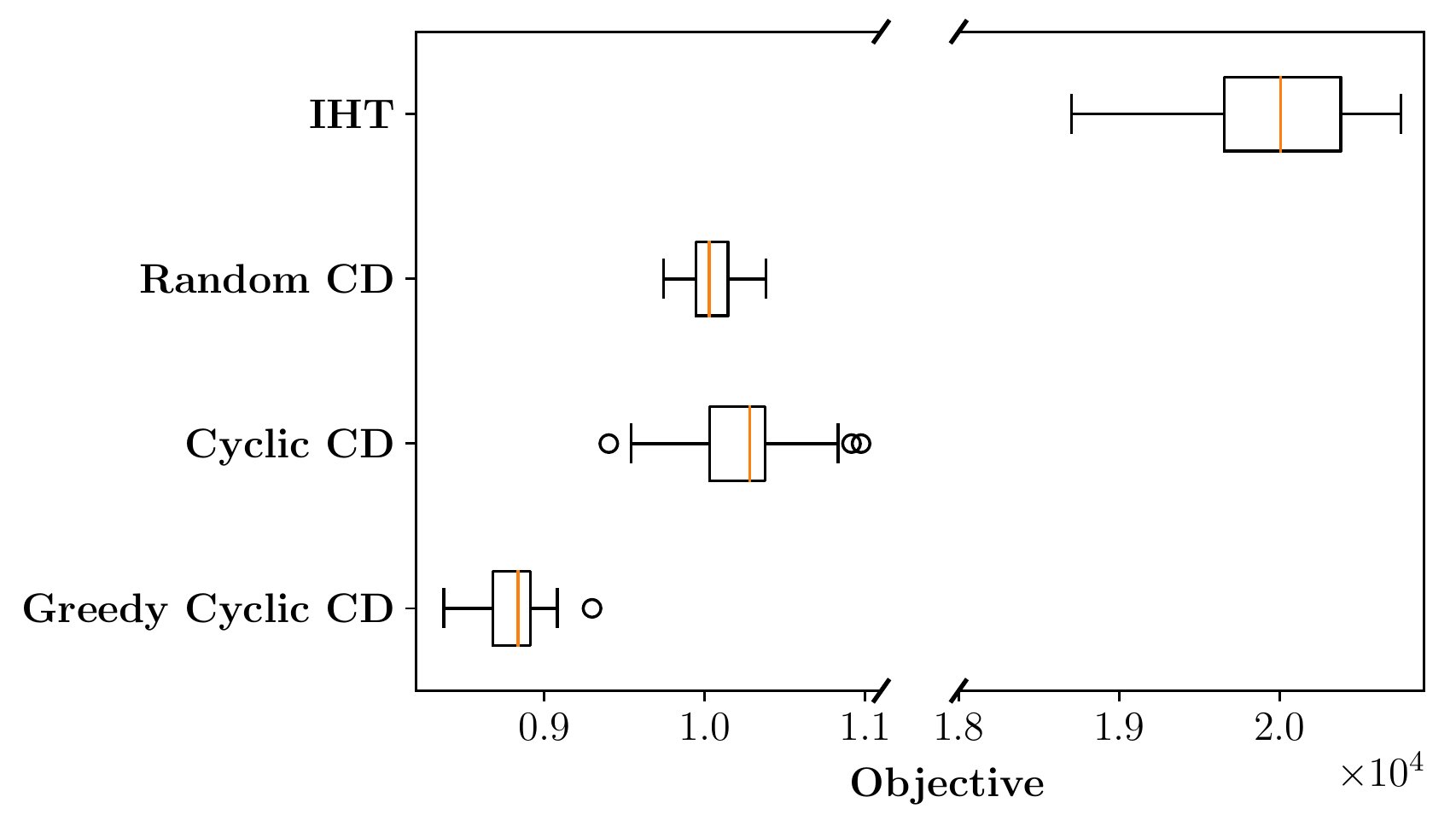}
\includegraphics[scale=0.512, trim={4.7cm 0 0 0},clip]{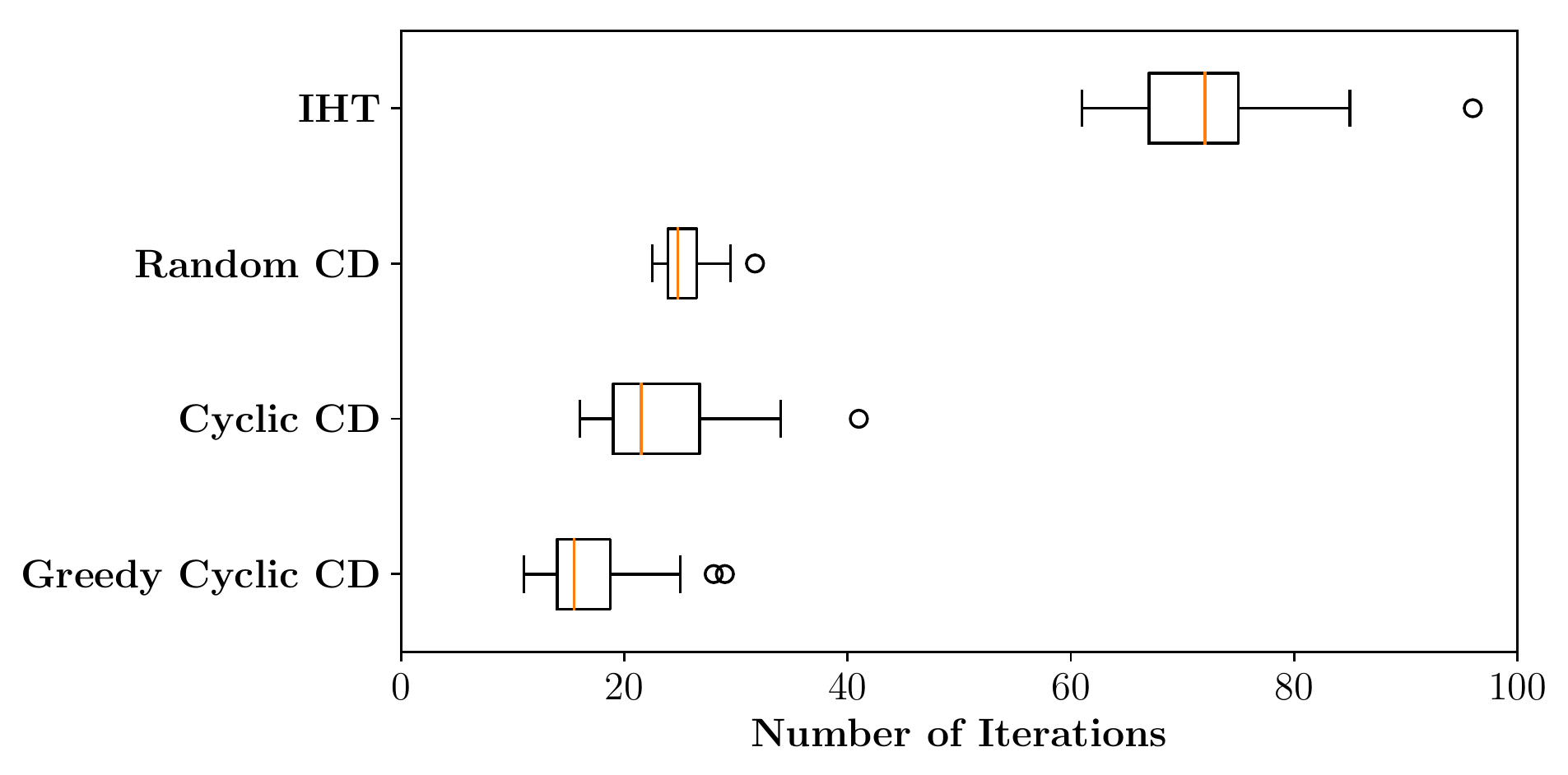}
\caption{Box plots showing the distribution of the objective values and the number of iterations (here, one full pass over all $p$ coordinates is defined as one iteration) till convergence for different variants of CD and IHT, for each algorithm we used 50 random initializations (as described in the text). The ticks of the box plots represent 1.5 times the inter-quartile range. }
\label{fig:CDVariants}
\end{figure}
This finding can be partially explained in the light of our discussion in Section \ref{section:IHT}, where we observed that the Lipschitz constant $L$ controls the quality of the solutions returned by IHT. In this high-dimensional setting, $L \approx 11$ which is far from $1$, and thus IHT can get stuck in relatively weak local minima. The number of iterations till convergence is also in favor of Greedy Cyclic CD, which requires roughly 28\% less iterations than Random CD and 75\% less iterations than IHT.

\subsection{Statistical Performance for Varying Sample Sizes}
\label{section:PhaseTransitionsN}
We study how the different statistical metrics change with the number of samples, while the other factors ($p,k^\dagger,\text{SNR}, \Sigma$) are fixed. We seek to empirically validate our hypothesis that:  {\em Under difficult statistical settings (e.g., high correlation or a small value of $n$), advanced optimization techniques such as combinatorial search, can lead to significantly improved statistical performance.}

We consider Algorithms 1 and 2 (with $k=1$) for the $(L_0)$, $(L_0 L_1)$, and $(L_0 L_2)$ problems; in addition to the competing penalties discussed in Section \ref{section:expsetup}. 
In Experiments~1 and 2 (below), we swept $n$ between $100$ and $1000$, and for every value of $n$, we generated 20 random training and validation datasets having Exponential Correlation, $p = 1000$, $k^{\dagger} = 20$, and SNR=5. All the results we report here are based on validation-set tuning. In the supplementary, we include the results for oracle and random design tuning.
\subsubsection{Experiment 1: High Correlation}
Here we choose $\rho = 0.9$ (exponential correlation)---this is a difficult problem due to the high correlations among features in the sample. Figure \ref{fig:NSweep-Exp09} summarizes the results. In the top panel of Figure~2 we show the results of Algorithm~\ref{alg:mipcd} applied to the $(L_0)$ and $(L_0 L_2)$ problems versus the other competing algorithms. In the bottom panel, we present a detailed comparison among Algorithms~1 and~2 for all the three problems: $(L_0)$, $(L_0 L_1)$ and $(L_0 L_2)$. 

\begin{figure}[h!]
\centering
Exponential Correlation, $\rho = 0.9$, $p = 1000$, $k^{\dagger} = 20$, SNR $=5$
\includegraphics[scale=0.5]{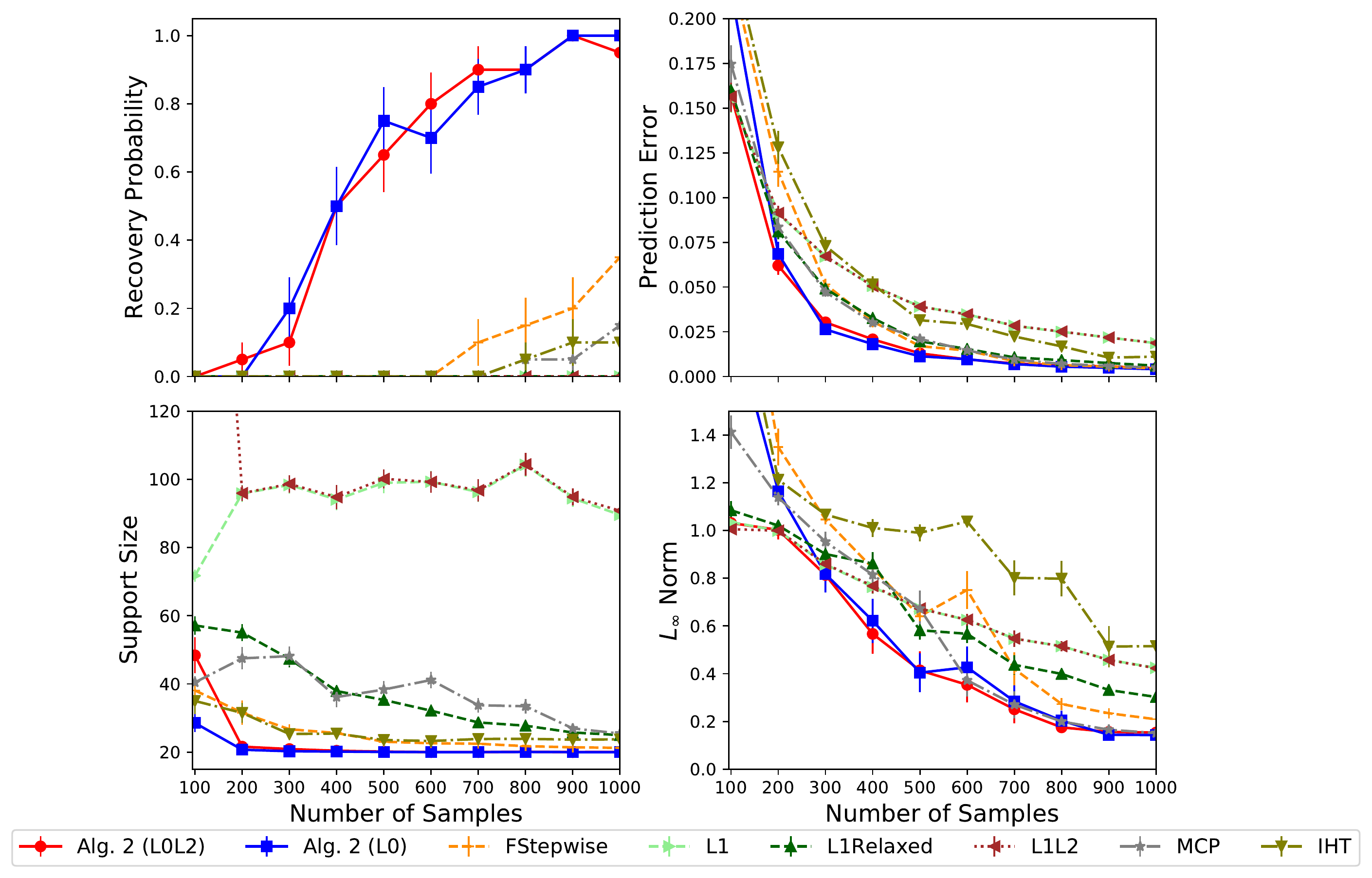}
\includegraphics[scale=0.5]{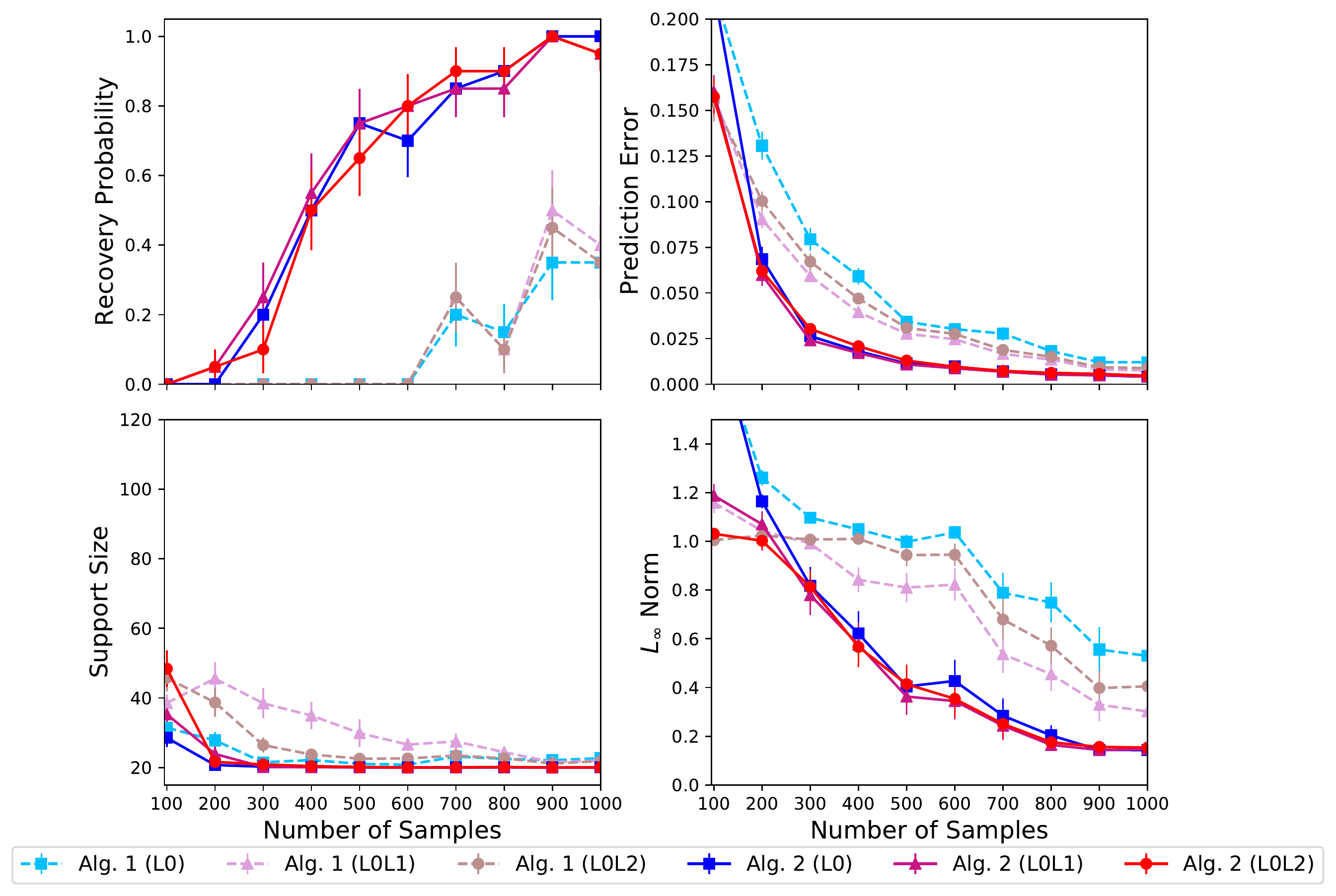}
\caption{{\small{Performance measures as the number of samples $n$ varies between $100$ and $1000$. The top figure compares two of our methods (Algorithm~\ref{alg:mipcd} for the $(L_0)$ and $(L_{0}L_{2})$ problems) with other state-of-the-art algorithms. The bottom figure compares Algorithms 1 and 2 for all three problems. Algorithm~\ref{alg:mipcd} performs significantly better than Algorithm 1 and the other methods, since it does a better job at optimization. }}}
\label{fig:NSweep-Exp09}
\end{figure}

From the top panel (Figure \ref{fig:NSweep-Exp09}), we can see that Algorithm~\ref{alg:mipcd} applied to the $(L_0 L_2)$ problem overall achieves the best performance in terms of different measures across all values of $n$. Algorithm~\ref{alg:mipcd} $(L_0)$ and Algorithm~\ref{alg:mipcd} $(L_0L_2)$
perform similarly for $n \geq 300$. The probability of full recovery for Algorithm~\ref{alg:mipcd} increases with $n$ and becomes $1$ at around $n=900$---note that the slight variation in the recovery probability values for our methods are solely due to the validation procedure (the oracle tuning presented in the supplementary eliminates this wiggly behavior). 
Lasso, Relaxed Lasso, and Elastic Net do not achieve full support recovery for any $n$---the corresponding lines are aligned with the horizontal axis. The $L_1$-based methods also lead to large support sizes---as expected, $L_1 L_2$ leads to supports that are denser than Lasso~\cite{Zou05regularizationand}.
{Due to shrinkage, tuning parameter selection based on prediction error makes the Lasso select models with many nonzero coefficients---this leads to sub-optimal variable selection.
The Relaxed Lasso attempts to undo the shrinkage effect of the Lasso leading to models with fewer nonzeros. In addition, we note that shrinkage of Lasso also interferes with variable selection---a shortcoming that is inherited by the Relaxed Lasso---as seen in the panel displaying recovery probability.} 

Moreover, MCP, FStepwise, and IHT have a probability of recovery around $0.3$ even when $n = p = 1000$---suggesting that they fail to do correct support recovery in this regime. A similar phenomenon occurs for the prediction error and the $L_{\infty}$ norm, where Algorithm~\ref{alg:mipcd} is seen to dominate.

The bottom figure shows that Algorithm~\ref{alg:mipcd} can significantly outperform Algorithm 1 (which performs no swaps). It seems that in this highly correlated setting, our local combinatorial optimization procedures have an edge in performance.

\subsubsection{Experiment 2: Mild Correlation}
In this experiment, we keep the same setup as the previous experiment, but we reduce the correlation parameter $\rho$ to $0.5$. In Figure \ref{fig:NSweep-Exp05}, we show the results for Algorithm 1 applied to $(L_0)$ and Algorithm~\ref{alg:mipcd} applied to ($L_0 L_2$) versus the other competing methods. We note that our other algorithms have a similar profile (we do not include their plots for space constraints). 
\begin{figure}[h!]
\centering
{\sf {Exponential Correlation, $\rho = 0.5$, $p = 1000$, $k^{\dagger} = 20$, SNR $=5$ }}
\includegraphics[scale=0.55]{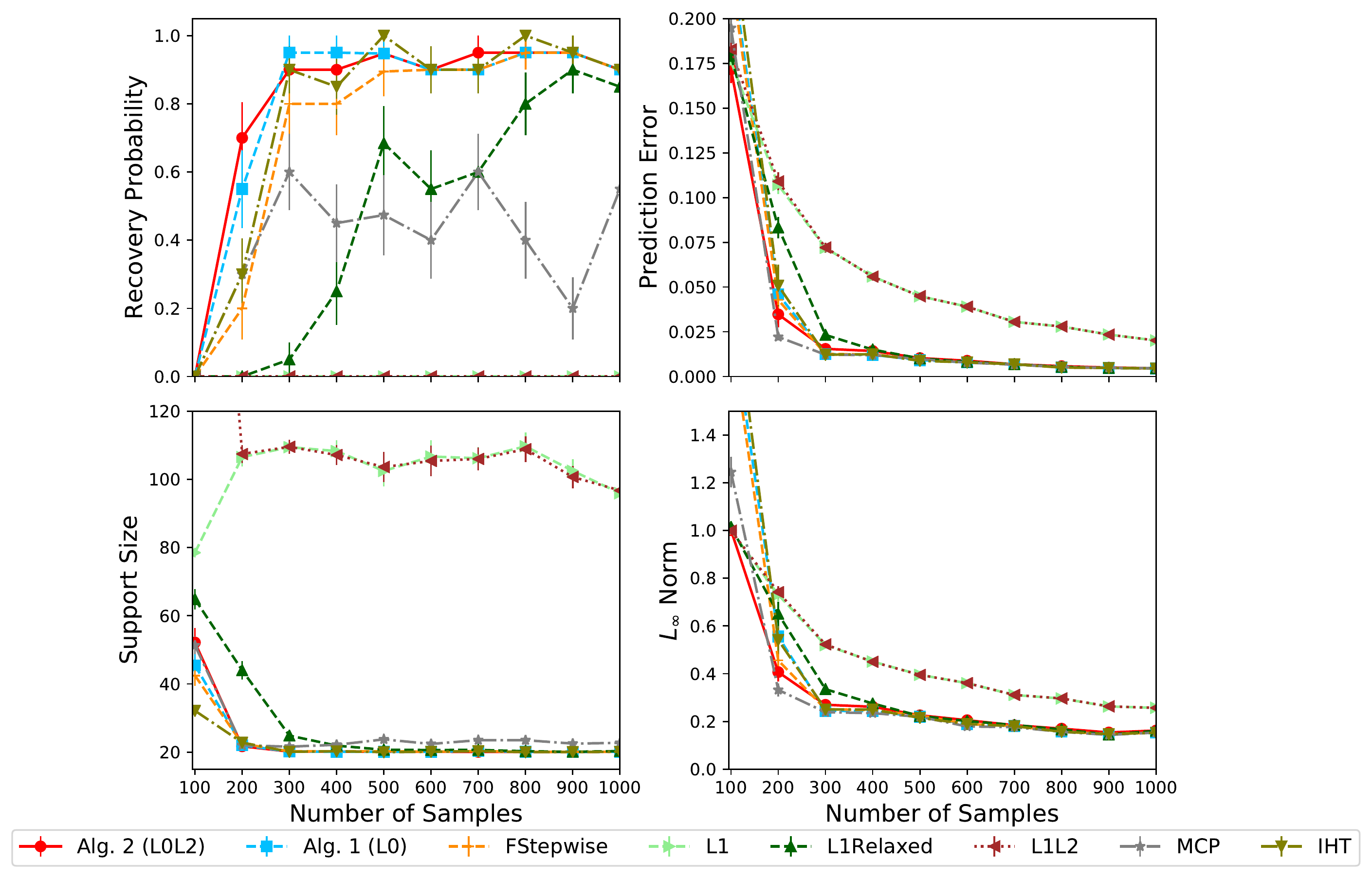}
\caption{\small{Performance measures as the number of samples $n$ varies between $100$ and $1000$. The figure compares two of our methods (Algorithm~\ref{alg:mipcd} $(L_0 L_2)$ and Algorithm 1 for $(L_0)$) with other state-of-the-art algorithms. Algorithms 1 and 2 perform similarly in this case (in contrast to the highly correlated setting in Figure \ref{fig:NSweep-Exp09}). Adding $L_1$ or $L_2$ regularization to $(L_0)$ does not help in this case.}}
\label{fig:NSweep-Exp05}
\end{figure}
This setup is easier from a statistical perspective, when compared to Experiment 1 where $\rho = 0.9$. Thus, we expect all the methods to perform better (overall) and display a phase-transition (for recovery probability) at a smaller sample size (compared to Experiment 1). Indeed, as shown in Figure~\ref{fig:NSweep-Exp05}, Algorithm~\ref{alg:CD}~($L_0$) and Algorithm~\ref{alg:mipcd}~($L_0 L_2$) have roughly the same profiles, and they outperform the other methods; they fully recover the true support using roughly $300$ samples. The swap variants of our methods in this case do not seem to lead to significant improvements over the non-swap variants, and this further supports our hypothesis: when the statistical problem is relatively easy, Algorithm~\ref{alg:CD} works quite well---more advanced optimization (e.g., using swaps) do not seem necessary. MCP and FStepwise also exhibit good performance, but they start doing full support recovery for much larger values of $n$; and MCP does not seem to be robust. Lasso in this case never recovers the true support, and this property is inherited by Relaxed Lasso which requires at least 900 samples to match our methods in terms of support recovery.
\subsection{Statistical Performance for Varying SNR} \label{section:PhaseTransitionsSNR}
We present two experiments studying the role of varying SNR values on the different performance measures. In each experiment, we vary the SNR between $0.01$ and $100$.
For every SNR value, we generated 20 random datasets over which we averaged the results. We observe that for low SNR values, ridge regression ($L_2$) works very well in terms of prediction performance. Hence we include $L_2$ in our results. {We do not include the results for IHT in this case as its run times are substantially longer compared to competing algorithms}. The results are based on validation-set tuning, and those for oracle and random design tuning are included in the supplementary.

\subsubsection{Experiment 1: Constant Correlation} \label{section:exp1constcorr}
We generated datasets with constant correlation, $\rho = 0.4$, $n = 1000$, $p = 2000$, and $k^{\dagger} = 50$. We report the results for Algorithm~\ref{alg:mipcd} applied to the $(L_0)$, and $(L_0 L_2)$ problems along with all the other state-of-the-art algorithms in Figure \ref{fig:SNRSweep-C}.

\begin{figure}[h!]
\centering
Constant Correlation, $\rho = 0.4$, $n=1000$, $p = 2000$, $k^{\dagger} = 50$
\includegraphics[scale=0.5]{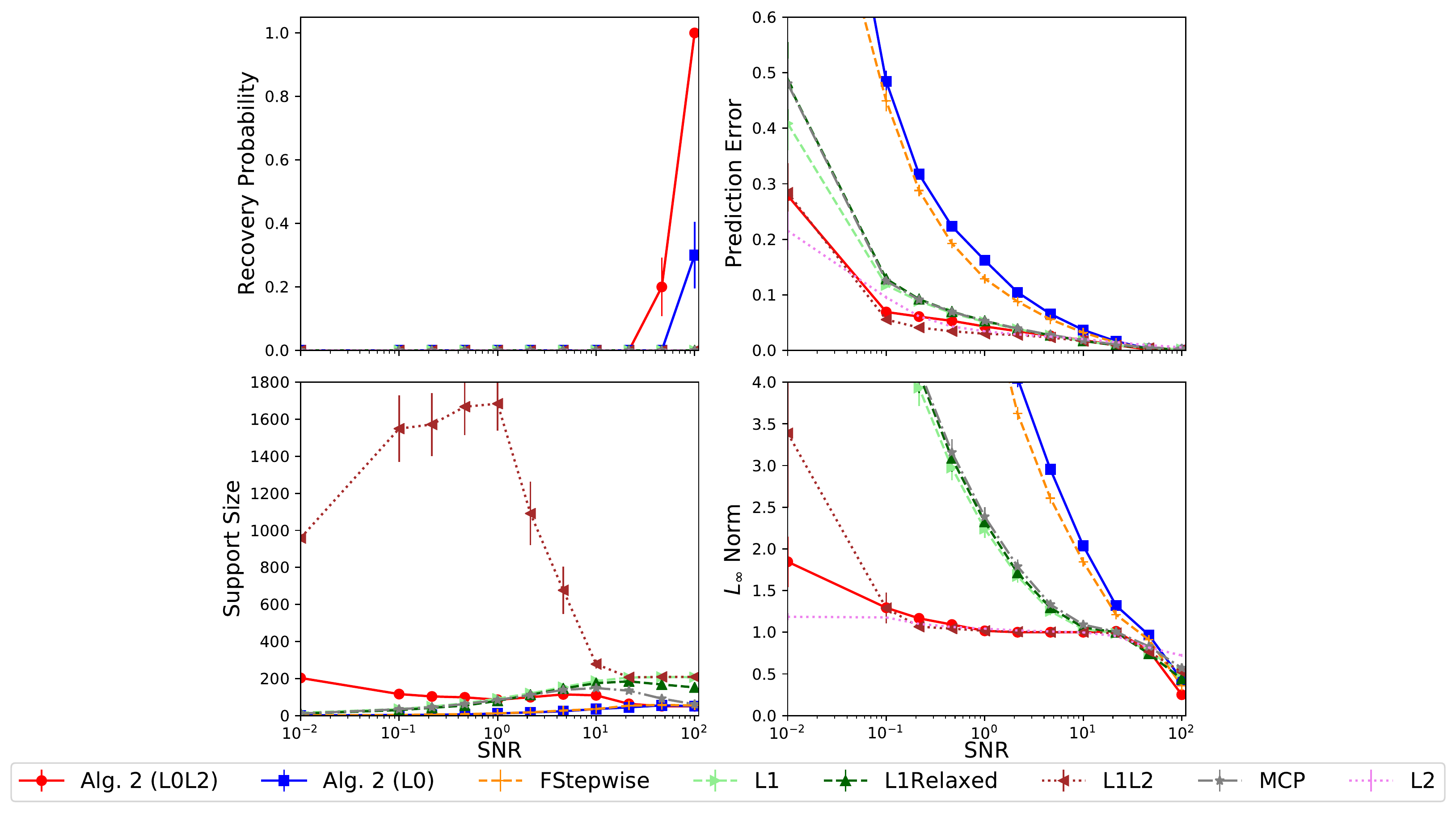}
\caption{\small{Performance measures as the signal-to-noise ratio (SNR) is varied between 0.01 and 100. The figure compares two of our methods (Algorithm~\ref{alg:mipcd} applied to the $(L_0)$ and $(L_0 L_2)$ problems) with other state-of-the-art algorithms. For low SNR levels, $(L_0 L_2)$ performs much better than $(L_0)$ (as the latter overfits in these settings).}}
\label{fig:SNRSweep-C}
\end{figure}
Figure \ref{fig:SNRSweep-C} suggests that full support recovery is difficult for all methods (suggesting that constant correlation leads to a difficult problem). At SNR $= 100$, ($L_0 L_2$) fully recovers the support while ($L_0$) has a recovery probability $\sim$ 0.3---this suggests that the additional $L_2$-regularization aids the optimization performance of Algorithm~\ref{alg:mipcd}. However, none of the other considered methods does full recovery, even for high SNR. Algorithm~\ref{alg:mipcd} ($L_0 L_2$) generally exhibits excellent performance in terms of all measures across the whole SNR range. Pure ($L_0$) tends to overfit quickly (as SNR becomes smaller) due to the lack of shrinkage~\cite{mazumder2017subset} and it selects small support sizes (like ``FStepwise''). Additional shrinkage ($L_0L_2$) seems to help alleviate this problem. The Elastic Net ($L_1 L_2$) performs similarly to ($L_0 L_2$) in terms of prediction error, but at the cost of very dense supports---in fact its support size can reach up to $90\%$ of $p$ for SNR $\sim 1$, which is undesirable from the viewpoint of having a parsimonious model. We note that for low SNR values, ($L_0 L_2$)'s prediction error is comparable to that of $L_2$ (for SNR=0.01, $L_2$ has the best predictive performance)---however,  ($L_0 L_2$) leads to much sparser models and hence has an advantage.   

\subsubsection{Experiment 2: Exponential Correlation}
We generated datasets having exponential correlation with $\rho = 0.5$, $n = 1000$, $p = 5000$, and $k^{\dagger} = 50$. We report the results in Figure \ref{fig:SNRSweep-Exp}.
We observe that this setup is relatively easier (from a statistical viewpoint) than the constant correlation experiment in Section~\ref{section:exp1constcorr}. Thus, we observe less significant differences among the algorithms, when compared to the first experiment (see Figures \ref{fig:SNRSweep-C} and \ref{fig:SNRSweep-Exp}). Algorithm~\ref{alg:mipcd} ($L_0 L_2$) again seems to dominate across different measures and for all SNR values. Algorithm~\ref{alg:mipcd} ($L_0$) and Algorithm~\ref{alg:mipcd} ($L_0 L_2$) exhibit similar performance for high SNR. For low SNR, Algorithm~\ref{alg:mipcd} ($L_0 L_2$), $L_2$, and $L_1 L_2$ have the best predictive performance, though ($L_0 L_2$) leads to the most compact models.  We note that even in this relatively easy case, Lasso and Elastic Net never fully recover the support---MCP and Relaxed Lasso also suffer in terms of full support recovery. 

\begin{figure}[h!] 
\centering
Exponential Correlation, $\rho = 0.5$, $n=1000$, $p = 5000$, $k^{\dagger} = 50$
\includegraphics[scale=0.5]{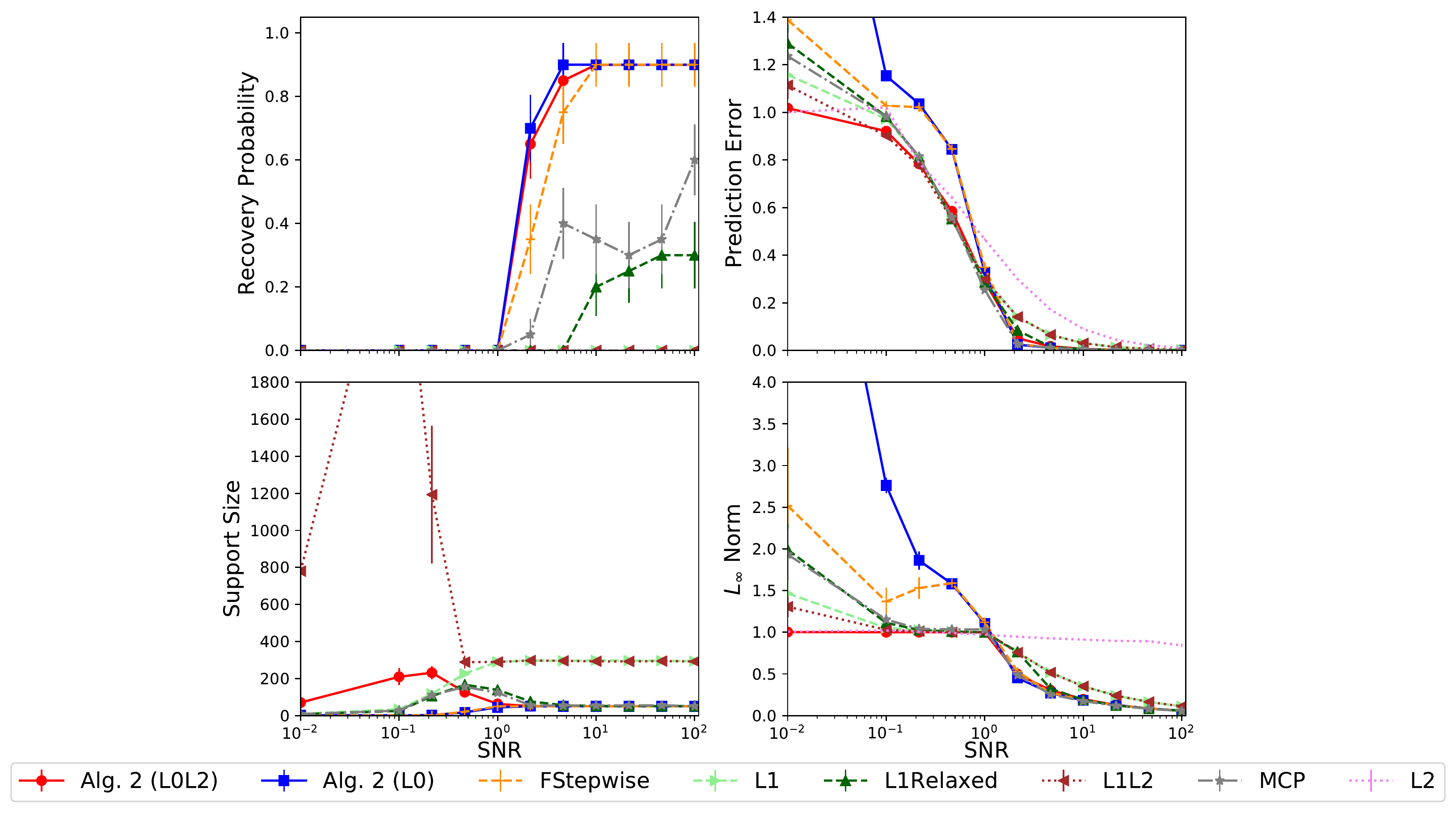}
\caption{\small{Performance measures as the signal-to-noise ratio (SNR) is varied between 0.01 and 100. The figure compares two of our methods (Algorithm~\ref{alg:mipcd} applied to the $(L_0)$ and $(L_0 L_2)$ problems) with other state-of-the-art algorithms. For SNR$\leq $1, $(L_0 L_2)$ performs significantly better than $(L_0)$---this performance improvement vanishes for larger SNR values.}}
\label{fig:SNRSweep-Exp}
\end{figure}

\subsection{Comparing PSI$(k)$ versus FSI$(k)$} \label{swap-inescp-minima}
Here, we examine the differences among the various classes of minima introduced in the paper, i.e.,  CW, PSI$(k)$ and FSI$(k)$ minima, for the ($L_0$) problem. To understand the differences, we consider a relatively difficult setting with Constant Correlation where $\rho = 0.9$, $n = 250$, $p = 1000$, and $k^{\dagger} = 25$. We set SNR$=300$ to allow for full support recovery. We generated 10 random training datasets under this setting and ran: Algorithm~1; and the PSI and FSI variants of Algorithm~\ref{alg:mipcd} for $k \in \{1,2,5\}$. All algorithms were initialized with a vector of zeros. For Algorithm~\ref{alg:mipcd} we used Gurobi (v7.5) to solve the MIQO subproblems (\ref{mio:psi}) and (\ref{mio:fsi}) when $k>1$.

Figure \ref{fig:localminima} presents box plots showing the distribution of objective values, true positives, and false positives recorded for each of the algorithms and 10 datasets. PSI($1$) and FSI($1$) minima lead to a significant reduction in the objective, when compared to Algorithm~1 (which results in CW minima). We do observe further reductions as $k$ increases, but the gains are less pronounced.
In this case, CW minima contains on average a large number of false positives ($> 35$) and few true positives---this is perhaps due to high correlations among all features, which makes the optimization task arguably very challenging. Both PSI and FSI minima increase the number of true positives significantly. A closer inspection shows that FSI minima do a better job in having fewer false positives, when compared to PSI minima. This comes at the cost of solving relatively more difficult optimization problems, but within reasonable computation times.

\begin{figure}[h!]
\centering
Constant Correlation, $\rho = 0.9$, $n=250$, $p = 1000$, $k^{\dagger} = 25$, SNR $=300$
\includegraphics[scale=0.46]{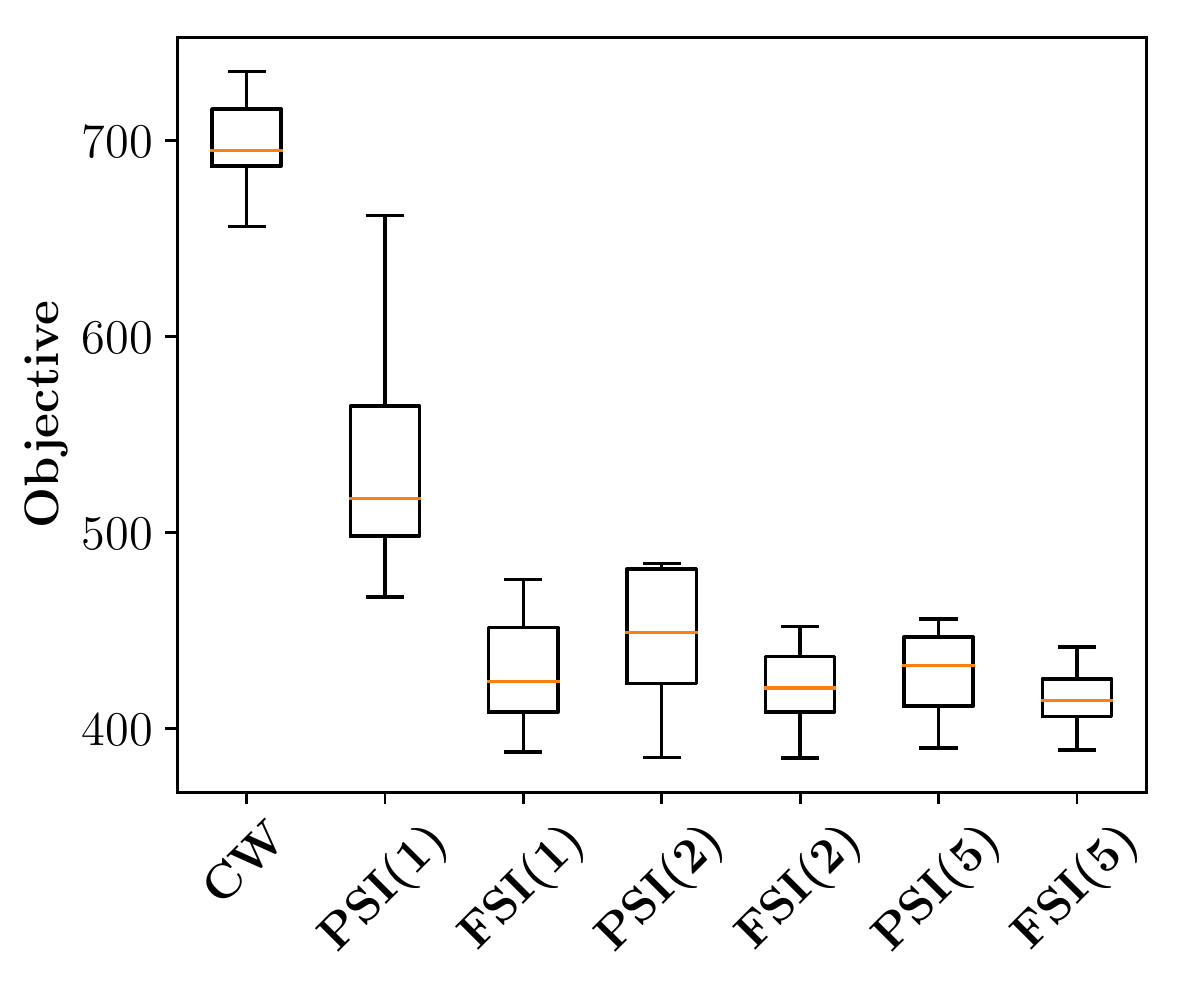} 
\hspace{-0.3cm}
\includegraphics[scale=0.46]{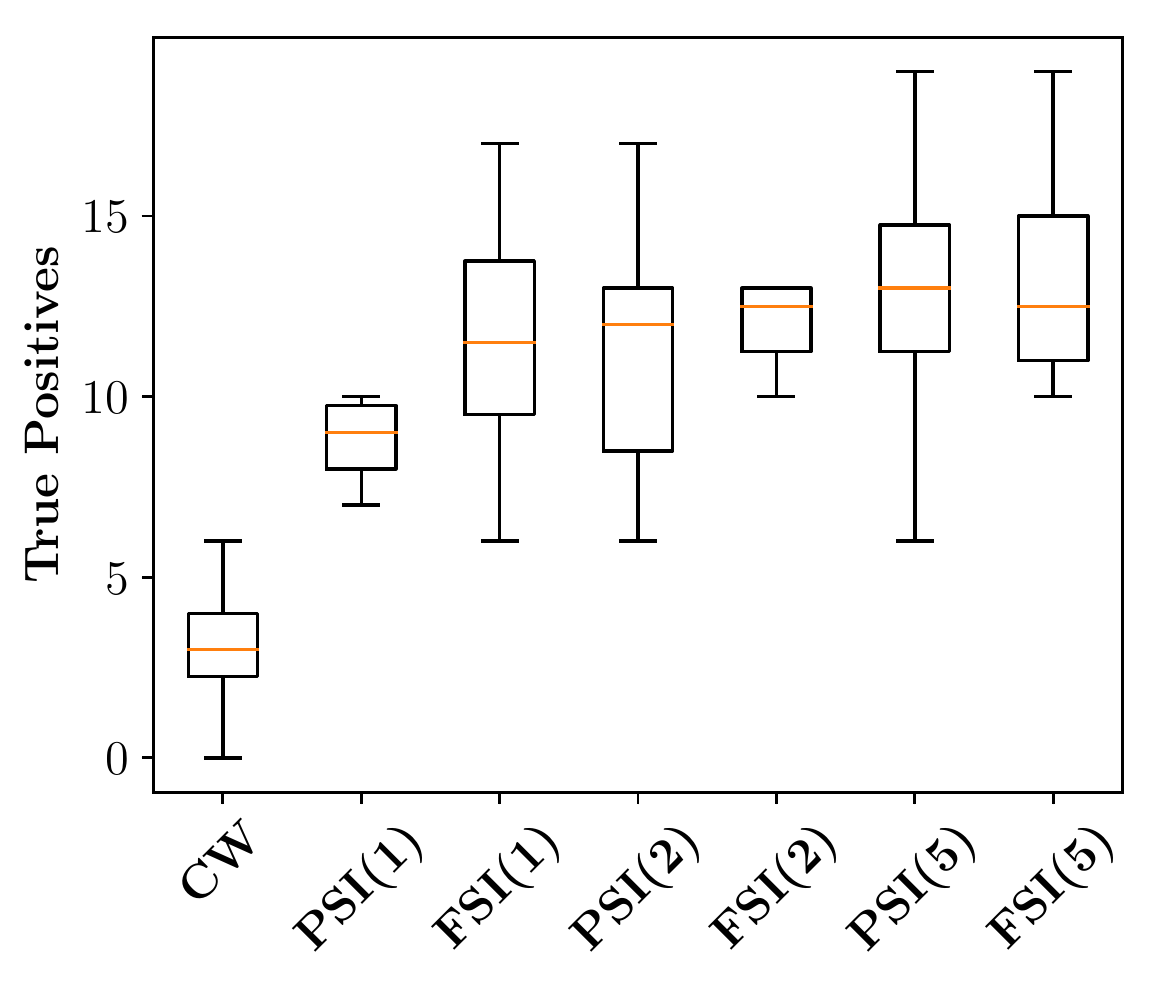}
\hspace{-0.3cm}
\includegraphics[scale=0.46]{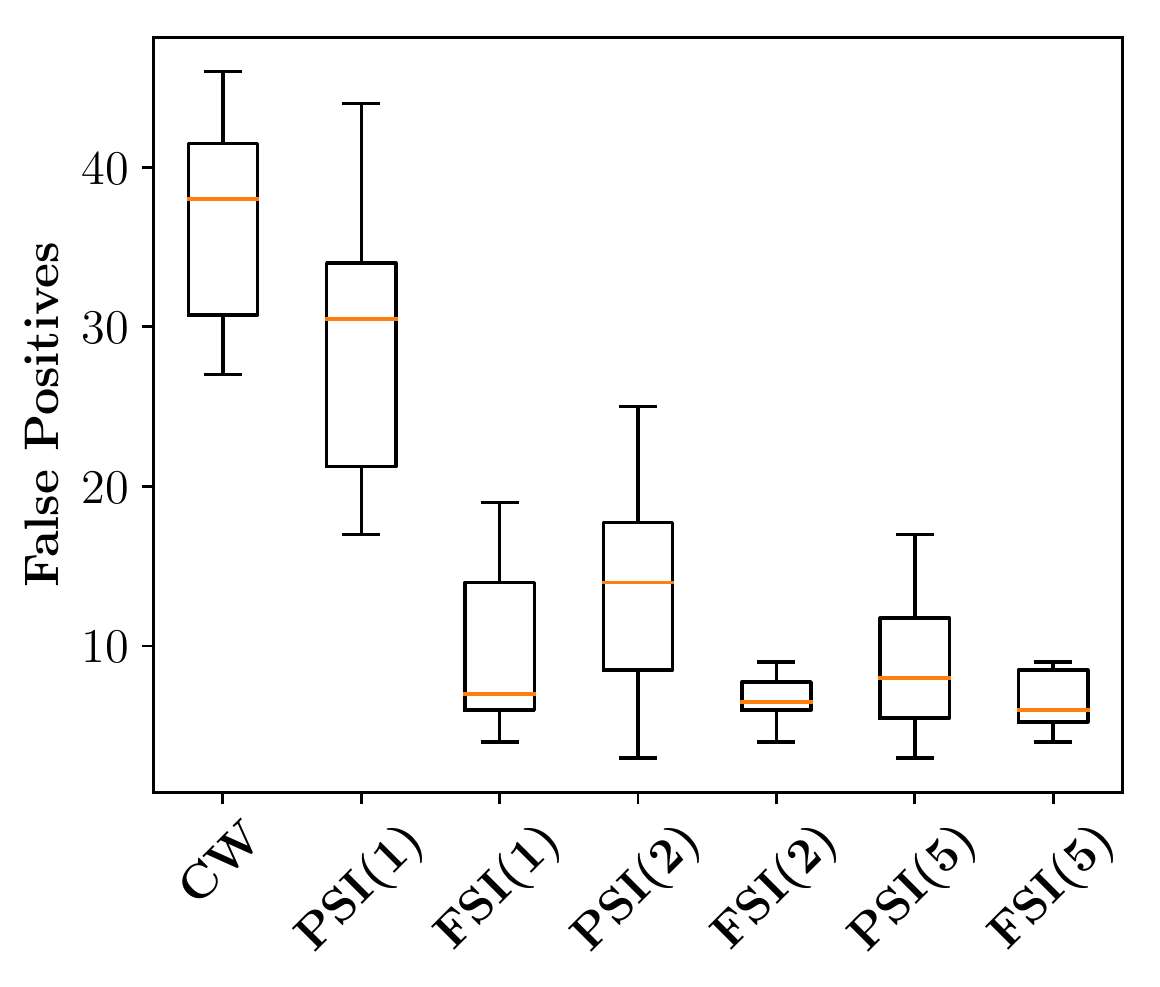}
\caption{Box plots showing the distribution of objective values, number of true positives, and number of false positives for the different classes of local minima.}
\label{fig:localminima}
\end{figure}
In the supplementary, we present an experiment studying the evolution of the intermediate solutions before Algorithm~\ref{alg:mipcd} reaches an FSI($k$) minimum---we observe that CD is effective at increasing the true positives, while local combinatorial search significantly reduces the false positives.

\subsection{Large High-dimensional Experiments} \label{section:high-dim-exp}
\pparagraph{Synthetic Experiments:} Here, we investigate the performance of the different algorithms when $p \gg n$. We ran two experiments with a large number of features under the following settings:
\begin{itemize}
\item \textbf{Setting 1}: Exponential Correlation, $\rho = 0.5$, $n = 1000$, $p=10^5/2$, $k^{\dagger} = 100$, and SNR $=10$
\item \textbf{Setting 2}: Constant Correlation, $\rho = 0.3$, $n = 1000$, $p = 10^{5}$, $k^{\dagger} = 50$, and SNR $=100$
\end{itemize}
Every experiment is performed with 10 replications, and the results are averaged. We report the results for Settings 1 and 2 in Table~\ref{table:exp}. 
\begin{table}[h!]
{\scalebox{.91}{\begin{tabular}{ll}
\centering
\hspace{+2cm} \textbf{Setting 1} $(n=1000,p=10^5/2, \rho=0.5)$ & \hspace{-7cm} \textbf{Setting 2} $(n=1000,p=10^5,\rho=0.3)$ \\
\begin{tabular}{@{}lcccc|}
\toprule
\textbf{Method} & $\| \beta \|_0$ & \textbf{TP} & \textbf{FP} & \textbf{PE}$\times 10^{2}$ \\
 \midrule
Alg 2 ($L_0$)         & $ 160 \pm 24 $               & $ 79 \pm 9$                   & $ 81 \pm 33$                     & $ 5 \pm 1.6$                   \\
Alg \ref{alg:CD} ($L_0 L_2$)            & $ \bf 100 \pm 0 $               & $ \bf 100 \pm 0 $                    & $ \bf 0 \pm 0 $                      & $ \bf 0.97 \pm 0.05$                \\
Alg \ref{alg:CD} ($L_0 L_1$)            & $ \bf 100 \pm 0 $              & $ \bf 100 \pm 0 $                    & $ \bf 0 \pm 0 $                      & $ 1 \pm 0.05$                 \\
L1              & $ 808 \pm 7 $                 & $ 95 \pm 1$                    & $ 712 \pm 7$                   & $ 7.9 \pm 0.17 $                \\
L1Relaxed       & $ 602  \pm 40$                 & $ 95 \pm 1$                    & $ 508 \pm 41$                    & $ 7.9 \pm 0.19$                  \\
MCP             & $ 102 \pm 1 $                 & $ \bf 100 \pm 0 $                    & $ 2.3 \pm 1$                      & $ \bf 0.97 \pm 0.05$                 \\ 
FStepwise       & $ 216 \pm 17$          & $ 64 \pm 7$                    & $ 152 \pm 23$                    & $ 8.9 \pm 1.3$                     \\\bottomrule
\end{tabular}
\begin{tabular}{cccc@{}}
\toprule
 $\| \beta \|_0$ & \textbf{TP} & \textbf{FP} & \textbf{PE}$\times 10^{3}$ \\
  \midrule
          $ 69 \pm 18$ &            $ 47 \pm 3$ &             $ 22 \pm 22$ &           $1.6 \pm 1$  \\
          $\bf 50 \pm 0$ &             $\bf 50 \pm 0$ &              $\bf 0 \pm 0$ &           $ \bf 0.5 \pm 0.02$ \\
        $\bf 50 \pm 0$ &             $\bf 50 \pm 0$ &              $\bf 0 \pm 0$ &           $ \bf 0.5 \pm 0.02$   \\
          $478 \pm 11$ &            $\bf 50 \pm 0$ &            $ 428 \pm 11 $ &           $ 4.7 \pm 0.1$ \\
         $ 385 \pm 12$ &            $ 50 \pm 0.2$ &            $ 335 \pm 13$ &          $ 4.4 \pm 0.2$ \\
     $65 \pm 3$ &             $\bf 50 \pm 0$ &             $15 \pm 3$ &           $ 3.5 \pm 0.13$     \\ 
           $75 \pm 2$ &            $ \bf 50 \pm 0 $ &             $ 25 \pm 2$ &           $ 1.1\pm 0.07$ \\
     \bottomrule
\end{tabular}
\end{tabular}}}
\caption{Performance measures for the different algorithms under Settings 1 and 2. TP, FP, and PE denote the True Positives, False Positives, and Prediction Error, respectively. The standard error of the mean is reported next to every value.}
\label{table:exp}
\end{table}

In Table \ref{table:exp}, Algorithm \ref{alg:CD} for the $(L_0 L_1)$ and $(L_0 L_2)$ problems fully recovers the true support and attains the lowest prediction error. None of the other methods was able to do full support recovery; Lasso and Relaxed Lasso capture most of the true positives but include a very large number of false positives. MCP comes in the middle between $(L_0 L_1)$/$(L_0 L_2)$ and Lasso---it captures all the true positives and includes few false positives. We also note that in such high SNR settings, we do not expect shrinkage (arising from the $L_1$/$L_2$ penalties) to lead to major statistical improvements. Thus, the difference in performance between $(L_0)$ and $(L_0 L_1)$/$(L_0 L_2)$ seems to be due to the continuous regularizers that help in optimization.

\pparagraph{Timings and Out-of-sample Performance:} 
We ran Algorithm~1 using our toolkit \texttt{L0Learn} and compared the running time and predictive performance versus \texttt{glmnet} and \texttt{ncvreg}, on a variety of real and synthetic datasets. For the real datasets, there is no ground truth---we study predictive performance vis-a-vis model sparsity. We note that \texttt{L0Learn}, 
\texttt{glmnet}, and \texttt{ncvreg} are solving different optimization problems---the run times provided herein are meant to demonstrate that a main workhorse for our proposed framework is competitive when compared to efficient state-of-the-art implementations for sparse learning.  
Below we provide some details about the datasets:
\begin{itemize}[leftmargin=*]
\item \textbf{House Prices}: $p = 104,000$ and $n = 200$. We added pairwise interactions to the popular Boston House Prices dataset \cite{bostonhouseprices} to get 104 features. Then, we added random ``probes'' (aka noisy features) by appending to the data matrix 1000 random permutations of every column. The validation and testing sets have 100 and 206 samples, respectively. 
\item \textbf{Amazon Reviews}: $p = 17,580$ and $n = 2500$. We used the Amazon Grocery and Gourmet Food dataset \cite{amazonreviews} to predict the helpfulness of every review (based on its text). Specifically, we calculated the helpfulness of every review as the ratio of the number of up votes to that of down votes, and we obtained $X$ by using Scikit-learn's TF-IDF transformer (while removing stopwords). The validation and testing sets have 500 and 1868 samples, respectively. We also created an augmented version of this dataset where we added random probes by appending to the data matrix 9 random permutations of every column to get $p = 174,755$.
\item \textbf{US Census}: $p = 55,537$ and $n=5000$. We used 37 features extracted from the 2016 US Census Planning Database to predict the mail-return rate\footnote{We thank Dr. Emanuel Ben David, US Census Bureau for help on preparing this dataset.} \cite{uscensus}. We appended the data matrix with 1500 random permutations of every column, and we randomly sampled 15,000 rows, evenly distributed between the training, testing, and validation sets.
\item \textbf{Gaussian 1M}: $p=10^6$ and $n=200$. We generated a synthetic dataset with independent standard normal entries. We set $k^{\dagger} = 20$, SNR=10, and performed validation and testing  as described in Section~\ref{section:expsetup}.
\end{itemize}
For all real datasets, we tuned and tested on separate validation and testing sets. The timings were performed on a machine with an i7-4800MQ CPU and 16GB RAM running Ubuntu 16.04 and OpenBLAS 0.2.20. For all methods, we report the training time required to obtain a grid of 100 solutions. For $(L_0 L_2)$, $(L_0 L_1)$, and MCP, we provide the time for a fixed $\lambda_2$, $\lambda_1$, and $\gamma$, respectively (these parameters have been set to the optimal values obtained via validation set tuning over 10 values of the tuning parameter). Table~\ref{table:timings} presents run times for all the four methods.

The results presented in Table~\ref{table:timings} show the following: \texttt{L0Learn} is faster than \texttt{glmnet} and \texttt{ncvreg} on all the considered datasets, e.g., more than twice as fast on the Amazon Reviews dataset. The speed-ups can be attributed to the careful design of \texttt{L0Learn} (as described in Section~\ref{section:RegPath}) and due to the nature of $L_0$ regularization which generally selects sparser supports than those obtained by $L_1$ or MCP regularization. Moreover, \texttt{L0Learn}, for both the ($L_0 L_2$) and ($L_0 L_1$) problems, provides much sparser supports and competitive testing MSE compared to the other toolkits. Finally, we note that prediction errors for our methods can be potentially improved by using Algorithm~\ref{alg:mipcd}, at the cost of slightly increased computation times.
\begin{table}[h]
\centering
\scalebox{0.95}{\begin{tabular}{ll}
\hspace{2.5cm} \textbf{Amazon Reviews} & \hspace{1cm} \textbf{Amazon Reviews (+Probes)} \\
\hspace{2.0cm} \text{($p = 17,580$, $n = 2500$)} & \hspace{2cm} \text{($p = 174,755$, $n = 2500$)} \\
\begin{tabular}{lccc} 
\toprule
\textbf{Toolkit} & \multicolumn{1}{l}{\textbf{Time}} & \multicolumn{1}{l}{\textbf{MSE$\scriptscriptstyle \times 10^2$}} & \multicolumn{1}{l}{$\|\beta \|_0$} \\ \midrule
glmnet (L1)     &  7.3                               & 4.82                            & 542                             \\
L0Learn ($L_0 L_2$)   & 3.3                               & \bf 4.77                            & \bf 159                              \\
L0Learn ($L_0 L_1$)   & \bf 2.8                               & 4.79                            & 173                              \\
ncvreg (MCP)     & 10.9                              & 6.71                            & 1484                             \\ \bottomrule
\end{tabular} &
\begin{tabular}{lccc}
\toprule
\textbf{Toolkit} & \multicolumn{1}{l}{\textbf{Time}} & \multicolumn{1}{l}{\textbf{MSE$\scriptscriptstyle \times 10^{2}$}} & \multicolumn{1}{l}{$\|\beta \|_0$} \\ \midrule
glmnet (L1)     & 49.4                               & \bf 5.11                             & 256                          \\
L0Learn ($L_0 L_2$)   &  31.7                               & 5.18                             & 37                               \\
L0Learn ($L_0 L_1$)   &  \bf 29.5                              & 5.20                             & \bf 36                              \\
ncvreg (MCP)     &67.3                               & 5.33                             & 318                              \\ \bottomrule
\end{tabular} \\ \\
\hspace{2.5cm} \textbf{US Census} & \hspace{2.5cm} \textbf{House Prices} \\
\hspace{1.5cm} \text{($p = 55,537$, $n = 5000$)} & \hspace{1.5cm} \text{($p = 104,000$, $n = 200$)} \\
\begin{tabular}{lccc}
\toprule
\textbf{Toolkit} & \multicolumn{1}{l}{\textbf{Time}} & \multicolumn{1}{l}{\textbf{MSE}} & \multicolumn{1}{l}{$\|\beta \|_0$} \\ \midrule
glmnet (L1)     & 28.7                              & 61.3                             & 222                              \\
L0Learn ($L_0 L_2$)   &  19.6                               & \bf 60.7                             & 15                               \\
L0Learn ($L_0 L_1$)   & \bf 19.5                               & 60.8                            & \bf 11                              \\
ncvreg (MCP)     & 32.7                              & 62.02                             & 16                              \\ \bottomrule
\end{tabular} &
\begin{tabular}{lccc}
\toprule
\textbf{Toolkit} & \multicolumn{1}{l}{\textbf{Time}} & \multicolumn{1}{l}{\textbf{MSE}} & \multicolumn{1}{l}{$\|\beta \|_0$} \\ \midrule
glmnet (L1)     & 2.3                               & 100                             & 112                              \\
L0Learn ($L_0 L_2$)   & \bf 1.8                               & \bf 94                             & \bf 59                               \\
 L0Learn ($L_0 L_1$)   & \bf 1.8                               & 104                             & 74                              \\
ncvreg (MCP)     &  3.9                               & 102                             & 140                              \\ \bottomrule 
\end{tabular} \\ \\
\hspace{2.5cm} \textbf{Gaussian 1M} \\
\hspace{2.5cm} \text{($p = 10^{6}$, $n = 200$)} \\
\begin{tabular}{lccc}
\toprule
\textbf{Toolkit} & \multicolumn{1}{l}{\textbf{Time(s)}} & \multicolumn{1}{l}{\textbf{MSE}} & \multicolumn{1}{l}{$\|\beta \|_0$} \\ \midrule
glmnet (L1)     &        22.5                       &    \bf 4.55                         &    185                           \\
L0Learn ($L_0 L_2$)   &       \bf 16.5                        &     4.64                         &   \bf 11                           \\
L0Learn ($L_0 L_1$)   &         16.7                       &     5.12                        &     15                          \\
ncvreg (MCP)     &        36.5                       &           4.85                  &     147                       \\ \bottomrule
\end{tabular}
\end{tabular}}
\caption{Training time (in seconds), out-of-sample MSE, and the corresponding support sizes for a variety of high-dimensional datasets. The training time is for obtaining a regularization path with 100 solutions.}
\label{table:timings}
\end{table}
\section{Conclusion}
We proposed new algorithms for Problem~\eqref{problem:intro}, based on a combination of cyclic coordinate descent and local combinatorial search, and studied their convergence properties.
Our algorithms are inspired by a hierarchy of necessary optimality conditions for Problem~\eqref{problem:intro}, with solutions higher up the hierarchy being of higher quality.
In terms of optimization performance, Algorithm~1 leads to better solutions and is faster than IHT and random CD.
Our local optimization algorithms (Algorithm~\ref{alg:mipcd}) often lead to further improvements over Algorithm~1. In many difficult settings, solutions from Algorithm~\ref{alg:mipcd} match those of global MIO solvers for Problem~\eqref{problem:intro}, while running much faster.

Our algorithms shed interesting insights onto the statistical properties of high-dimensional regression---in terms of variable selection, 
estimation error, prediction error vis-a-vis 
problem parameters ($n,p,\text{SNR},\beta^\dagger$ and $\Sigma$). 
There is no overall winner among the vanilla versions of Lasso, stepwise, or $L_0$, across different settings---modifications such as Problem~\eqref{problem:intro} or Relaxed Lasso~\cite{onbestsubset} seem necessary.
{In low signal settings (e.g., low SNR or small $n$), where recovery (in terms of a small estimation error or full support recovery) seems impossible, one can hope to get a good predictive model that is also sparse. In these regimes, $(L_0 L_2)$, Elastic Net, and ridge typically achieve the best predictive performance, with $(L_0 L_2)$ selecting much smaller support sizes.} 
We observe that estimators arising from Problem~\eqref{problem:intro} typically outperform the state-of-the-art sparse learning algorithms in terms of a combination of metrics (prediction, variable selection and estimation), across a wide range of settings; and promise to be an appealing alternative to the Relaxed Lasso~\cite{onbestsubset}. Our proposed algorithms allow us to uncover regimes (previously unseen due to computational limitations) where there are important differences between $L_0$-based estimators and existing popular algorithms (based on $L_1$, stepwise selection, IHT, etc.). 
We provide an open-source implementation of the algorithms through our toolkit \texttt{L0Learn}, which achieves up to a 3x speed-up when compared to competing toolkits.

\subsection*{Acknowledgements}
The authors would like to thank the editors and 
reviewers for their insightful suggestions that led to an improvement of the paper. The authors acknowledge research funding from the Office of Naval Research ONR-N000141512342, ONR-N000141812298 (Young Investigator Award), the National Science Foundation (NSF-IIS-1718258), and MIT. 




\bibliographystyle{plainnat_my}
\small{\bibliography{ref}}

\begin{appendix}

\newpage

\section{Appendix: Proofs and Technical Details}\label{sec:appendix}

\subsection{Proof of Lemma \ref{lemma:stationarity}}

\begin{proof}
For any $d \in \mathbb{R}^d$, we will show that $F^{'}(\beta;d)$ is given by:
	\begin{align}
		F^{'}(\beta;d) & = \begin{cases}
		\langle \nabla_S f(\beta), d_S \rangle & \text{ if } d_{S^c} = 0 \\
		\infty & \text{ o.w. }
		\end{cases}
		\label{eq:directionalderF}
	\end{align}
Let $d$ be an arbitrary vector in $\mathbb{R}^p$. Then,
	\begin{align*}
		F^{'}(\beta;d) & = \liminf_{\alpha \downarrow 0} \left\{\frac{F(\beta + \alpha d) - F(\beta)}{\alpha} \right\}  \\
						   & =\liminf_{\alpha \downarrow 0} \Big \{ \underbrace{\frac{f(\beta + \alpha d) - f(\beta)}{\alpha}}_{\text{Term I}} + \underbrace{\lambda_0 \sum_{i \in S} \frac{ \|\beta_i + \alpha d_i \|_0 - 1}{\alpha}}_{\text{Term II}}  + \underbrace{ \lambda_0 \sum_{j \notin S} \frac{ \|\alpha d_j \|_0}{\alpha}}_{\text{Term III}} \Big \}
                                   	\end{align*}
First we note that $\lim_{\alpha \downarrow 0} \text{Term II} = 0$ since for any $i \in S$, $\|\beta_i + \alpha d_i\|_0 = 1$ for sufficiently small $\alpha$. Suppose $d_{S^c} = 0$. Then, the continuity of $f$ implies that $\lim_{\alpha \downarrow 0} \text{Term I} = f^{'}(\beta_S;d_S) = \langle \nabla_S f(\beta), d_S \rangle$, where the second equality follows by observing that 
$\beta_S \to f(\beta_S)$ is continuously differentiable (in the neighborhood of $\beta_S$). Also, Term III $= 0$. Therefore, we have: 
$$F^{'}(\beta;d) = \lim_{\alpha \downarrow 0} \text{Term I} + \lim_{\alpha \downarrow 0} \text{Term II} = \langle \nabla_S f(\beta), d_S \rangle.$$ 
We now consider the case when $d_{S^c} \neq 0$. In this case, $\lim_{\alpha \downarrow 0} \text{Term III} = \infty$; and since the limit of Term I is bounded, we  have $F^{'}(\beta;d) = \infty$. Thus, we have shown that \eqref{eq:directionalderF} holds. From~\eqref{eq:directionalderF}, we have $F^{'}(\beta;d) \geq 0$ for all $d$ iff $\nabla_S f(\beta) = 0$.
\end{proof}

\subsection{Proof of Lemma \ref{lemma:univariate}}
\begin{proof}
Let $g(u)$ denote the objective function minimized in (\ref{eq:thresholdingmap}), i.e., 
$$g(u) := \frac{1 + 2\lambda_2}{2} \Big(u - \frac{\widetilde{\beta}^{*}_i}{1+2\lambda_2} \Big)^{2} + \lambda_1 |u| + \lambda_0 \mathds{1}[u \neq 0].$$
If $|\widetilde{\beta}^{*}_i| > \lambda_1$, then 
$\min_{u \neq 0} g(u)$ is attained by
$\widehat{u} = \frac{\sign(\widetilde{\beta}^{*}_i)}{1+ 2\lambda_2} (|\widetilde{\beta}^{*}_i|- \lambda_1)$ (this is the well-known soft-thresholding operator).
Now, $g(\widehat{u}) < g(0)$ is equivalent to $\frac{|\widetilde{\beta}^{*}_i| - \lambda_1}{1+2\lambda_2} > \sqrt{2\lambda_0 \over 1+2 \lambda_{2}}$. Hence, $\widehat{u}$ is the minimizer of $g(u)$ when $\frac{|\widetilde{\beta}^{*}_i| - \lambda_1}{1+2\lambda_2} > \sqrt{2\lambda_0 \over 1+2 \lambda_{2}}$. Both $\widehat{u}$ and $0$ are minimizers of $g(u)$ if $\frac{|\widetilde{\beta}^{*}_i| - \lambda_1}{1+2\lambda_2} = \sqrt{2\lambda_0 \over 1+2 \lambda_{2}}$.
Finally, when $\frac{|\widetilde{\beta}^{*}_i| - \lambda_1}{1+2\lambda_2} < \sqrt{2\lambda_0 \over 1+2 \lambda_{2}}$, the function $g(u)$ is minimized at $u=0$. 
\end{proof}

\subsection{Proof of Theorem 1}
The proof of the theorem is similar to the proofs in~\cite{blumensath2009-acha, mazumder2017subset,bertsimas2015best} (which consider the cardinality constrained version of Problem~\eqref{problem:main}).

\subsection{Proof of Lemma \ref{lemma:descent}}
\begin{proof}
If $\beta^{k}$ is the result of a non-spacer step then $F(\beta^{k}) \leq F(\beta^{k-1})$ holds by definition. If $\beta^{k}$ is obtained after a spacer step, then $f(\beta^{k}) \leq f(\beta^{k-1})$. Since a spacer step cannot increase the support size of $\beta^{k-1}$, this implies that $\|\beta^{k}\|_0 \leq \|\beta^{k-1}\|_0$, and thus $F(\beta^k) \leq F(\beta^{k-1})$. Since $F(\beta^{k})$ is non-increasing and bounded below (by zero), it must converge to some $F^{*} \geq 0$.
\end{proof}

\subsection{Proof of Lemma \ref{lemma:bdsuppsize}}
\begin{proof}
The result holds trivially if $p \leq n$. Suppose $p > n$. In the $(L_0)$ problem, Assumption~2 states that $F(\beta^{0}) \leq \lambda_0 n$. Since 
Algorithm~1 is a descent method (by Lemma \ref{lemma:descent}) we have $F(\beta^k) \leq \lambda_0 n$ for every $k$, which implies $f(\beta^k) + \lambda_0 \|\beta^k\|_0 \leq \lambda_0 n$ and hence, $\lambda_0 \|\beta^k\|_0 \leq \lambda_0 n$. Therefore, $\|\beta^k\|_0 \leq n$ for all $k$.
Similarly, for the $(L_0 L_1)$ problem, Assumption~2 and the descent property imply $F(\beta^k) \leq f(\beta^{\ell_1}) + \lambda_0 n$ which can be equivalently written as $f(\beta^k) - f(\beta^{\ell_1}) \leq \lambda_0 ( n - \|\beta^k\|_0)$. But the optimality of the lasso solution implies $f(\beta^k) - f(\beta^{\ell_1}) \geq 0$, which leads to $\|\beta^k\|_0 \leq n$.
\end{proof}

\subsection{Proof of Theorem~\ref{theorem:convergence}}
Before presenting the proof of Theorem \ref{theorem:convergence}, we present some necessary lemmas. First, we recall how the iterates are indexed by Algorithm, 1. If $\beta^{l}$ is obtained after performing a spacer step, then $\beta^{l-1}$ corresponds to a non-spacer step---by this time, 
a certain support has occurred for $Cp$ times. 
Suppose, $\beta^k$ denotes the current value of $\beta$ in Algorithm~1. If the next step is a non-spacer step, then $\beta^{k+1}$ is obtained from $\beta^k$ by updating a single coordinate. Otherwise, if the next step is a spacer step, then \emph{all} the coordinates inside the support of $\beta^{k}$ will be updated to get $\beta^{k+1}$.

The following lemma shows that the sequence generated by Algorithm \ref{alg:CD} is bounded. 
\smallskip
\begin{lemma}
\label{lemma:boundedness}
The sequence $\{ \beta^k \}$ is bounded.
\begin{proof}
For the $(L_0 L_1)$ and $(L_0 L_2)$ problems, for all $k$, $\beta^{k}$ belongs to the level set $G=\{ \beta \in \mathbb{R}^p \ | \  F(\beta) \leq F(\beta^{0}) \}$
where $\beta^{0}$ is an initial solution. Since in both cases $F(\beta)$ is coercive, $G$ is bounded and therefore, $\{\beta^k \}$ is bounded.

We now study the ($L_0$) problem. 
Firstly, if $p \leq n$, then the objective function for the  ($L_0$) problem is coercive (under Assumption \ref{assumption:Xlinindp}), and the previous argument used for $(L_0 L_1) / (L_0 L_2)$ applies. Otherwise, suppose that $p > n$. Recall that from Lemma \ref{lemma:bdsuppsize}, we have $\|\beta^{k}\|_0 \leq n $ for all $ k \geq 0$; and from Assumption~2, we have $F(\beta^0) \leq \lambda_0 n$. In addition, by Lemma \ref{lemma:descent}, we have  $F(\beta^k) \leq \lambda_0 n$ for every $k$.  Therefore, it follows that $\beta^{k} \in A$ where,  
\begin{equation*}
    A =  \bigcup\limits_{S \subseteq [p], |S| \leq n} A_{S}, \ \text{ and } \ 
A_{S} = \{ \beta \in \mathbb{R}^p \ | \  \frac{1}{2} \|y-X_S\beta_S\|^{2} \leq \lambda n, \  \beta_{S^{c}} = 0  \}.
\end{equation*}
Note that in every $A_S$, the only components of $\beta$ that might be non-zero are in $\beta_S$. 
By Assumption \ref{assumption:Xlinindp}, the level set 
$\{ \beta_{S} | \frac{1}{2} \|y-X_S\beta_S\|^{2} \leq \lambda n\} \subseteq \mathbb{R}^{|S|}$ is bounded,
which implies that $A_S$ is bounded. Since $A$ is the union of a finite number of bounded sets, it is also bounded.
\end{proof}
\end{lemma}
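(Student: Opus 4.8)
The plan is to confine all iterates to a bounded region by splitting into the cases where $F$ is coercive and the single case where it is not. The descent property (Lemma~\ref{lemma:descent}) guarantees $F(\beta^k) \le F(\beta^0)$ for every $k$, so the entire sequence lives in the sublevel set $G = \{\beta \in \mathbb{R}^p : F(\beta) \le F(\beta^0)\}$. Whenever $F$ is coercive, $G$ is bounded and we are done; the only genuinely delicate situation is the pure $(L_0)$ problem with $p > n$, where $F$ fails to be coercive.

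For the $(L_0L_1)$ and $(L_0L_2)$ problems I would argue coercivity directly. In the $(L_0L_2)$ case the term $\lambda_2\|\beta\|_2^2$ forces $F(\beta) \to \infty$ as $\|\beta\| \to \infty$; in the $(L_0L_1)$ case the same conclusion follows from $\lambda_1\|\beta\|_1 \ge \lambda_1\|\beta\|_2$, so $f$ (hence $F$) is coercive and $G$ is bounded. Since the nonnegative term $\lambda_0\|\beta\|_0$ only increases $F$, coercivity of $f$ suffices. This immediately yields boundedness of $\{\beta^k\}$ in both cases, with no need for Assumptions~\ref{assumption:Xlinindp} or~\ref{assumption:initial}.

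For the $(L_0)$ problem I would first dispose of the case $p \le n$: here $m = \min\{n,p\} = p$, so Assumption~\ref{assumption:Xlinindp} makes all $p$ columns of $X$ linearly independent, the least-squares loss $\tfrac12\|y - X\beta\|^2$ is coercive, and the sublevel-set argument above applies verbatim. The substantive case is $p > n$. Here I would combine three facts: Lemma~\ref{lemma:bdsuppsize} gives $\|\beta^k\|_0 \le n$ for all $k$; Assumption~\ref{assumption:initial} gives $F(\beta^0) \le \lambda_0 n$; and the descent property then propagates this to $F(\beta^k) \le \lambda_0 n$ for all $k$. Reading $\|\beta^k\|_0 \le n$ and $F(\beta^k) \le \lambda_0 n$ together forces $\tfrac12\|y - X_S\beta^k_S\|^2 \le \lambda_0 n$ on $S = \mathrm{Supp}(\beta^k)$, so each iterate lies in the finite union
\begin{equation*}
A = \bigcup_{S \subseteq [p],\, |S| \le n} A_S, \qquad A_S = \Big\{\beta \in \mathbb{R}^p : \tfrac12\|y - X_S\beta_S\|^2 \le \lambda_0 n,\ \beta_{S^c} = 0\Big\}.
\end{equation*}
Since $|S| \le n = m$, Assumption~\ref{assumption:Xlinindp} guarantees $X_S$ has full column rank, so each $A_S$ is a bounded subset (its defining quadratic is coercive in $\beta_S$), and a finite union of bounded sets is bounded.

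The main obstacle is exactly this last case. Because the $L_0$ term is bounded and, with $p > n$, the quadratic loss alone is not coercive, there is no global sublevel-set bound to exploit---which is precisely why Assumptions~\ref{assumption:Xlinindp} and~\ref{assumption:initial} (together with the support cap of Lemma~\ref{lemma:bdsuppsize}) are needed only here. The key idea is that these three ingredients jointly localize the iterates to a finite collection of bounded ``slices,'' each bounded by the full-rankness of the corresponding submatrix $X_S$, even though the full objective $F$ is not coercive.
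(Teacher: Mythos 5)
Your proof is correct and follows essentially the same route as the paper's: the coercivity/sublevel-set argument handles $(L_0L_1)$, $(L_0L_2)$, and $(L_0)$ with $p \le n$, while for $(L_0)$ with $p > n$ you combine Lemma~\ref{lemma:bdsuppsize}, Assumption~2, and the descent property of Lemma~\ref{lemma:descent} to confine the iterates to the same finite union $A = \bigcup_{|S| \le n} A_S$ of slices, each bounded via the full column rank of $X_S$ from Assumption~\ref{assumption:Xlinindp}. Your only additions are the explicit coercivity justifications (e.g., $\lambda_1\|\beta\|_1 \ge \lambda_1\|\beta\|_2$), which the paper asserts without detail.
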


The next lemma characterizes the limit points of Algorithm \ref{alg:CD}.
\smallskip
\begin{lemma}
\label{lemma:spacer}
Let $S$ be a support that is generated infinitely often by the non-spacer steps, and let $\{ \beta^{l} \}_{l \in L}$ 
be the sequence of spacer steps generated by $S$. Then, the following hold true:
\begin{enumerate}
\item There exists an integer $N$ such that for all $l \in L$ and $l \geq N$ we have Supp$(\beta^{l}) = S$.
\item There exists a subsequence of $\{ \beta^{l} \}_{l \in L}$ that converges to a stationary solution $\beta^{*}$, where, 
$\beta^*_{S}$ is the unique minimizer of $\min_{\beta_S} f(\beta_S)$ and $\beta^*_{S^c}=0$.
\item Every subsequence of $\{\beta^{k} \}_{k \geq 0}$ with support $S$ converges to $\beta^{*}$ (as in Part 2, above).
\item $\beta^{*}$ satisfies $|\beta^{*}_j| \geq \sqrt{2\lambda_0 \over 1+2\lambda_2}$ for every $j \in S$. 
\end{enumerate}
\begin{proof}
{\bf{Part 1.)}} Since the spacer steps optimize only over the coordinates in $S$, no element from outside $S$ can be added to the support by the spacer step. Thus, for every $l \in L$ we have $\text{Supp}(\beta^{l}) \subseteq S$. 
We now show that strict containment is not possible 
using the method of contradiction. To this end, suppose $\text{Supp}(\beta^{l}) \subsetneq S$ occurs infinitely often; and let us consider some $l \in L$ at which this occurs. By Algorithm \ref{alg:CD}, the previous iterate $\beta^{l-1}$ has a support $S$,
which implies $\|\beta^{l-1}\|_0 - \|\beta^{l}\|_0 \geq 1$. Moreover, from the definition of the spacer step we have $f(\beta^{l}) \leq f(\beta^{l-1})$. Therefore, we get $$F(\beta^{l-1}) - F(\beta^{l}) = \underbrace{f(\beta^{l-1}) - f(\beta^{l}) }_{\geq 0} + \lambda_0(\underbrace{\|\beta^{l-1}\|_0 - \|\beta^l\|_0}_{\geq 1} ) \geq \lambda_0.$$ Thus, every time the event $\text{Supp}(\beta^{l}) \subsetneq S$ occurs, the objective $F$ decreases by at least $\lambda_0$. This contradicts the fact that $F$ is lower bounded by $0$, which establishes the result.

{\bf{Part 2.)}}
The proof follows the standard steps for proving the convergence of cyclic CD (e.g., as in \cite{bertsekas2016nonlinear}) --- we provide the proof for completeness. First, we introduce some additional notation for the proof. Fix some $l \in L$. By Algorithm~1, $\beta^{l-1}$ has a support $S$ which we assume (without loss of generality) to be $S = \{1,2,\dots, J\}$. We recall that to obtain $\beta^{l}$ from $\beta^{l-1}$, Algorithm~1 performs a spacer step---i.e, it starts from $\beta^{l-1}$ and updates every coordinate in $S$ via $T(\cdot,0,\lambda_1,\lambda_2)$. We denote the intermediate iterates generated sequentially   by the spacer step as: $\beta^{l,1}, \beta^{l,2}, \ldots, \beta^{l,J}$ where $\beta^{l,J} = \beta^{l}$. 

Since the support $S$ occurs infinitely often, we consider the infinite sequence $\{ \beta^{l,1} \}_{l \in L}$ (i.e., the sequence of intermediate spacer steps where coordinate $1$ is updated). By Lemma \ref{lemma:boundedness}, $\{ \beta^{l,1} \}_{l \in L}$ is bounded and therefore, there exists a further subsequence $\{ \beta^{l',1} \}_{l' \in L'}$ that converges to a limit point $\beta^{*}$ (say). We assume that for every $l' \in L'$ we have $l' \geq N$, which implies that $\beta^{l'-1}$, $\beta^{l',1}, \beta^{l',2}, \dots, \beta^{l',J}$ all have the support $S$ (this follows from Part 1 of this lemma). Next, we will show that $\beta^{l,2}$ also converges to $\beta^{*}$.

Fix some $l' \in L'$. Then, we have:
\begin{equation}\label{eqn-suff-decrease1}
F(\beta^{l',1}) - F(\beta^{l',2}) = f(\beta^{l',1}_S) - f(\beta^{l',2}_S)   \geq \frac{1+2\lambda_2}{2} (\beta^{l',1}_2 - \beta^{l',2}_2)^2
\end{equation}
where the last inequality follows by replacing $\beta^{l',2}_2$ with the expression given by the thresholding map in (\ref{eq:thresholding}) and simplifying. By Lemma \ref{lemma:descent},  $\{ F(\beta^k) \}$ converges; so taking the limit as $l' \to \infty$ in~\eqref{eqn-suff-decrease1} we get
$ \beta^{l',1}_2 - \beta^{l',2}_2 \rightarrow 0$ as $l' \to \infty$. Since $\beta^{l',1}  \to \beta^{*}$, we conclude that $\beta^{l',2}_2 \to \beta^{*}_2$. Therefore, $\beta^{l',2} \to \beta^{*}$. The same argument applies to $\beta^{l',i}$ and $\beta^{l',i+1}$ for every $i \in \{2,3,\dots,J-1\}$. Therefore, we conclude that for every $i \in \{1,2,\dots,J\}$ we have $\beta^{l',i} \to \beta^{*}$. 

Let $k', l' \in L'$ be such that $k' > l'$, then $f(\beta^{k'}) \leq f(\beta^{l',1}) \leq f(\beta_1, \beta^{l'}_2, \beta^{l'}_3, \dots )$ for any $\beta_1 \in \mathbb{R}$. As $k', l' \to \infty$, we get $f(\beta^{*}) \leq f(\beta_1, \beta^{*}_2, \beta^{*}_3, \dots )$ for any $\beta_1$. The same result applies to all other coordinates in $j$, which implies that $0 \in \partial f(\beta^{*}_1, \beta^{*}_2, \beta^{*}_2, \dots)$ (where, $\partial f(\cdot)$ denotes subgradient) and consequently $\beta^{*}$ is a stationary solution for $\min_{\beta_S} f(\beta_S)$.
Finally, Lemma \ref{lemma:bdsuppsize} and Assumption \ref{assumption:Xlinindp} for the $(L_0)$ and $(L_0 L_1)$ problems imply that $\beta_{S} \mapsto f(\beta_S)$ is strongly convex and has a unique minimizer --- hence $\beta^{*}$ is unique.

{\bf{Part 3.)}} By Part 2, there exists a subsequence $\{ \beta^{l'} \}_{l' \in L'}$ converging to $\beta^{*}$. By Part 1, for every $l' \geq N$ we have
$F(\beta^{l'}) = f(\beta^{l'}) + \lambda_0 |S|$. Taking $l' \to \infty$ and using the continuity of $f(\beta)$ we get:
\begin{align*}
\lim_{l' \to \infty} F(\beta^{l'}) = f(\beta^{*}) + \lambda_0 |S|.
\end{align*}
Now, consider any subsequence $\{ \beta^{k'} \}_{k' \in K'}$, where $K' \subseteq \{0,1,2,\dots\}$, such that the non-spacer steps in $K'$ have a support $S$. We will establish convergence of $\{ \beta^{k'} \}_{k' \in K'}$ via the method of contradiction.
To this end, suppose $\{ \beta^{k'} \}_{k' \in K'}$ has a limit point $\widehat{\beta}$ which is not equal to $\beta^{*}$. Then, there exists a subsequence 
$\{ \beta^{k''} \}_{k'' \in K''}$ (with $K'' \subseteq K'$) which converges to $\widehat{\beta}$. Then, for every $k'' \geq N$, we have $F(\beta^{k''}) = f(\beta^{k''}) + \lambda_0 |S|$. Taking the limit as $k'' \to \infty$ we get:
\begin{align*}
\lim_{k'' \to \infty} F(\beta^{k''}) = f(\widehat{\beta}) + \lambda_0 |S|.
\end{align*}
From Lemma \ref{lemma:descent}, we have $\lim_{l' \to \infty} F(\beta^{l'}) = \lim_{k'' \to \infty} F(\beta^{k''})$. This implies $f(\beta^{*}) = f(\widehat{\beta})$ and in particular, $f(\beta^{*}_S) = f(\widehat{\beta}_S)$ (since the supports of both limits points are a subset of $S$). 
However, by Part 2, we know that $\beta_S^{*}$ is the unique minimizer of $\min_{\beta_{S}} f(\beta_S)$---which leads to a contradiction. Hence, we conclude that 
$ \beta^{k'} \to \beta^{*}$ as $k' \rightarrow \infty$.

{\bf{Part 4.)}} Let $l_1$ and $l_2$ be the indices of any two consecutive spacer steps generated by the support $S$. Recall from Algorithm \ref{alg:CD} that $C \geq 1$; and the support $S$ must appear in $C p$ non-spacer steps between $l_1$ and $l_2$.
Fix some $i \in S$ . We will show that there exists a non-spacer step with index $k'$ such that $l_1<k'<l_2$, $\text{Supp}(\beta^{k'}) = S$, and $|\beta^{k'}_i| \geq \sqrt{\frac{2\lambda_0}{1+2\lambda_2}}$. We proceed by contradiction. To this end, suppose that such an index does not exist --- i.e., every non-spacer step that updates coordinate $i$ thresholds it to $0$.
Let $k_1$ denote the iteration index of the first non-spacer step between $l_1$ and $l_2$ that updates coordinate $i$.
In the $p$ coordinate updates after $l_1$, coordinate $i$ must be updated once, which implies $k_1 - l_1 \leq p$.
Since at iteration $k_1$, coordinate $i$ is set to $0$, the support $S$ can appear at most $p-1$ times between $l_1$ and $k_1$. Moreover, between $k_1$ and $l_2$, $S$ appears $0$ times --- this is because, coordinate $i$ never gets thresholded to a non-zero value by a non-spacer step. Therefore, $S$ appears for at most $p-1$ times between $l_1$ and $l_2$ -- this contradicts the fact that $l_2$ is the index after which $S$ appears in $Cp$ non-spacer steps. Therefore, we conclude that there exists an index $k'$ such that $l_1<k'<l_2$, $\text{Supp}(\beta^{k'}) = S$, and $|\beta^{k'}_i| \geq \sqrt{\frac{2\lambda_0}{1+2\lambda_2}}$.


Let us now fix some $i \in S$. 
By considering the infinite sequence of spacer steps generated by $S$ and applying the result we proved above to every two consecutive spacer steps, we can see that there exists an infinite subsequence $\{ \beta^{k'} \}$ of non-spacer iterates where, for every $k'$, we have $\text{Supp}(\beta^{k'}) = S$ and $|\beta^{k'}_i| \geq \sqrt{\frac{2\lambda_0}{1+2\lambda_2}}$. By Part 3 of this lemma, $\{ \beta^{k'} \}$ converges to the stationary solution $\beta^{*}$. Taking the limit $k \rightarrow \infty$ in inequality: $|\beta^{k}_i| \geq \sqrt{\frac{2\lambda_0}{1+2\lambda_2}}$, we conclude that $|\beta^{*}_i| \geq \sqrt{\frac{2\lambda_0}{1+2\lambda_2}}$.
\end{proof}
\end{lemma}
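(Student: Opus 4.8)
The plan is to establish the four parts in the order stated, drawing on the descent property (Lemma~\ref{lemma:descent}), boundedness (Lemma~\ref{lemma:boundedness}), and the support bound (Lemma~\ref{lemma:bdsuppsize}) together with Assumption~\ref{assumption:Xlinindp}. For Part 1, I would first observe that a spacer step only re-optimizes coordinates inside $S$, so it cannot introduce any index from $S^c$; hence $\text{Supp}(\beta^l)\subseteq S$ for every $l\in L$. To rule out strict containment occurring infinitely often I would argue by contradiction: whenever $\text{Supp}(\beta^l)\subsetneq S$, the preceding non-spacer iterate $\beta^{l-1}$ has support $S$, so $\|\beta^{l-1}\|_0-\|\beta^l\|_0\ge 1$, while the spacer step guarantees $f(\beta^l)\le f(\beta^{l-1})$. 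Combining these gives $F(\beta^{l-1})-F(\beta^l)\ge\lambda_0$, a drop of at least $\lambda_0$ each time, which can happen only finitely often since $F$ is bounded below by Lemma~\ref{lemma:descent}; this yields the required $N$.

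For Parts 2 and 3, I would adapt the classical convergence argument for cyclic coordinate descent applied to the convex map $\beta_S\mapsto f(\beta_S)$, whose only nonsmoothness (the $L_1$ term, when $\lambda_1>0$) is coordinate-separable, so that a coordinate-wise minimum is automatically a global minimum. This map is strongly convex: immediately for $(L_0L_2)$ via $\lambda_2>0$, and for $(L_0)$/$(L_0L_1)$ via Lemma~\ref{lemma:bdsuppsize} together with Assumption~\ref{assumption:Xlinindp}, which force $X_S$ to have full column rank. Boundedness (Lemma~\ref{lemma:boundedness}) lets me extract a subsequence of the intermediate spacer iterates converging to some $\beta^*$; indexing these iterates coordinate by coordinate and invoking the per-coordinate sufficient-decrease inequality, in which $F$ drops by at least $\frac{1+2\lambda_2}{2}$ times the squared change in the updated coordinate, together with the convergence of $\{F(\beta^k)\}$, forces successive intermediate iterates to share the limit $\beta^*$. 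This identifies $\beta^*$ as a coordinate-wise minimizer of $f|_S$, hence a global minimizer, and strong convexity makes it unique. For Part 3 I would use that, once the support stabilizes (Part 1), $F(\beta^l)=f(\beta^l)+\lambda_0|S|$ along the relevant iterates, so the common objective limit $F^*$ pins down $f(\beta^*_S)$; any other limit point of a support-$S$ subsequence would attain the same $f$-value and therefore, by uniqueness of the minimizer, must coincide with $\beta^*$.

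The main obstacle is Part 4, which demands a delicate bookkeeping of the spacer mechanism rather than any further analysis. I would fix two consecutive spacer indices $l_1<l_2$ generated by $S$ and recall that $S$ must appear in exactly $Cp$ non-spacer steps between them. For a fixed $i\in S$, I would show by contradiction that there is a non-spacer iterate $k'$ with $l_1<k'<l_2$, $\text{Supp}(\beta^{k'})=S$, and $|\beta^{k'}_i|\ge\sqrt{2\lambda_0/(1+2\lambda_2)}$: if coordinate $i$ were always thresholded to zero whenever updated, then letting $k_1$ be its first update (which occurs within $p$ steps of $l_1$), the support $S$ could appear at most $p-1$ times before $k_1$ and zero times afterward, contradicting the required $Cp$ appearances since $C\ge1$. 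Collecting such iterates over the infinitely many spacer pairs produces an infinite subsequence with $\text{Supp}=S$ and $|\beta^{k'}_i|\ge\sqrt{2\lambda_0/(1+2\lambda_2)}$; by Part 3 this subsequence converges to $\beta^*$, and passing to the limit gives the bound for every $i\in S$. The care needed here is purely combinatorial, aligning the thresholding map's retention threshold with the $Cp$ counting rule.
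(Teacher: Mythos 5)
Your proposal is correct and follows essentially the same route as the paper's proof in all four parts: the $\lambda_0$-drop contradiction for Part~1, the sufficient-decrease inequality $F(\beta^{l',1})-F(\beta^{l',2})\geq \frac{1+2\lambda_2}{2}(\beta^{l',1}_2-\beta^{l',2}_2)^2$ combined with convergence of $\{F(\beta^k)\}$ for Part~2, the uniqueness-of-minimizer argument via the common objective limit for Part~3, and the identical $Cp$-counting contradiction for Part~4. The only cosmetic difference is that in Part~2 you justify stationarity by noting the $L_1$ term is coordinate-separable (so coordinate-wise minima are global), whereas the paper passes to the limit in the per-coordinate optimality inequalities to conclude $0\in\partial f(\beta^*)$---the same underlying reasoning.
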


Lemma~\ref{lemma:infoften} shows that the support corresponding to any limit point of $\{\beta^k\}$ appears infinitely often.
\smallskip
\begin{lemma}
\label{lemma:infoften}
Let $B$ be a limit point of $\{\beta^k\}$ with $\text{Supp}(B) = S$, then $\text{Supp}(\beta^{k}) = S$ for infinitely many $k$.
\end{lemma}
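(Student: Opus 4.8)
The plan is to start from the defining subsequence and combine a pigeonhole argument on the (finitely many) supports with the magnitude lower bound coming from the thresholding operator, thereby reducing the claim to ruling out a single ``extra'' coordinate that stays nonzero along the subsequence yet vanishes in the limit. Concretely, let $\{\beta^{k_j}\}$ be a subsequence with $\beta^{k_j} \to B$. For each $i \in S = \text{Supp}(B)$ we have $B_i \neq 0$ and $\beta^{k_j}_i \to B_i$, so $\beta^{k_j}_i \neq 0$ for all large $j$; hence $S \subseteq \text{Supp}(\beta^{k_j})$ eventually. Since $\text{Supp}(\beta^{k_j})$ can take only finitely many values, some fixed support $S'$ with $S \subseteq S'$ occurs for infinitely many $j$; I would pass to the corresponding sub-subsequence, on which $\text{Supp}(\beta^{k_j}) = S'$. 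If $S' = S$, these iterates already witness $\text{Supp}(\beta^k) = S$ for infinitely many $k$ and we are done, so the whole content is to rule out $S \subsetneq S'$.

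Assume $S \subsetneq S'$ and fix $l \in S' \setminus S$. Then $\beta^{k_j}_l \neq 0$ for all $j$ (since $l \in S'$) while $\beta^{k_j}_l \to B_l = 0$. I would then argue by a dichotomy on the nature of the most recent update of coordinate $l$ at or before iteration $k_j$ (for large $j$ this update exists, by the cyclic rule). If this last update is a non-spacer step, then by the thresholding operator $T$ in~\eqref{eq:thresholding} the value assigned to $l$ is either $0$ or has magnitude at least $\sqrt{2\lambda_0/(1+2\lambda_2)}$; since $\beta^{k_j}_l \neq 0$ and has not changed since, $|\beta^{k_j}_l| \geq \sqrt{2\lambda_0/(1+2\lambda_2)}$, contradicting $\beta^{k_j}_l \to 0$. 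Hence, for all large $j$, the most recent update of $l$ must be a spacer step.

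In the remaining (spacer) case the key is to invoke Lemma~\ref{lemma:spacer}. Writing $\beta^{\sigma_j}$ for the spacer iterate that performed this most recent update of $l$, we have $\beta^{\sigma_j}_l = \beta^{k_j}_l \neq 0$ and $\beta^{\sigma_j}_l \to 0$, and this spacer step is generated by some support $S''_j \ni l$. As there are finitely many supports, some $S'' \ni l$ serves as the spacer-generating support for infinitely many of these $\sigma_j$; since triggering a spacer step on $S''$ requires $S''$ to appear $Cp$ times, $S''$ is generated infinitely often by the non-spacer steps, so Lemma~\ref{lemma:spacer} applies to it. By Part 1 of that lemma, $\text{Supp}(\beta^{\sigma_j}) = S''$ for all large $j$; by Part 3, $\beta^{\sigma_j} \to \beta^{*}$, the unique minimizer associated with $S''$; and by Part 4, $|\beta^{*}_l| \geq \sqrt{2\lambda_0/(1+2\lambda_2)} > 0$. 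This contradicts $\beta^{\sigma_j}_l \to 0$. Both cases being impossible, $S' = S$, which proves the claim.

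I expect the main obstacle to be the spacer steps. Because they are run with $\lambda_0 = 0$, their thresholding has zero gap and can momentarily produce arbitrarily small nonzero entries, so the clean per-iterate magnitude bound $\sqrt{2\lambda_0/(1+2\lambda_2)}$ that disposes of the non-spacer case does not apply to a spacer iterate. The delicate part is therefore to isolate the spacer case cleanly and discharge it through the \emph{limiting} magnitude bound of Lemma~\ref{lemma:spacer} (Part 4) together with its support-stabilization and uniqueness-of-limit statements (Parts 1 and 3), rather than through any bound holding at the individual iterate $\beta^{\sigma_j}$.
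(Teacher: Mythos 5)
Your proof is correct, but it takes a genuinely different route from the paper's at the decisive step. The paper argues by contradiction in three sentences: if $S$ occurred only finitely often, pigeonhole on the finitely many supports would give a subsequence with a \emph{constant} support $S' \neq S$ converging to $B$, and then Part 3 of Lemma~\ref{lemma:spacer} (every constant-support subsequence converges to the unique stationary solution associated with that support, which by Part 4 has support exactly $S'$) forces $B$ to have support $S'$, a contradiction. You never apply Lemma~\ref{lemma:spacer} to $S'$ itself; instead you note $S \subseteq S'$, fix an extra coordinate $l \in S' \setminus S$, and run a dichotomy on the most recent update of $l$: a non-spacer update is killed outright by the magnitude gap $\sqrt{2\lambda_0/(1+2\lambda_2)}$ built into the operator \eqref{eq:thresholding}, while a spacer update is killed by applying Parts 1, 3, and 4 of Lemma~\ref{lemma:spacer} to the spacer-\emph{generating} support $S''$. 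Your version is longer, but it buys something real: Lemma~\ref{lemma:spacer} requires its support to be generated infinitely often by non-spacer steps, and the paper's proof implicitly assumes this for $S'$ (the iterates with support $S'$ could, a priori, include spacer iterates; closing this requires an extra argument via Part 1). In your proof, the only support to which Lemma~\ref{lemma:spacer} is applied is $S''$, and its hypothesis is immediate from the $Cp$-counting mechanism, since each spacer step generated by $S''$ must be preceded by $Cp$ non-spacer appearances of $S''$. You also correctly identified the spacer steps (which threshold with $\lambda_0 = 0$ and hence have no magnitude gap) as the genuine obstacle, which is exactly why the limiting bound of Part 4, rather than any per-iterate bound, is needed there.
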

\begin{proof}
We prove this result by using a contradiction argument. To this end, suppose support $S$ occurs only finitely many times. Since there are only finitely many supports, there is a
support $S'$ with  $S'\neq S$; and a
subsequence $\{ \beta^{k'} \}$ of $\{\beta^k\}$ which satisfies: $\text{Supp}(\beta^{k'}) = S'$ for all $k'$; 
and $\beta^{k'} \rightarrow B$ as $k'\rightarrow \infty$. However, this is not possible 
by Part 3 of Lemma~\ref{lemma:spacer}.
\end{proof}

Lemma~\ref{lemma:xj} is technical and will be needed in the proof of convergence of Algorithm~1.
\smallskip
\begin{lemma}
\label{lemma:xj}
Let $B^{(1)}$ and $B^{(2)}$ be two limit points of the sequence $\{\beta^{k}\}$, with supports $S_1$ and $S_2$, respectively. Suppose that $S_2 = S_1 \cup \{j\} $ for some $j \notin S_1$. Then, exactly one of the following holds:
\begin{enumerate}
    \item If there exists an $i \in S_1$ such that $\langle X_i, X_j \rangle \neq 0$, then $|B^{(2)}_j| > \sqrt{\frac{2\lambda_0}{1+2\lambda_2}}$.
    
    \item Otherwise, if $\langle X_i, X_j \rangle = 0$ for all $i \in S_1$, then $|B^{(2)}_j| = \sqrt{2\lambda_0 \over 1+2\lambda_2}$. Furthermore, for any $\beta \in \mathbb{R}^p$ with $\text{Supp}(\beta) = S_2$, we have $|T(\widetilde{\beta}_j, \lambda_0, \lambda_1, \lambda_2)| = \sqrt{2\lambda_0 \over 1+2\lambda_2}$.
\end{enumerate}
\begin{proof}
We first derive an useful expression for $B^{(2)}_j$, which will help us establish parts 1 and 2 of this lemma. By Lemma~\ref{lemma:descent}, $F(\beta^{k})$ converges to a finite non-negative limit $F^{*}$. Lemmas \ref{lemma:spacer}  and \ref{lemma:infoften} imply that there is a subsequence $\{\beta^{k'}\}_{k' \in K'}$ that converges to $B^{(1)}$ and satisfies $\text{Supp}(\beta^{k'}) = S_1$ for every $k'$. 
As $k\to \infty$, we get: $F^{*} = f(B^{(1)}) + |S_1| = F(B^{(1)})$. Similarly, for $B^{(2)}$ 
we have $F^{*} = F(B^{(2)})$. Therefore,  $F(B^{(1)}) = F(B^{(2)})$, which is equivalent to 
$$
f(B_{S_1}^{(1)}) + \lambda_0 \|B_{S_1}^{(1)}\|_0  = f(B_{S_2}^{(2)}) + \lambda_0 \|B_{S_2}^{(2)}\|_0.
$$
Since $\|B_{S_2}^{(2)}\|_0  = \|B_{S_1}^{(1)}\|_0 + 1$, we can simplify the above to obtain:
\begin{equation}\label{equation:difflambda}
f(B_{S_1}^{(1)}) - f(B_{S_2}^{(2)}) = \lambda_0.
\end{equation}
The term $f(B_{S_2}^{(2)})$ can be rewritten as follows (using elementary algebraic manipulations)
\begin{align}
f(B_{S_2}^{(2)}) & = \frac{1}{2} \|y - X_{S_2} B^{(2)}_{S_2}\|^2 + \lambda_{1} \|B_{S_2}^{(2)}\|_1 + \lambda_{2} \|B_{S_2}^{(2)}\|^2_2 \nonumber \\ & = \frac{1}{2} \|y - X_{S_1} B^{(2)}_{S_1} - X_j B^{(2)}_{j}\|^2  + \lambda_{1} \|B_{S_1}^{(2)}\|_1 +  \lambda_{1} |B^{(2)}_{j}| +  \lambda_{2} \|B_{S_1}^{(2)}\|^2_2  + \lambda_{2} (B^{(2)}_{j})^2  \nonumber \\
& = \Big( \frac{1}{2} \|y - X_{S_1} B^{(2)}_{S_1}\|^2 + \lambda_{1} \|B_{S_1}^{(2)}\|_1+\lambda_{2} \|B_{S_1}^{(2)}\|^2_2 \Big) \nonumber \\ &  - \langle y - X_{S_1} B^{(2)}_{S_1}, X_j \rangle B^{(2)}_{j} + \frac{1}{2} \|X_j\|^2 {B^{(2)}_{j}}^2 +  \lambda_{1} |B^{(2)}_{j}| + \lambda_{2} (B^{(2)}_{j})^2  \nonumber \\ 
& \label{equation:xj} = f(B_{S_1}^{(2)}) - \langle y - X_{S_1} B^{(2)}_{S_1}, X_j \rangle B^{(2)}_{j} + \frac{1}{2} \|X_j\|^2 (B^{(2)}_{j})^2 +  \lambda_{1} |B^{(2)}_{j}| + \lambda_{2} (B^{(2)}_{j})^2.
\end{align}
From Lemma \ref{lemma:spacer} we know that $B^{(2)}$ is a stationary solution. Using the characterization of stationary solutions in (\ref{eq:stationary}) and rearranging the terms, we get: 
\begin{equation}
\begin{myarray}[1.5]{rcl}
\langle y - X_{S_1} B^{(2)}_{S_1}, X_j \rangle &=& (1 + 2\lambda_2) B^{(2)}_{j}  + \lambda_1 \text{sign}(\langle y - X_{S_1} B^{(2)}_{S_1}, X_j \rangle) \\
|\langle y - X_{S_1} B^{(2)}_{S_1}, X_j \rangle| &>& \lambda_1.
\end{myarray}
\end{equation}
Multiplying the first equation in the above by $B^{(2)}_{j}$ and using that the fact $\langle y - X_{S_1} B^{(2)}_{S_1}, X_j \rangle$ and $B^{(2)}_{j}$ have the same sign (which is evident from the system above), we arrive at
\begin{equation}
	\langle y - X_{S_1} B^{(2)}_{S_1}, X_j \rangle B^{(2)}_{j} = (1+2\lambda_{2}) (B^{(2)}_{j})^2 + \lambda_{1} |B^{(2)}_{j}|.
\end{equation}
Plugging in the above expression in the second term on the r.h.s of (\ref{equation:xj}) and using the fact that $\|X_j\|^2=1$ we get
\begin{equation}\label{line-diff-obj-1}
f(B_{S_2}^{(2)}) = f(B_{S_1}^{(2)}) - \frac{1+2\lambda_2}{2} (B^{(2)}_{j})^2.
\end{equation}
Substituting~\eqref{line-diff-obj-1} into equation (\ref{equation:difflambda}) and rearranging terms, we arrive at
\begin{equation}\label{equation:bj2}
|B^{(2)}_{j}| = \sqrt{ {2\lambda_0 \over 1+2\lambda_2} + {2 \over 1+2\lambda_2} \Big( f(B_{S_1}^{(2)}) - f(B_{S_1}^{(1)})  \Big) }.
\end{equation}
\noindent \textbf{Part 1.)} We consider Part 1, where there exists an $i \in S_1$ such that $\langle X_i, X_j \rangle \neq 0$. By Lemma \ref{lemma:spacer} we have that $B^{(1)}$ is a stationary solution. 
Thus, $B_{S_1}^{(1)} \in \argmin_{\beta_{S_1}}f(\beta_{S_1})$, and the following holds
\begin{equation}\label{equation:ineqb1opt}
f(B^{(1)}_{S_1}) \leq f(B^{(2)}_{S_1}).  
\end{equation}
We will show the inequality above is strict. To this end, suppose that~\eqref{equation:ineqb1opt} holds with equality. Lemma \ref{lemma:infoften} implies that $S_1$ appears in the sequence of iterates. But the function $f(\beta_{S_1})$ is strongly convex (this is trivial for ($L_0 L_2$) and holds due to Assumption \ref{assumption:Xlinindp} and Lemma \ref{lemma:bdsuppsize} for the ($L_0$) and ($L_0 L_1$) problems). Thus, $B_{S_1}^{(1)}$ is the  unique minimizer of $f(\beta_{S_1})$. Therefore, it must be the case that $B^{(1)}_{S_1} = B^{(2)}_{S_1}$, and in particular $B^{(1)}_i = B^{(2)}_i.$
By the characterization of stationary solutions in (\ref{eq:stationary}) we have:
\begin{equation}\label{equality-of-signs}
\begin{aligned}
\text{sign}(\langle y - X_{S_1 \backslash \{ i \} } B^{(1)}_{S_1 \backslash \{ i \}}, X_i \rangle) \overbrace{\frac{|\langle y - X_{S_1 \backslash \{ i \} } B^{(1)}_{S_1 \backslash \{ i \}}, X_i \rangle| - \lambda_1}{1+2\lambda_2} }^{\geq 0} \\
= \text{sign}(\langle y - X_{S_2 \backslash \{ i \} } B^{(2)}_{S_2 \backslash \{ i \}}, X_i \rangle) \underbrace{\frac{|\langle y - X_{S_2 \backslash \{ i \} } B^{(2)}_{S_2 \backslash \{ i \}}, X_i \rangle| - \lambda_1}{1+2\lambda_2} }_{\geq 0}
\end{aligned}
\end{equation}
Observing that the two sign terms in~\eqref{equality-of-signs} are equal, we can simplify the above to:
\begin{equation}\label{equality-signs-1}
\begin{aligned}
&\langle y - X_{S_1 \backslash \{ i \} } B^{(1)}_{S_1 \backslash \{ i \}}, X_i \rangle &=& \langle y - X_{S_2 \backslash \{ i \} } B^{(2)}_{S_2 \backslash \{ i \}}, X_i \rangle \\
&&=& \langle y - X_j B^{(2)}_j - X_{S_1 \backslash \{ i \} } B^{(2)}_{S_1 \backslash \{ i \}}, X_i \rangle
\end{aligned}
\end{equation}
where, the second line in~\eqref{equality-signs-1} follows by noting $X_{S_2 \backslash \{ i \} } B^{(2)}_{S_2 \backslash \{ i \}} = X_j B^{(2)}_j + X_{S_1 \backslash \{ i \} } B^{(2)}_{S_1 \backslash \{ i \}}.$
Substituting $B^{(2)}_{S_1} = B^{(1)}_{S_1}$ in~\eqref{equality-signs-1} and simplifying, we get
$\langle X_j, X_i \rangle = 0$,
which contradicts the assumption in Part 1.
Thus, we have established that inequality (\ref{equation:ineqb1opt}) is strict. Using this result in~(\ref{equation:bj2}), we conclude that:
$|B^{(2)}_{j}| > \sqrt{\frac{2\lambda_0}{1+2\lambda_2}}$.

\textbf{Part 2.)}~We now consider the case where $\langle X_i, X_j \rangle = 0$ for all $i \in S_1$. 
In this case, the optimization problem $\min_{\beta_{S_2}} f(\beta_{S_2})$ separates into
optimization w.r.t the variables $\beta_{S_1}$ and 
$\beta_{j}$. 
Note that $B^{(2)}_{S_1}$ and $B^{(1)}_{S_1}$ are both minimizers of 
$\min_{\beta_{S_1}} f(\beta_{S_1})$; and hence $f(B_{S_1}^{(2)}) = f(B_{S_1}^{(1)})$. Thus, from~\eqref{equation:bj2} we get $|B^{(2)}_{j}| =\sqrt{2\lambda_0 \over 1+2\lambda_2}$. Finally, we note that for any $\beta \in \mathbb{R}^p$ such that $\text{Supp}(\beta) = S_2$, we have that $T(\widetilde{\beta}_j, \lambda_0, \lambda_1, \lambda_2) = B^{(2)}_{j}$. This completes the proof.
\end{proof}
\end{lemma}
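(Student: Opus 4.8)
The plan is to exploit the fact that $B^{(1)}$ and $B^{(2)}$ are limit points of the \emph{same} descent sequence, so they must share a common objective value, and then to convert the resulting scalar equation into an explicit formula for $|B^{(2)}_j|$. First I would establish that $F(B^{(1)}) = F(B^{(2)})$. By Lemmas \ref{lemma:spacer} and \ref{lemma:infoften} there are subsequences converging to $B^{(1)}$ and $B^{(2)}$ whose supports are \emph{constant}, equal to $S_1$ and $S_2$ respectively; along each such subsequence the $L_0$ term is frozen, so continuity of $f$ gives $F(\beta^{k'}) \to f(B^{(1)}) + \lambda_0|S_1| = F(B^{(1)})$ and analogously for $B^{(2)}$. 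Since $F(\beta^k)\downarrow F^*$ by Lemma \ref{lemma:descent}, both limits equal $F^*$, hence $F(B^{(1)}) = F(B^{(2)})$. Using $\|B^{(2)}\|_0 = \|B^{(1)}\|_0 + 1$, this collapses to the single scalar equation $f(B^{(1)}_{S_1}) - f(B^{(2)}_{S_2}) = \lambda_0$.

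Next I would isolate the $j$th coordinate inside $f(B^{(2)}_{S_2})$. Expanding $\tfrac12\|y - X_{S_1}B^{(2)}_{S_1} - X_j B^{(2)}_j\|^2$ and peeling off the $L_1,L_2$ contributions of coordinate $j$ produces a cross term $-\langle y - X_{S_1}B^{(2)}_{S_1}, X_j\rangle B^{(2)}_j$. The key move is to invoke the stationarity of $B^{(2)}$ (guaranteed by Lemma \ref{lemma:spacer}) and its explicit characterization \eqref{eq:stationary} at coordinate $j$; multiplying that relation by $B^{(2)}_j$ and using that $\langle y - X_{S_1}B^{(2)}_{S_1}, X_j\rangle$ and $B^{(2)}_j$ carry the same sign converts the cross term into $(1+2\lambda_2)(B^{(2)}_j)^2 + \lambda_1|B^{(2)}_j|$. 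Since $\|X_j\|^2=1$, everything telescopes to the clean identity $f(B^{(2)}_{S_2}) = f(B^{(2)}_{S_1}) - \tfrac{1+2\lambda_2}{2}(B^{(2)}_j)^2$. Substituting into the objective-equality equation and solving yields the master formula
\[
|B^{(2)}_j| = \sqrt{\tfrac{2\lambda_0}{1+2\lambda_2} + \tfrac{2}{1+2\lambda_2}\bigl(f(B^{(2)}_{S_1}) - f(B^{(1)}_{S_1})\bigr)},
\]
so the dichotomy reduces entirely to the sign of $f(B^{(2)}_{S_1}) - f(B^{(1)}_{S_1})$.

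For the easy direction I would argue as follows. Since $B^{(1)}$ is stationary with support $S_1$, $B^{(1)}_{S_1}$ minimizes the strongly convex map $\beta_{S_1}\mapsto f(\beta_{S_1})$ (strong convexity follows from Assumption \ref{assumption:Xlinindp} with Lemma \ref{lemma:bdsuppsize}, or trivially for $(L_0L_2)$); hence $f(B^{(1)}_{S_1}) \le f(B^{(2)}_{S_1})$, making the radicand at least $\tfrac{2\lambda_0}{1+2\lambda_2}$. In Case 2, orthogonality $\langle X_i, X_j\rangle = 0$ for all $i\in S_1$ decouples $\min_{\beta_{S_2}} f$ into independent problems in $\beta_{S_1}$ and $\beta_j$, so $B^{(2)}_{S_1}$ is also the $S_1$-minimizer and $f(B^{(2)}_{S_1}) = f(B^{(1)}_{S_1})$, forcing the radicand to equal $\tfrac{2\lambda_0}{1+2\lambda_2}$ exactly. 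The ``furthermore'' claim then follows because orthogonality makes $\widetilde\beta_j = \langle y, X_j\rangle$ independent of $\beta_{S_1}$ whenever $\text{Supp}(\beta)=S_2$, so the thresholding input is the same constant for every such $\beta$, and \eqref{eq:stationary} applied to $B^{(2)}$ pins that input so that $\tfrac{|\widetilde\beta_j|-\lambda_1}{1+2\lambda_2} = \sqrt{2\lambda_0/(1+2\lambda_2)}$, giving $|T(\widetilde\beta_j,\lambda_0,\lambda_1,\lambda_2)| = \sqrt{2\lambda_0/(1+2\lambda_2)}$.

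The main obstacle is Case 1, where I must upgrade $f(B^{(1)}_{S_1}) \le f(B^{(2)}_{S_1})$ to a \emph{strict} inequality. I would argue by contradiction: equality plus strong convexity forces $B^{(1)}_{S_1} = B^{(2)}_{S_1}$. Then writing \eqref{eq:stationary} at coordinate $i$ for both $B^{(1)}$ (support $S_1$) and $B^{(2)}$ (support $S_2$), and noting the sign factors coincide, the correlations $\langle y - X_{S_1\setminus\{i\}}B^{(1)}_{S_1\setminus\{i\}}, X_i\rangle$ and $\langle y - X_{S_2\setminus\{i\}}B^{(2)}_{S_2\setminus\{i\}}, X_i\rangle$ must be equal; substituting $B^{(2)}_{S_1}=B^{(1)}_{S_1}$ and cancelling leaves $\langle X_j, X_i\rangle = 0$, contradicting the hypothesis of Case 1. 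Hence the inequality is strict, the radicand strictly exceeds $\tfrac{2\lambda_0}{1+2\lambda_2}$, and $|B^{(2)}_j| > \sqrt{2\lambda_0/(1+2\lambda_2)}$, completing the proof.
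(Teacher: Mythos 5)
Your proposal is correct and follows essentially the same route as the paper's own proof: the shared limit value $F(B^{(1)})=F(B^{(2)})$, the coordinate-$j$ expansion combined with stationarity of $B^{(2)}$ to obtain the identity $f(B^{(2)}_{S_2}) = f(B^{(2)}_{S_1}) - \tfrac{1+2\lambda_2}{2}(B^{(2)}_j)^2$ and the resulting master formula for $|B^{(2)}_j|$, the strong-convexity-plus-stationarity contradiction yielding $\langle X_i, X_j\rangle = 0$ in Part 1, and the separability argument in Part 2. Your treatment of the ``furthermore'' claim in Part 2 is in fact slightly more explicit than the paper's (spelling out that orthogonality forces $\widetilde{\beta}_j = \langle y, X_j\rangle$ for every $\beta$ supported on $S_2$, so the thresholding input is a constant sitting exactly at the tie point), but this is a refinement of detail, not a different argument.
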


The following establishes a lower bound on the decrease in objective value, when a non-zero coordinate is set to zero during Algorithm~1.

\smallskip

\begin{lemma} \label{lemma:Fdiffdrop}
Let $\beta^{k}$ be an iterate of
Algorithm~1 
with $\beta^{k}_j \neq 0$ for some $j \in [p]$. Let $\beta^{k+1}$ correspond to a non-spacer step which updates coordinate $j$ to $0$, i.e., $\beta^{k+1}_j = 0$. Then, the following holds:
\begin{equation}
F(\beta^{k}) - F(\beta^{k + 1})   \geq \frac{1+2\lambda_2}{2} \left( |\beta^{k}_j| - \sqrt{\frac{2\lambda_0}{1+2\lambda_2}} \right)^2.
\end{equation}
\end{lemma}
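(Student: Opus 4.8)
The plan is to reduce the multivariate difference $F(\beta^k) - F(\beta^{k+1})$ to a univariate computation, since $\beta^{k+1}$ differs from $\beta^k$ only in coordinate $j$. Writing $r_{-j} = y - \sum_{l \neq j} X_l \beta^k_l$ and recalling that $\|X_j\| = 1$ and $\widetilde{\beta}_j = \langle r_{-j}, X_j\rangle$, the objective $F$ viewed as a function of the single coordinate $\beta_j$ (with the others frozen at their values in $\beta^k$) coincides, up to an additive constant, with the univariate function $g$ minimized in the definition of the thresholding operator~\eqref{eq:thresholdingmap}, namely $g(\beta_j) = \frac{1+2\lambda_2}{2}\big(\beta_j - \widetilde{\beta}_j/(1+2\lambda_2)\big)^2 + \lambda_1|\beta_j| + \lambda_0 \mathds{1}[\beta_j \neq 0]$. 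Hence $F(\beta^k) - F(\beta^{k+1}) = g(\beta^k_j) - g(0)$.

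First I would evaluate both terms explicitly. Since $\beta^{k+1}_j = 0$ we have $g(0) = \widetilde{\beta}_j^2 / (2(1+2\lambda_2))$; and since $\beta^k_j \neq 0$, expanding the square in $g(\beta^k_j)$ and cancelling the common $\widetilde{\beta}_j^2/(2(1+2\lambda_2))$ term gives the clean identity
$$F(\beta^k) - F(\beta^{k+1}) = \frac{1+2\lambda_2}{2}(\beta^k_j)^2 - \beta^k_j \widetilde{\beta}_j + \lambda_1 |\beta^k_j| + \lambda_0.$$
Next I would bound the cross term via $-\beta^k_j\widetilde{\beta}_j \geq -|\beta^k_j|\,|\widetilde{\beta}_j|$, which produces a lower bound depending only on $a := |\beta^k_j| \geq 0$ and $c := |\widetilde{\beta}_j| - \lambda_1$:
$$F(\beta^k) - F(\beta^{k+1}) \geq \frac{1+2\lambda_2}{2}a^2 - a c + \lambda_0.$$

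The key structural input is that the non-spacer update produced $\beta^{k+1}_j = 0$ through the operator $T$ in~\eqref{eq:thresholding}. By the convention that ties are assigned the nonzero solution, this forces the strict threshold condition $\frac{|\widetilde{\beta}_j| - \lambda_1}{1+2\lambda_2} < \sqrt{2\lambda_0/(1+2\lambda_2)}$, equivalently $c \leq \sqrt{2\lambda_0(1+2\lambda_2)}$. Since $a \geq 0$, this gives $ac \leq a\sqrt{2\lambda_0(1+2\lambda_2)}$, so that
$$F(\beta^k) - F(\beta^{k+1}) \geq \frac{1+2\lambda_2}{2}a^2 - a\sqrt{2\lambda_0(1+2\lambda_2)} + \lambda_0 = \frac{1+2\lambda_2}{2}\left(a - \sqrt{\frac{2\lambda_0}{1+2\lambda_2}}\right)^2,$$
where the last equality is a completed square (using $(1+2\lambda_2)\sqrt{2\lambda_0/(1+2\lambda_2)} = \sqrt{2\lambda_0(1+2\lambda_2)}$). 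Substituting $a = |\beta^k_j|$ yields the claim.

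The computation is essentially mechanical, so the only delicate point — and the step I would state most carefully — is the correct use of the thresholding condition: one must invoke that $T$ returns $0$ only in the strict regime (ties broken toward the nonzero solution) to obtain $c \leq \sqrt{2\lambda_0(1+2\lambda_2)}$, which is precisely the inequality driving the cross term. The sign handling via $-\beta^k_j\widetilde{\beta}_j \geq -|\beta^k_j|\,|\widetilde{\beta}_j|$ also deserves explicit mention, since it is what permits passing from $\beta^k_j,\widetilde{\beta}_j$ to their magnitudes while preserving the inequality (and it automatically covers the case $c < 0$, where the bound only becomes looser).
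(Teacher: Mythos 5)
Your proof is correct and takes essentially the same route as the paper's: both reduce $F(\beta^k)-F(\beta^{k+1})$ to the univariate identity $\tfrac{1+2\lambda_2}{2}(\beta^k_j)^2 - \beta^k_j\widetilde{\beta}^k_j + \lambda_1|\beta^k_j| + \lambda_0$, bound the cross term by $-\beta^k_j\widetilde{\beta}^k_j \geq -|\beta^k_j|\,|\widetilde{\beta}^k_j|$, invoke the thresholding condition $|\widetilde{\beta}^k_j|-\lambda_1 < \sqrt{2\lambda_0(1+2\lambda_2)}$ implied by the non-spacer step setting coordinate $j$ to zero, and complete the square. The only cosmetic difference is that you organize the expansion through the univariate function in \eqref{eq:thresholdingmap} rather than expanding $F$ directly, which changes nothing of substance.
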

\begin{proof}
$F(\beta^{k}) - F(\beta^{k + 1})$ can be simplified by noting that $\beta^{k}_i = \beta^{k + 1}_i$ for all $i \neq j$ and $\beta^{k + 1}_j = 0$:
\begin{align}
F(\beta^{k}) - F(\beta^{k + 1}) & = - \widetilde{\beta}^{k}_j \beta^{k}_j + \frac{1 + 2\lambda_2}{2} (\beta^{k}_j)^2 + \lambda_0 + \lambda_1 |\beta^{k}_j| \nonumber \\
								 & \geq - |\widetilde{\beta}^{k}_j| |\beta^{k}_j| + \frac{1 + 2\lambda_2}{2} (\beta^{k}_j)^2 + \lambda_0 + \lambda_1 |\beta^{k}_j| \nonumber  \\
								  & \geq - |\beta^{k}_j| (|\widetilde{\beta}^{k}_j| - \lambda_1 ) + \frac{1 + 2\lambda_2}{2} (\beta^{k}_j)^2 + \lambda_0, \label{eq:diffFsconv}
\end{align}
where $\widetilde{\beta}^{k}_j = \langle y - \sum_{i \neq j} X_j \beta^{k}_i , X_j \rangle  $. Since $\beta^{k+1}$ is a non-spacer step which sets coordinate $j$ to $0$, the definition of the thresholding operator (\ref{eq:thresholding}) implies $|\widetilde{\beta}^{k}_j| - \lambda_1 < \sqrt{2 \lambda_0 (1+2\lambda_2)}$. Plugging this bound into (\ref{eq:diffFsconv}) and factorizing, we arrive to the result of the lemma.
\end{proof}


\paragraph{Proof of Theorem~\ref{theorem:convergence}} Finally, we present the proof of Theorem \ref{theorem:convergence} below.
\begin{proof}:
Let $B$ be a limit point of $\{ \beta^{k} \}$ with the largest support size and denote its support by $S$.  We will show that $ \beta^{k}  \to B$ as $k \rightarrow \infty$.

\textbf{Part 1.)} By Lemma \ref{lemma:infoften}, there is a subsequence $\{ \beta^{r} \}_{r \in R}$ of $\{\beta^k\}$ 
which satisfies: $\text{Supp}(\beta^r) = S$ for all $r$
and $\beta^{r} \to B$ (as $r \rightarrow \infty$). 
By Lemma~\ref{lemma:spacer}, there exists an integer $N$ such that for every $r \geq N$, if $r+1$ is a spacer step then $\text{Supp}(\beta^{r+1}) = \text{Supp}(\beta^{r})$. In what follows, we assume that $r \geq N$. Let $j$ be any element in $S$. We will show that there exists an integer $N_j$ such that for every $r \geq N_j$, we have $j \in \text{Supp}(\beta^{r+1})$. We show this by contradiction. To this end, let $j \notin \text{Supp}(\beta^{r+1})$ for infinitely many values of $r$. Hence, there is a further subsequence $\{ \beta^{r'} \}_{r' \in R'}$ of $\{ \beta^{r} \}_{r \in R}$
(with $R' \subseteq R$) such that
$\text{Supp}(\beta^{r'+1}) = S \setminus \{j\}$.
For every $r' \in R'$, note that $r' + 1$ is a non-spacer step (since $r' \geq N$). Therefore, applying Lemma~\ref{lemma:Fdiffdrop} with $k = r'$, we get:
\begin{equation}\label{suff-decr-thm-2}
F(\beta^{r'}) - F(\beta^{r' + 1})   \geq \frac{1+2\lambda_2}{2} \left( |\beta^{r'}_j| - \sqrt{\frac{2\lambda_0}{1+2\lambda_2}} \right)^2.
\end{equation}
Taking $r' \to \infty$ in~\eqref{suff-decr-thm-2} and using the convergence of 
$\{F(\beta^k)\}$ (by Lemma \ref{lemma:descent}), we conclude
\begin{equation}\label{eq:bjexact}
|B_j| = \lim_{r' \to \infty} |\beta^{r'}_j| = \sqrt{\frac{2\lambda_0}{1+2\lambda_2}}. 
\end{equation}
Since $\text{Supp}(\beta^{r' + 1}) = S \setminus \{j\}$ for every $r'$, Lemma \ref{lemma:spacer} implies that $\{ \beta^{r' + 1} \}$ converges to a limit point, which we denote by $\widehat{B}$. If $\langle X_i, X_j \rangle = 0$ for all $i \in S \setminus \{j\}$, then Lemma \ref{lemma:xj} (part 2) implies $j \in \text{Supp}(\beta^{r'+1})$, which contradicts the definition of $\{ \beta^{r'} \}_{r' \in R'}$. Thus, it must be the case that there exists an index $i \in S \setminus \{j\}$ such that $\langle X_i, X_j \rangle \neq 0$. Applying Lemma \ref{lemma:xj} (part 1) to $B$ and $\widehat{B}$ we have that $|B_j| > \sqrt{\frac{2\lambda_0}{1+2\lambda_2}}$ --- this contradicts (\ref{eq:bjexact}). Therefore, there exists an integer $N_j$ such that for every $r \geq N_j$, we have $j \in \text{Supp}(\beta^{r+1})$.

{The above argument says that no $j$ in the support of $B$ can be dropped infinitely often in the sequence $\{\beta_{k}\}$. Since $S$ has the largest support size, no coordinate can be added to $S$ infinitely often in the sequence $\{\beta_{k}\}$. This concludes the proof of Part 1.}


\textbf{Part 2.)} Finally, we show that the limit of $\{ \beta^k \}$ is a CW minimum.
To this end, note that the results of Part 1 (above) and Lemma \ref{lemma:spacer} (Parts 3 and 4) imply that $\beta^k$ converges to the limit $B$, which satisfies $\text{Supp}(B) = S$, and for every $i \in S$, we have:
\begin{equation}
B_i = \sign(\widetilde{B}_i) \frac{|\widetilde{B}_i| - \lambda_1}{1+2\lambda_2} \ \text{ and } \ |B_i| \geq \sqrt{2\lambda_0 \over 1+2 \lambda_{2}}. \label{eq:betaik}
\end{equation}
Fix some $j \notin \text{Supp}(B)$ and let $\{ \beta^{k'} \}_{k' \in K'}$ be the sequence of non-spacer iterates at which coordinate $j$ is updated. For every $k'$ after support stabilization, the algorithm maintains:
\begin{align*}
\frac{|\widetilde{\beta}^{k'}_j| - \lambda_1}{1+2\lambda_2} < \sqrt{2\lambda_0 \over 1+2 \lambda_{2}}
\end{align*}
where $\widetilde{\beta}^{k'}_j = \langle y - \sum_{i \neq j} X_i \beta_i^{k'}, X_j \rangle$. Taking  $k' \to \infty$ in the above, we have:
\begin{align}
\frac{|\widetilde{B}_j| - \lambda_1}{1+2\lambda_2} \leq \sqrt{2\lambda_0 \over 1+2 \lambda_{2}}. \label{eq:convergenceoutsupp}
\end{align}
(\ref{eq:betaik}) and~(\ref{eq:convergenceoutsupp}) together imply that $B$ is a CW minimum (by definition).
\end{proof}

\subsection{Proof of Theorem \ref{theorem:convergencerate}}
\begin{proof}
By Theorem~\ref{theorem:convergence}, we have $\beta^K \to B$, and there exists an integer $M$ such that for all $K \geq M$, we have $\text{Supp}(\beta^{K}) = S$ and $\text{Supp}(B) = S$. Therefore, there exists an integer $N \geq M$ such that for $K \geq N$, $\sign(\beta^{K}_i) = \sign(B_i)$ for every $i \in S$.	For $K \geq N$, it can be readily seen that the iterates $\beta^{K}$ are the same as those generated by minimizing the following objective
    \begin{align}
        g(\beta_S) = \frac{1}{2} \| y - X_S \beta_S \|^{2} + \lambda_1 \sum\limits_{i \in S, B_i > 0} \beta_i - \lambda_1 \sum\limits_{i \in S, B_i < 0} \beta_i + \lambda_2 \|\beta_S\|^{2},
    \end{align}
    using coordinate descent with step size $1 \over 1+2\lambda_2$ and starting from the initial solution $\beta^{N}$. The function $\beta_S \mapsto g(\beta_S)$ is continuously differentiable and its gradient is Lipschitz continuous with parameter $L = M_S + 2\lambda_2$. Moreover, it is strongly convex with strong-convexity parameter $\sigma_S = m_S + 2\lambda_2$. \cite{BeckConvergence} (see Theorem~3.9) 
has proven a linear rate of convergence for cyclic CD when applied to strongly convex and continuously differentiable functions.
Applying \cite{BeckConvergence}'s result in our context leads to the conclusion of the theorem.
\end{proof}

\subsection{Proof of Theorem \ref{theorem:cd-kswaps}}
\begin{proof}
Before it terminates, Algorithm~2 leads to a sequence $\{\beta^{i}\}_{0}^{\ell}$ such that $F(\beta^{\ell}) < F(\beta^{\ell-1}) < \dots < F(\beta^{1})$.
Since $\beta^{\ell}, \beta^{\ell-1}, \dots, \beta^{1}$ are all outputs of Algorithm \ref{alg:CD}, they are all CW minima (by Theorem \ref{theorem:convergence}). Any CW minimum on a support $S$ is stationary for the problem: $\min_{\beta_S} f(\beta_S)$. By the convexity of $\beta_S \to f(\beta_S)$, all stationary solutions on support $S$ have the same objective (since they all correspond to the minimum of $\min_{\beta_S} f(\beta_S)$).   
Thus, we have $\text{Supp}(\beta^{i}) \neq \text{Supp}(\beta^{j})$ for any $1 \leq i, j \leq \ell$ such that $i \neq j$.
Therefore, a support can appear at most once during the course of Algorithm~2. Since the number of possible supports is finite, we conclude that Algorithm~2 terminates in a finite number of iterations. Finally, we note that Algorithm~2 terminates iff there is no feasible solution $\widehat{\beta}$ for (\ref{eq:psi}) satisfying $F(\widehat{\beta}) < F(\beta^{\ell})$. This implies that $\beta^{\ell}$ is a minimizer of (\ref{eq:psi}) and thus a PSI($k$) minimum (by Definition \ref{def:psi}).
\end{proof}

\subsection{Proof of Lemma \ref{lemma:nextlambda}}
\begin{proof}
Let us consider the case where, $\lambda_0^{i+1} < M^i$. 
It follows from~\eqref{eq:Mi} that:
\begin{align}
  \max_{j \in S^c}~\frac{ \left|~ \left|\langle r , X_j \rangle \right|  - \lambda_1  \right|}{1+2\lambda_2} >  \sqrt{\frac{2 \lambda_0^{i+1}}{1 + 2\lambda_2}},
\end{align}
which implies that $\beta^{(i)}$ is not a CW minimum for the given $\lambda_0^{i+1}$ 
(see~\eqref{lemma:CW-eqn}). By Theorem \ref{theorem:convergence}, Algorithm~\ref{alg:CD}  converges to a CW minimum. Therefore, 
Algorithm~\ref{alg:CD} initialized with $\beta^{(i)}$ leads to $\beta^{(i+1)} \neq \beta^{(i)}$.

We now consider the case where, $\lambda^{i+1}_0 \in (M^i , \lambda^i_0 ]$. Then~\eqref{eq:Mi} implies
\begin{align}
\label{eq:gridoutside}
\max_{j \in S^c} ~ \frac{ |~ |\langle r , X_j \rangle|  - \lambda_1 |}{1+2\lambda_2} < \sqrt{\frac{2 \lambda_0^{i+1}}{1 + 2\lambda_2}} \leq \sqrt{\frac{2 \lambda_0^{i}}{1 + 2\lambda_2}}.
\end{align}
Also, since $\beta^{(i)}$ is a CW minimum for  $\lambda  = \lambda^i_0$, we have for every $j \in S$
\begin{equation}
\label{eq:gridinside}
|\beta^{(i)}_j| \geq \sqrt{\frac{2 \lambda_0^{i}}{1 + 2\lambda_2}} \geq \sqrt{\frac{2 \lambda_0^{i+1}}{1 + 2\lambda_2}},
\end{equation}
where, the second inequality follows from $\lambda_0^{i+1} \leq \lambda_0^{i}$. The condition 
$\nabla_{S} f(\beta^{(i)}_S) =0$ along with inequalities (\ref{eq:gridoutside}) and (\ref{eq:gridinside}) imply that $\beta^{(i)}$ is a CW minimum for Problem~\eqref{problem:main} at $\lambda_0 = \lambda^{i+1}_0$. Therefore, $\beta^{(i)}$ is a fixed point for Algorithm~\ref{alg:CD}.
\end{proof}


\section{Appendix: Oracle Tuning}

\subsection{Statistical Performance for Varying Number of Samples}
\begin{figure}[H]
\centering
Exponential Correlation, $\rho = 0.9$, $p = 1000$, $k^{\dagger} = 20$, SNR $=5$
\includegraphics[scale=0.47]{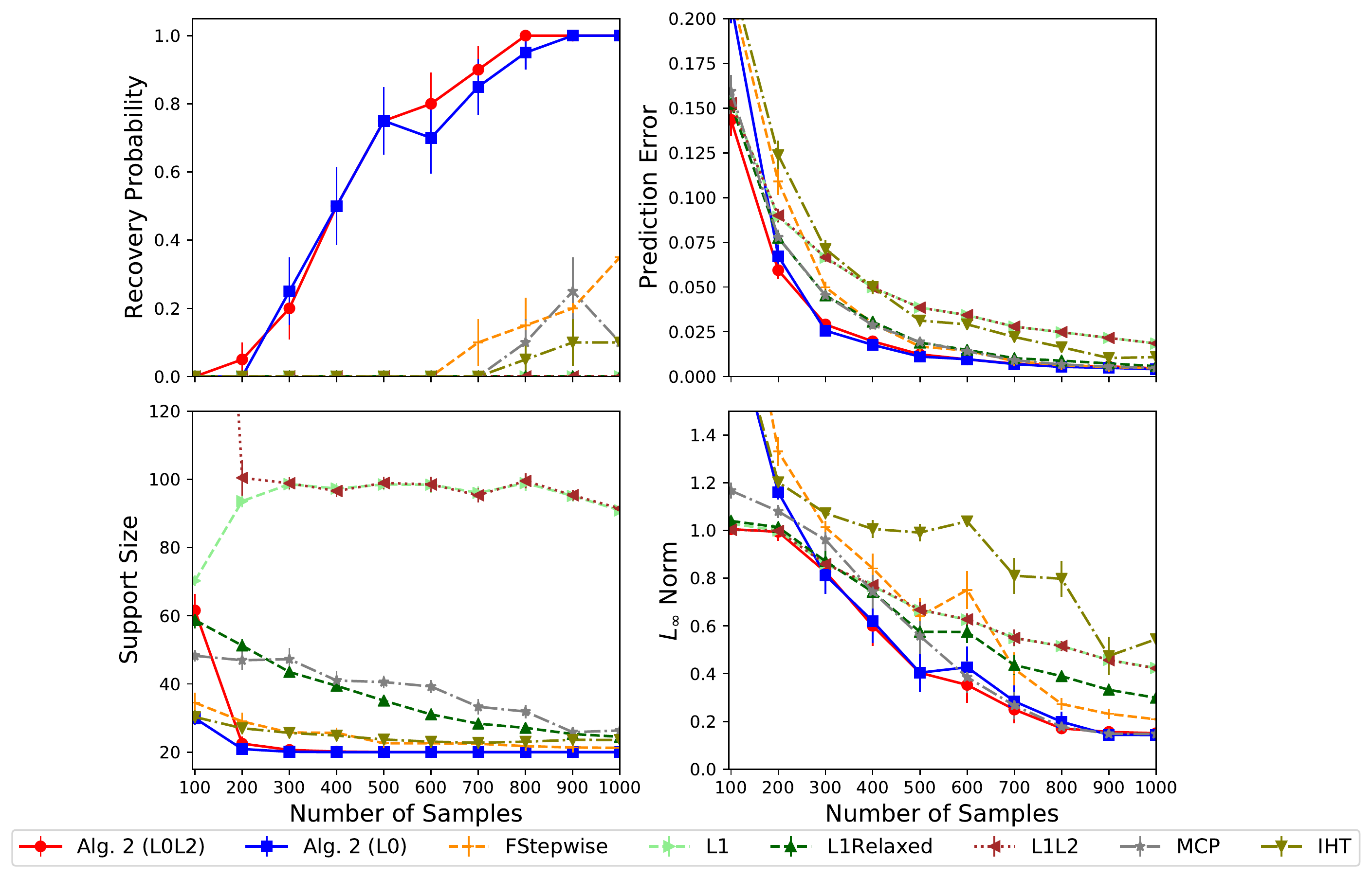}
\includegraphics[scale=0.47]{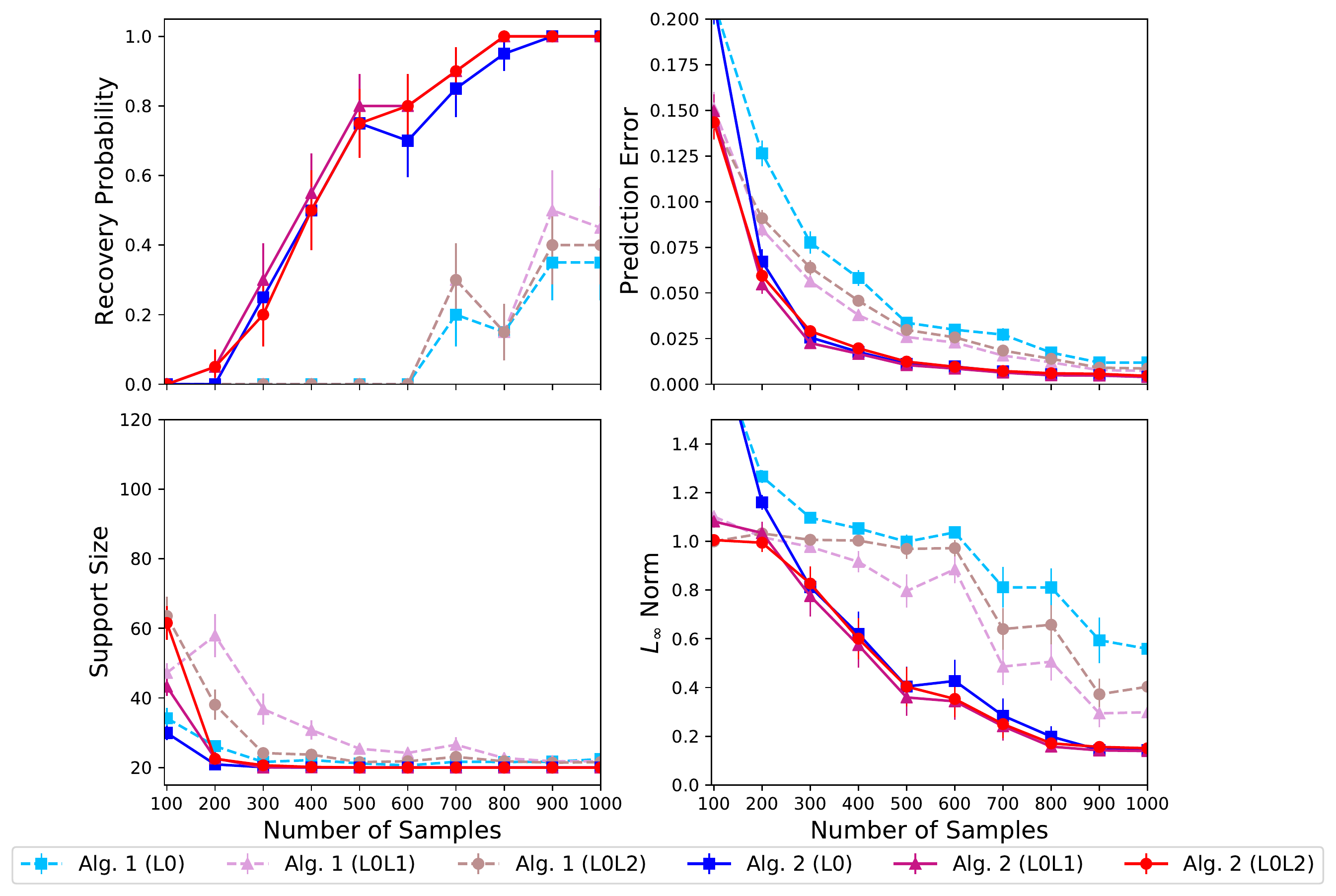}
\caption{{\small{Performance measures as the number of samples $n$ varies between $100$ and $1000$. The top figure compares Algorithm 2 ($L_0$), Algorithm 2 $(L_0 L_2)$, and other state-of-the-art algorithms. The bottom figure compares all of our proposed algorithms.} }}
\label{fig:NSweep-Exp09-Oracle}
\end{figure}

\begin{figure}[H]
\centering
{\sf {Exponential Correlation, $\rho = 0.5$, $p = 1000$, $k^{\dagger} = 20$, SNR $=5$ }}
\includegraphics[scale=0.47]{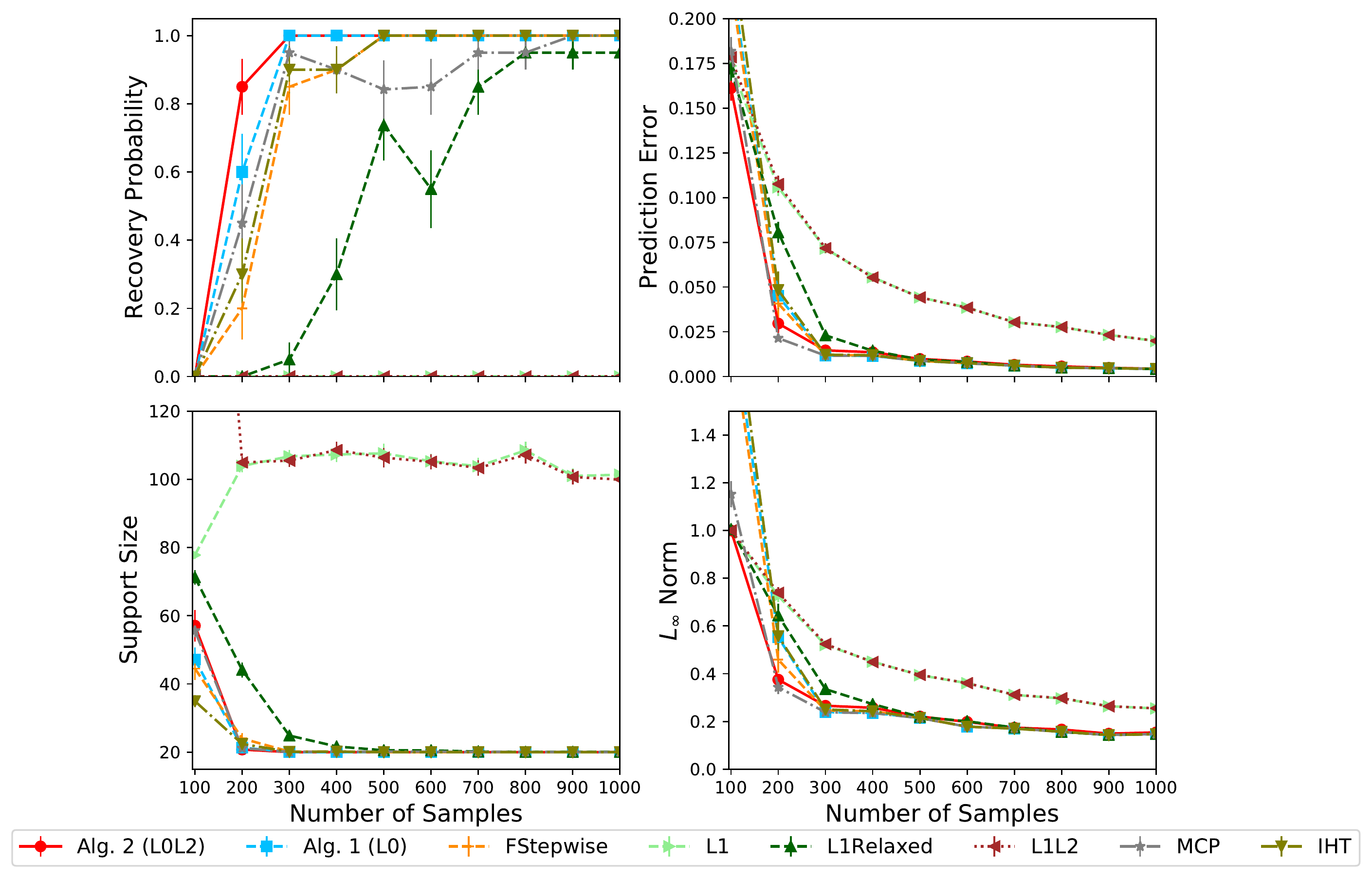}
\caption{{\small{Performance measures as the number of samples $n$ varies between $100$ and $1000$. The figure compares  Algorithm 1 ($L_0$), Algorithm 2 $(L_0 L_2)$, and other state-of-the-art algorithms.} }}
\label{fig:NSweep-Exp05-Oracle}
\end{figure}

\subsection{Statistical Performance for Varying SNR}

\begin{figure}[H]
\centering
Constant Correlation, $\rho = 0.4$, $n=1000$, $p = 2000$, $k^{\dagger} = 50$
\includegraphics[scale=0.47]{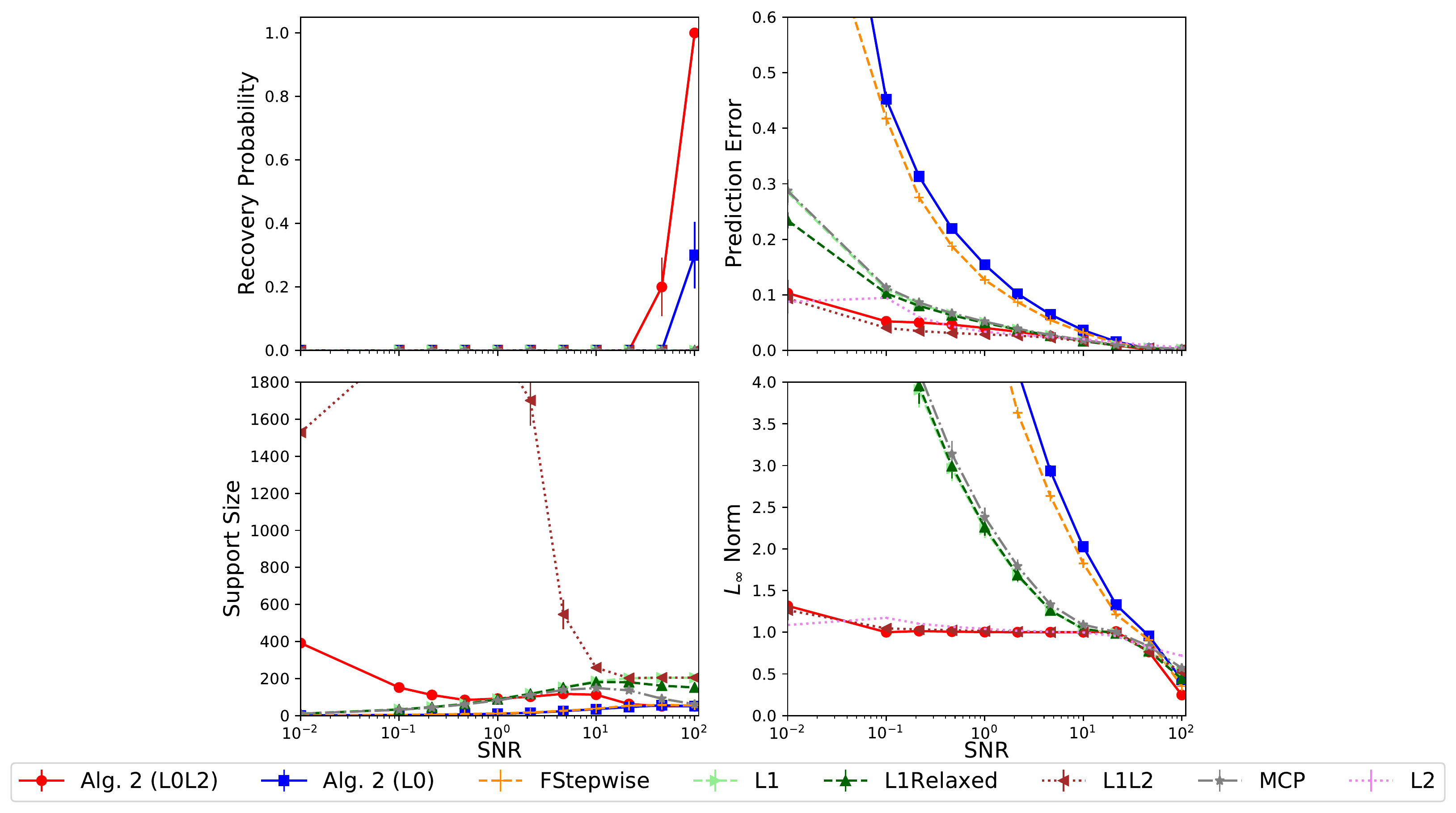}
\caption{Performance measures as the signal-to-noise ratio (SNR) is varied between 0.01 and 100. The  figure compares two of our methods and other state-of-the-art algorithms. }
\label{fig:SNRSweep-C-Oracle}
\end{figure}
\begin{figure}[H] 
\centering
Exponential Correlation, $\rho = 0.5$, $n=1000$, $p = 5000$, $k^{\dagger} = 50$
\includegraphics[scale=0.47]{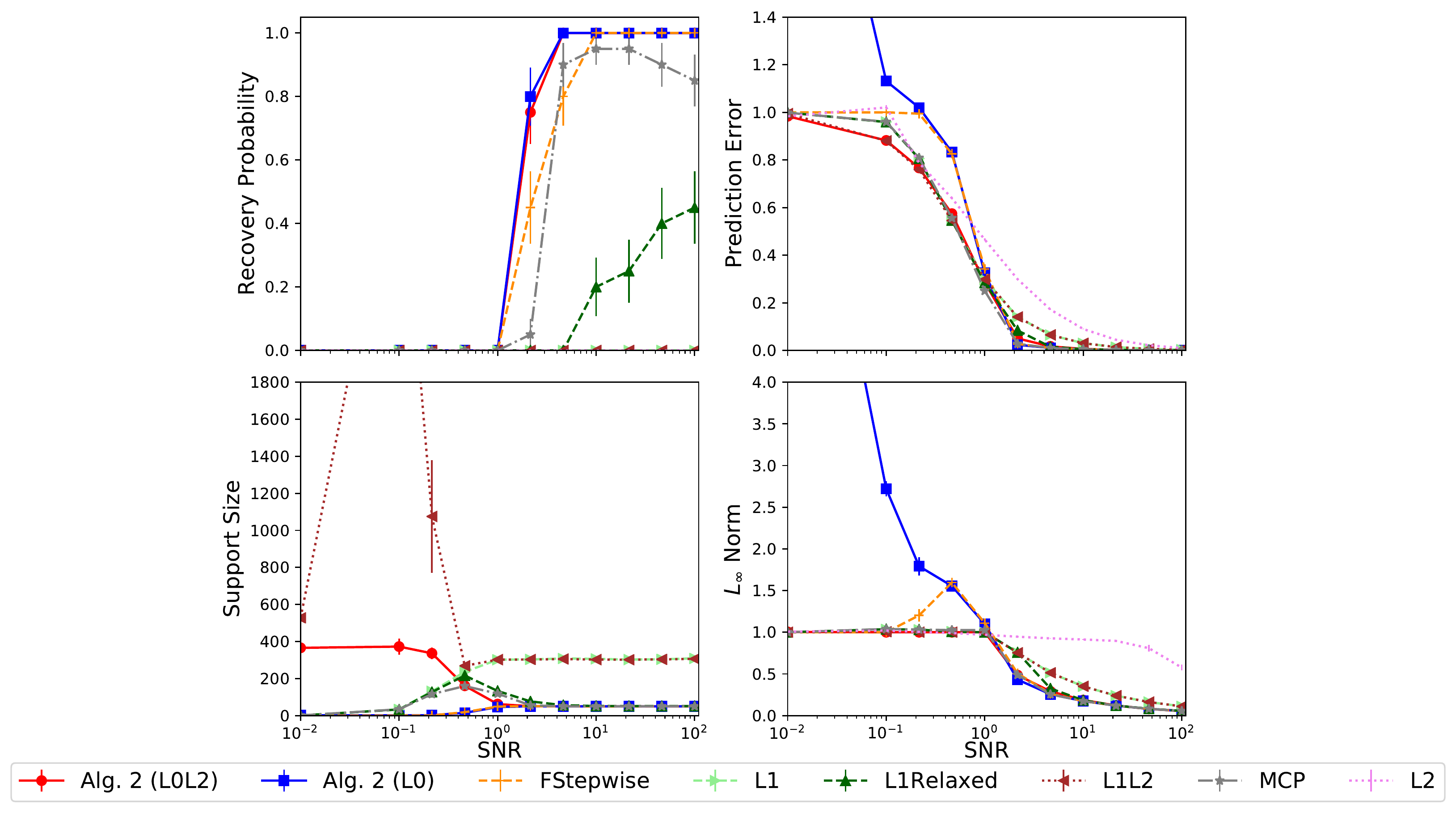}
\caption{Performance measures as the signal-to-noise ratio (SNR) is varied between 0.01 and 100. The  figure compares two of our methods and other state-of-the-art algorithms. }
\label{fig:SNRSweep-Exp-Oracle}
\end{figure}

\newpage
\section{Appendix: Random Design Tuning}
We use the same definition of Relative Risk (RR) as in \cite{onbestsubset}:  given an estimator $\hat{\beta}$,  RR$ = \frac{(\hat{\beta} - \beta^{\dagger})^{T} \Sigma (\hat{\beta} - \beta^{\dagger}) }{(\beta^{\dagger})^T \Sigma \beta^{\dagger}}$.
\vspace{-0.7cm} 
\subsection{Statistical Performance for Varying Number of Samples}
\begin{figure}[H]
\centering
Exponential Correlation, $\rho = 0.9$, $p = 1000$, $k^{\dagger} = 20$, SNR $=5$
\includegraphics[scale=0.47]{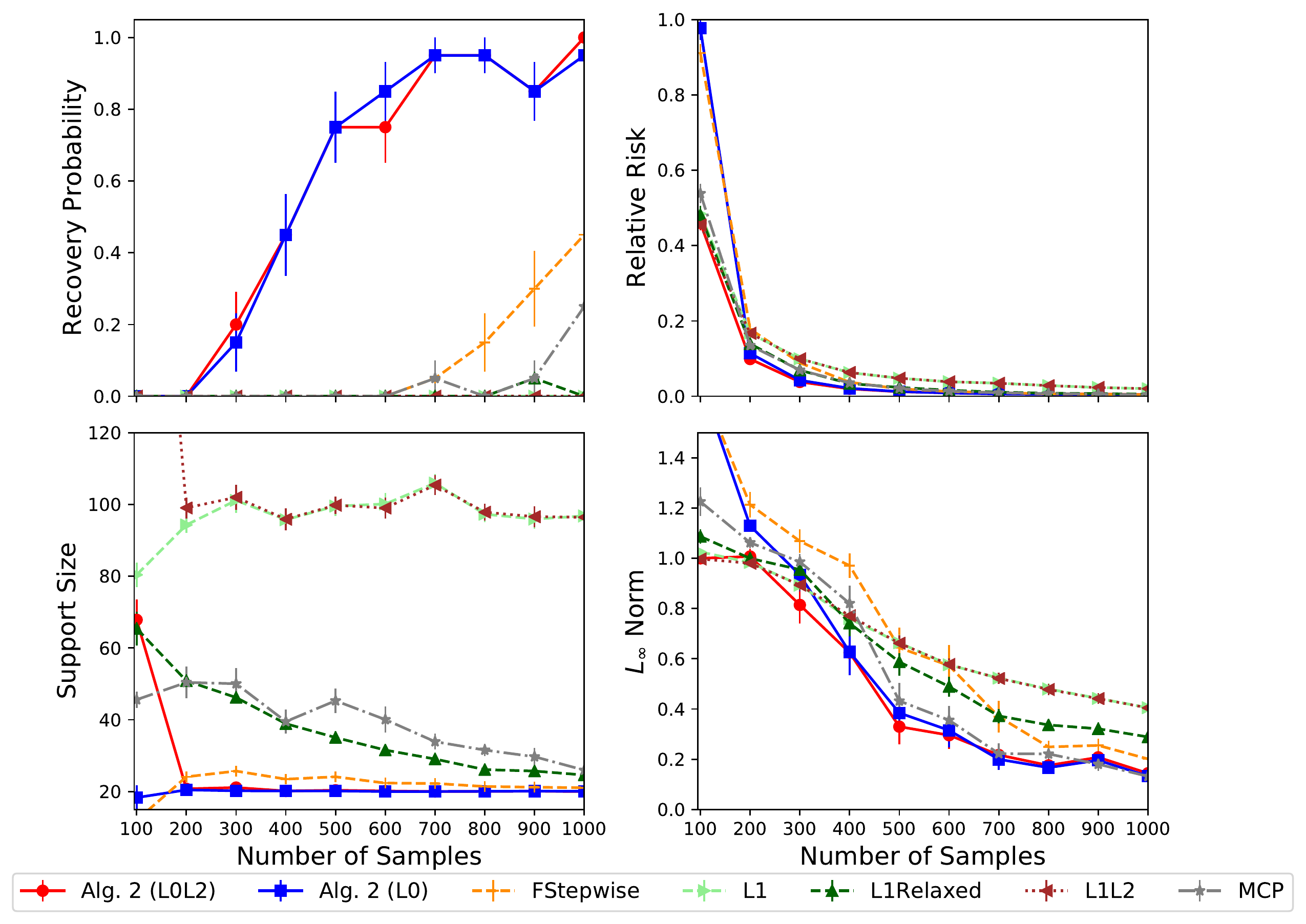}
\includegraphics[scale=0.47]{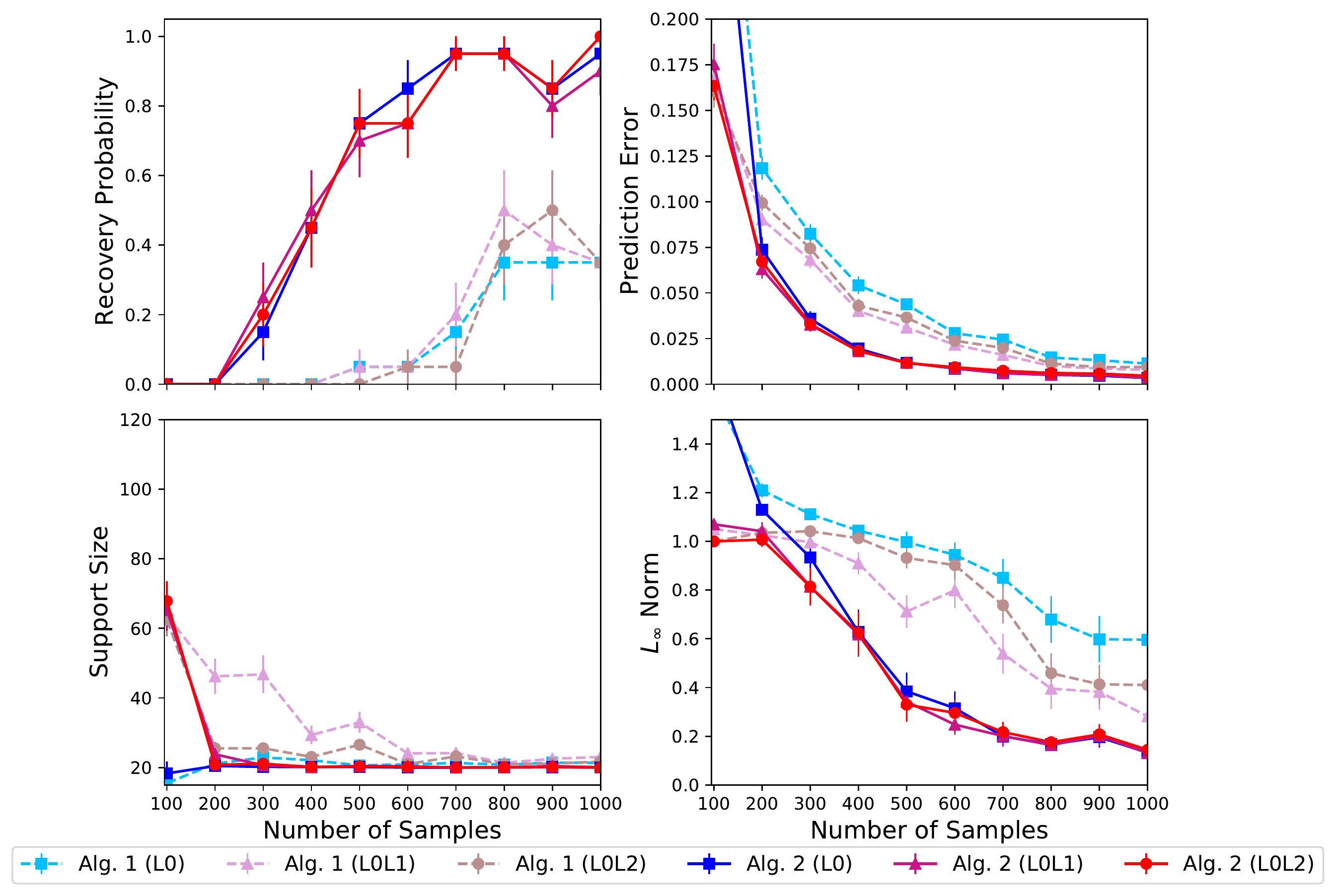}
\caption{{\small{Performance measures as the number of samples $n$ varies between $100$ and $1000$. The top figure compares Algorithm 2 ($L_0$), Algorithm 2 $(L_0 L_2)$, and other state-of-the-art algorithms. The bottom figure compares all of our proposed algorithms.} }}
\label{fig:NSweep-Exp09-Random}
\end{figure}

\begin{figure}[H]
\centering
{\sf {Exponential Correlation, $\rho = 0.5$, $p = 1000$, $k^{\dagger} = 20$, SNR $=5$ }}
\includegraphics[scale=0.47]{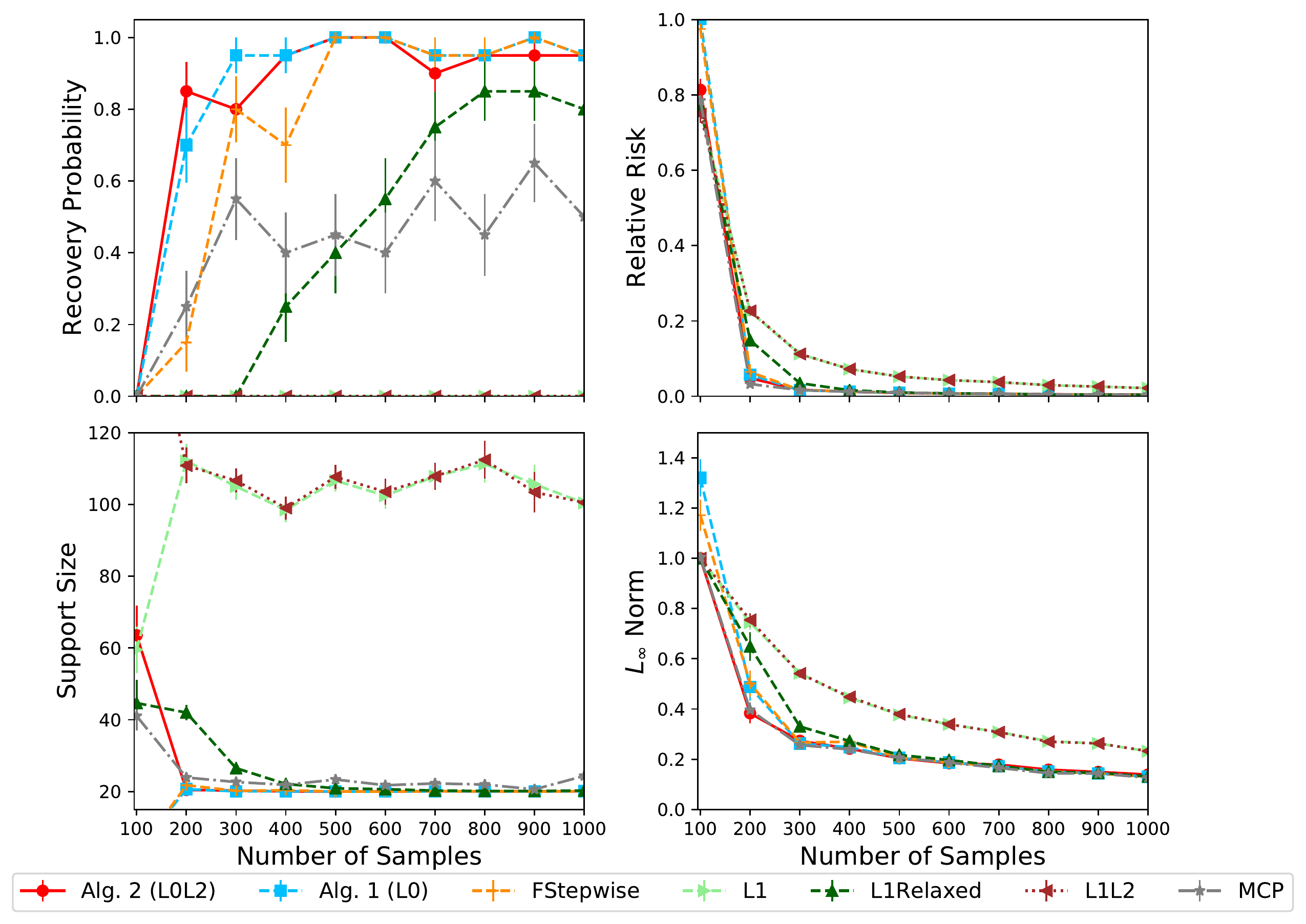}
\caption{{\small{Performance measures as the signal-to-noise ratio (SNR) is varied between 0.01 and 100. The figure compares two of our methods and other state-of-the-art algorithms.} }}
\label{fig:NSweep-Exp05-Random}
\end{figure}

\subsection{Statistical Performance for Varying SNR}
\begin{figure}[H]
\centering
Constant Correlation, $\rho = 0.4$, $n=1000$, $p = 2000$, $k^{\dagger} = 50$
\includegraphics[scale=0.47]{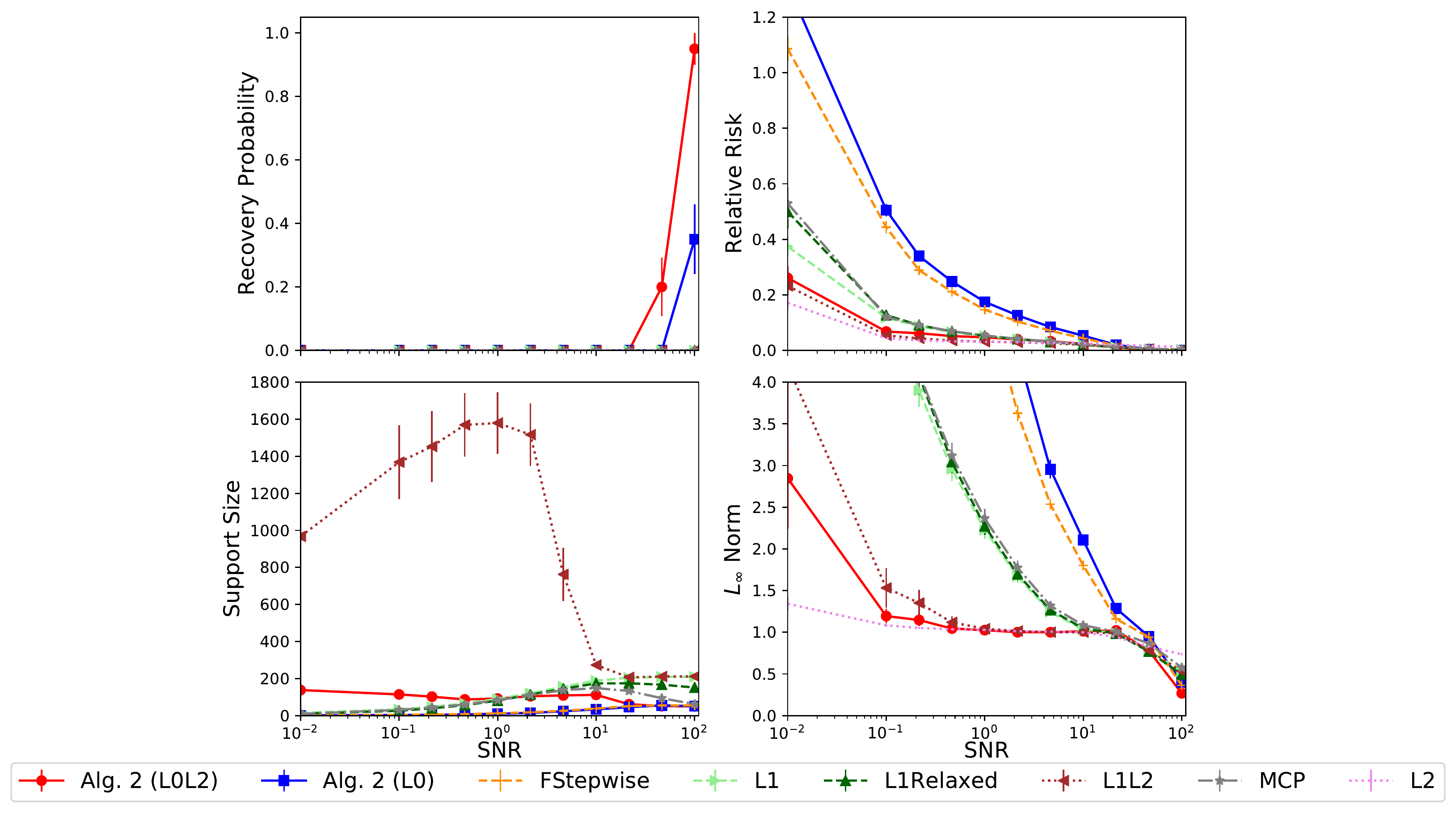}
\caption{Performance measures as the signal-to-noise ratio (SNR) is varied between 0.01 and 100. The  figure compares two of our methods and other state-of-the-art algorithms. }
\label{fig:SNRSweep-C-Random}
\end{figure}

\begin{figure}[H] 
\centering
Exponential Correlation, $\rho = 0.5$, $n=1000$, $p = 5000$, $k^{\dagger} = 50$
\includegraphics[scale=0.47]{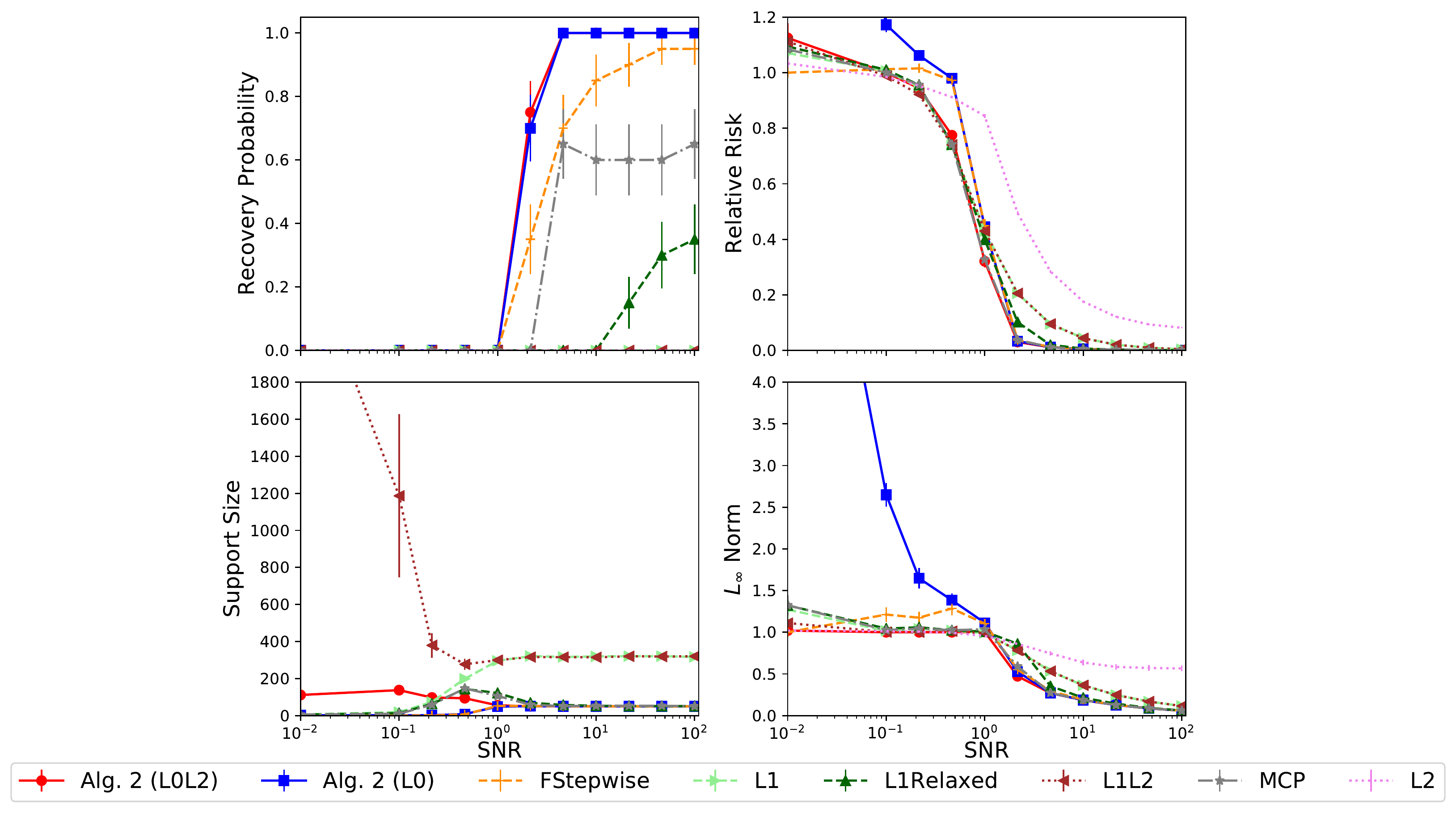}
\caption{Performance measures as the signal-to-noise ratio (SNR) is varied between 0.01 and 100. The  figure compares two of our methods and other state-of-the-art algorithms. }
\label{fig:SNRSweep-Exp-Random}
\end{figure}

\section{Comparisons among PSI$(k)$ and FSI$(k)$}
In Figure \ref{fig:CDKSwapsEvolution}, we show the evolution of solutions when running the FSI(5) variant of Algorithm~\ref{alg:mipcd} on the dataset of Section~\ref{swap-inescp-minima}. The algorithm starts from a CW minimum and iterates between running CD-PSI($1$) and finding a swap that improves the objective by solving optimization problem (\ref{mio:fsi}) using MIO. The PSI(1) minima obtained from running CD-PSI($1$) are marked by red circles and the results obtained by solving problem (\ref{mio:fsi}) using MIO are denoted by blue squares.

\begin{figure}[H]
\centering
\scalebox{0.99}{\includegraphics[height = .25\textheight, width=\textwidth]{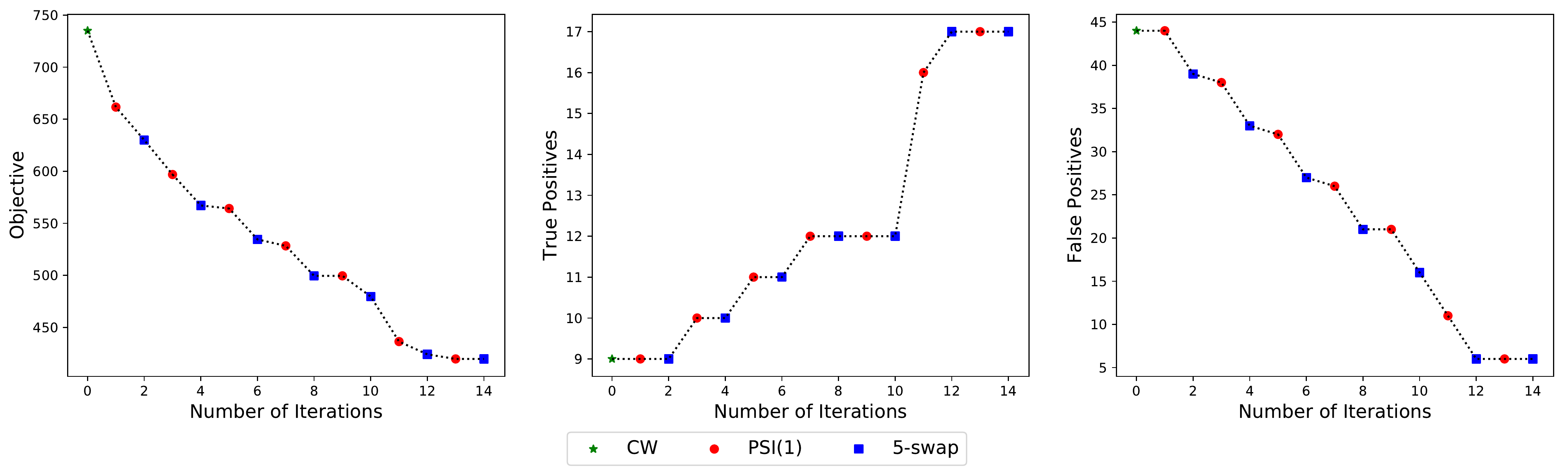}}
\caption{Evolution of solutions during the course of a variant of Algorithm~2 where we successively run CD-PSI(1) and solve combinatorial problem~\eqref{mio:fsi} to generate an FSI($5$) minimum. }
\label{fig:CDKSwapsEvolution}
\end{figure}

Figure \ref{fig:CDKSwapsEvolution} shows that running CD-PSI($1$) on top of the solutions obtained by MIO leads to important gains in terms of better objective values, for most of the cases. This also confirms our intuition that  MIO can lead to solutions that are not available via PSI(1). From the plot of true positives and false positives, we can see that CD-PSI($1$) improves the solution by increasing the true positives whereas MIO improves the solution by removing false positives. This observation confirms the behavior we noticed in Figure 6, where PSI(1) minima were successful in obtaining a good number of true positives, but suffered in terms of false positives.

\end{appendix}


\end{document}